\newcommand{\inlshort}{INL -- International Iberian Nanotechnology Laboratory, Braga, Portugal} 
\newcommand{\ulmshort}{Institut für Theoretische Physik, Universität Ulm, Ulm, Germany}
\newcommand{\southcut}{School of Mathematics, South China University of Technology, GuangZhou 510640, China} 
\newcommand{\cfumshort}{Centro de F\'{i}sica, Universidade do Minho, Braga, Portugal} 
 \newcommand{\hdu}{School of Science, Hangzhou Dianzi University, Hangzhou 310018, China} 
\tikzset{snake it/.style={decorate, decoration=snake}}
\theoremstyle{plain}
\newtheorem*{main theorem}{Main Theorem}
\newtheorem{theorem}{Theorem}[section]
\newtheorem{corollary}[theorem]{Corollary}
\newtheorem{lemma}[theorem]{Lemma}
\newtheorem{proposition}[theorem]{Proposition}
\newtheorem{example*}[theorem]{Example*}
\newtheorem{examples*}[theorem]{Examples*}
\newtheorem{remark*}[theorem]{Remark*}
\newtheorem*{search problem}{Search Problem}
\newtheorem{definition}[theorem]{Definition}
\definecolor{selectiveyellow}{rgb}{0.8, 0.0, 0.8}
\begin{document}

\title{Multi-state imaginarity and coherence in qubit systems}

\author{Mao-Sheng Li}
\affiliation{\southcut}

\author{Rafael Wagner}
\email{rafael.wagner@uni-ulm.de}
\affiliation{\ulmshort}
\affiliation{\inlshort}
\affiliation{\cfumshort}
\author{Lin Zhang}
\affiliation{\hdu}

\date{\today}

\begin{abstract}
    Traditionally, the characterization of quantum resources has focused on individual quantum states. Recent literature, however, has increasingly explored the characterization of resources in \emph{multi}-states---ordered collections of states indexed by a varying parameter. In this work, we provide a unitary‑invariant framework to pinpoint imaginarity and coherence in sets of qubit states: we prove that Bloch vectors must be coplanar to be imaginarity‑free and colinear to be incoherent, yielding exact rank‑based tests of coherence and imaginarity, and closed‑form bounds for existing robustness quantifiers, all based on two-state overlaps only. We also show that the set of imaginarity‑free multi‑states is not convex, and that third-order invariants completely characterize multi-state imaginarity of single-qubits but not of higher-dimensional systems. As our main technical result, we show that \emph{every} Bargmann invariant of single-qubit states is determined (up to conjugation) by two‑state overlaps. Beyond qubits, we give purity and system‑agnostic coherence witnesses from equality constraints on higher‑order invariants and connect our results to practical protocols---characterization of partial distinguishability, spin‑chirality detection, and subchannel discrimination. 
\end{abstract}

\maketitle

\section{Introduction}

Quantum resources~\cite{chitambar2019quantum,gour2024resourcesquantumworld,coecke2016mathematical} are typically defined and characterized with respect to individual quantum states or quantum channels. While this has been a successful approach, it arguably overlooks many well-known phenomena that call for a concrete understanding of the \emph{collective} properties of quantum states. Numerous examples in the literature, spanning a wide range of applications in quantum information science, support this perspective. 

One example is the collective behavior of a set of single photons entering a linear-optical multi-port interferometer~\cite{shchesnovich2015partial,shchesnovich2018collective,jones2020multiparticle,jones2023distinguishability}: In this case, the \emph{indistinguishability} of the states encoding each photon's internal degrees of freedom decisively impacts the output‑port statistics of the interferometer. Another example is provided by \emph{quantum pseudo-randomness}, which is a property defined relative to a parameterized set of quantum states $\{\rho_i\}_i$~\cite{ji2018pseudorandom,kretschmer2021quantumpseudo,bansal2025pseudorandom}. One can then investigate the resources present in said collections of states, and infer intriguing aspects about the relationship between pseudorandomness and the amount of quantum resources present in such sets~\cite{gu2024pseudomagic,haug2025pseudorandom,tanggara2025neartermpseudorandompseudoresourcequantum,aaronson2023quantumpseudoentanglement}. Yet another example, from the perspective of quantum channels, is the notion of \emph{non-Markovianity}~\cite{rivas2014quantum}. This is best understood as a property depending on a fine-grained, time-dependent evolution (i.e., of a collection of maps $\{\mathcal{E}_t\}_t$) rather than a coarse-grained description by a single quantum map.

This plurality of examples is motivating the study of quantum resources defined not by individual elements but by \emph{sets} or \emph{collections} of quantum objects~\cite{designolle2021set,miyazaki2022imaginarityfree,salazar2022resource,buscemi2020complete,uola2019quantifying,martins2020quantum,ducuara2020multiobject,selby2023contextuality,wagner2024inequalities,galvao2020quantum,zhang2025quantifierswitnessesnonclassicalitymeasurements,zhang2025reassessingboundaryclassicalnonclassical,gour2018Incompatibility}. Of particular relevance to our work are the collective quantum resources described by \emph{set coherence}, as introduced by Designolle et al.~\cite{designolle2021set} (previously investigated by Refs.~\cite{horodecki2007quantification,galvao2020quantum,piani2014quantumness,fuchs2003squeezing,horodecki2006quantumness} under different names) and \emph{set imaginarity}, as introduced by Miyazaki and Matsumoto~\cite{miyazaki2022imaginarityfree}. 

Quantum coherence~\cite{baumgratz2014quantifying,streltsov2017colloquium} is the paradigmatic quantum resource---essential (though not sufficient) for computational~\cite{ahnefeld2022coherence,naseri2022entanglement} or informational~\cite{ahnefeld2025coherenceresourcephaseestimation} advantage. A related---but less explored---resource is quantum imaginarity~\cite{gour2017quantum,hickey2018quantifying,wu2021operational,wu2021resource,wu2023resource}, which is a type of coherence requiring non-null off-diagonal imaginary terms. Imaginarity has been shown to yield communication advantages~\cite{elliott2025strictadvantagecomplexquantum}, and to be connected with hiding and masking quantum information~\cite{zhu2021hidingmasking}, quantum pseudorandom states and unitaries~\cite{haug2025pseudorandom}, enhanced sensing via weak measurements~\cite{kedem2012usingtechnical,dixon2009ultrasensitive,hosten2008observation,brunner2010measuringsmall}, work extraction through complex‑valued quasiprobabilities~\cite{gherardini2024quasiprobabilities,hernandezGomez2024interferometry}, and scrambling applied to quantum machine learning~\cite{sajjan2023imaginary}.

Moving to such a collective description naturally introduces a basis-independent perspective, since one considers the resource not of a single quantum state in a fixed basis, but of the overall structure and behavior of the ensemble. As multivariate traces are known to capture all unitary-invariant properties of a collection of operators, because they form a complete list of generating invariants~\cite{wigderson2019mathematics,popescu2009unitaryinvariants,chien2016characterization,oszmaniec2024measuring} one is naturally led---when considering density matrices as the relevant operators---to view \emph{Bargmann invariants}~\cite{bargmann1964note} as relevant tools. 

Bargmann invariants are multivariate traces of states  $\mathrm{Tr}(\rho_1\cdots \rho_n)$ that can be experimentally measured in various ways~\cite{wagner2024quantumcircuits,oszmaniec2024measuring,halpern2018quasiprobability,quek2024multivariatetrace,pont2022quantifying,simonov2025estimationmultivariatetracesstates}. These have recently been connected with 
Kirkwood--Dirac (KD) quasiprobability distributions~\cite{kirkwood1933quantum,dirac1945analogy,wagner2024quantumcircuits,arvidssonshukur2024properties,schmid2024kirkwood,liu2025boundarykirkwooddiracquasiprobability}, out-of-time-order correlators~\cite{yunger2018quasiprobability,gonzalez2019out,wagner2024quantumcircuits}, weak values~\cite{wagner2023anomalous,hofmann2012complex}, quantum speed limits~\cite{sagarsilvapratapsi2025quantumspeedlimits},  geometric phases~\cite{mukunda2001Bargmann,mukunda2003Bargmann,mukunda2003Wigner,avdoshkin2023extrinsic}, multi-photon indistinguishability~\cite{menssen2017distinguishability,jones2020multiparticle,minke2021characterizing,pont2022quantifying,rodari2024experimentalobservationcounterintuitivefeatures,rodari2024semideviceindependentcharacterizationmultiphoton,seron2023boson,giordani2021witnesses,giordani2020experimental,jones2023distinguishability,jones2023distinguishability,brod2019witnessing,annoni2025incoherentbehaviorpartiallydistinguishable}, overlap uncertainty relations~\cite{bong2018strong}, quantum thermodynamics~\cite{gherardini2024quasiprobabilities,lostaglio2022kirkwood,levy2020quasiprobability,hernandez2024projective,santini2023work,donati2024energetics}, and the certification of quantum resources~\cite{fernandes2024unitary,wagner2025unitary,wagner2024coherence,wagner2024inequalities,zhang2024local,giordani2023experimental}.

In this work, we exploit Gram matrix methods and Bargmann invariant theory to characterize the simplest nontrivial instances of basis‐independent imaginarity and coherence for collections of states: the single‐qubit case. Since we are interested in ordered collections of finite states, relative to a varying parameter, these are elegantly captured by the operational notion of a \emph{multi-state}~\cite{zhang2025reassessingboundaryclassicalnonclassical,gour2018Incompatibility} (see Fig.~\ref{fig: from_states_to_bargmanns}), i.e., a function $\varrho: I \to \mathcal{D}(\mathcal{H})$ which to every index $i \in I$ associates a quantum state $\varrho(i) \equiv \rho_i$ in some Hilbert space $\mathcal{H}$. The image of $\varrho$ is given by sets of states $\{\rho_i\}_{i\in I}$ and, moreover,  each such multi-state has an associated ordered set $\varrho =(\rho_1,\ldots,\rho_n)$ when $I = \{1,\ldots,n\} =: [n]$ for some fixed number $n \in \mathbb{N}$.

Our main contributions are the following:
\begin{itemize}
\item We derive necessary and sufficient conditions for single‐qubit multi-states $\varrho: [n]\to \mathcal{D}(\mathbb{C}^2)$ to be imaginarity‐free (see Theorem~\ref{theorem: Characterization_Imaginarity}) or incoherent (see Theorem~\ref{theorem: Characterization_Coherence}), expressed solely in terms of two‐state overlaps $\mathrm{Tr}(\rho_i\rho_i)$ of the states in the multi-states.  Our results formalize the following geometric intuition: a set of states is incoherent exactly when all Bloch vectors lie on a common line, and imaginarity‐free exactly when they lie in a common plane.

\item This simple criteria has a few immediate implications. To mention one, we show that a multi-state $\varrho:\{1,2,3\} \to \mathcal{D}(\mathbb{C}^2)$ has imaginarity if and only if (iff) the imaginary part of its associated third‐order Bargmann invariant is non-zero, i.e. $\mathrm{Im}[\mathrm{Tr}(\rho_1\rho_2\rho_3)] \neq 0$. This is, however, \emph{not} the case for multi-states relative to higher-dimensional systems. We construct a counterexample of a multi-state $\varrho: \{1,2,3\} \to \mathcal{D}(\mathbb{C}^4)$ for which $\mathrm{Im}[\mathrm{Tr}(\rho_1\rho_2\rho_3)] = 0$ but that, nevertheless, has imaginarity. This follows from the results in Refs.~\cite{fernandes2024unitary,wagner2024quantumcircuits} as the imaginarity of this multi-state is witnessed by the fact that $\mathrm{Im}[\mathrm{Tr}(\rho_1^2\rho_2\rho_3)] \neq 0$.

\item We furthermore use this characterization to show that the set of all imaginarity-free multi-states is not convex. The non-convexity of the incoherent multi-states had been previously showed by Designolle et al.~\cite{designolle2021set}.

\item Beyond providing a necessary and sufficient condition for single-qubit multi-state coherence, our main technical contribution is a more powerful result: Take any single-qubit multi-state $\varrho:[n] \to \mathcal{D}(\mathbb{C}^2)$ and write its associated Bargmann invariant as $$\Delta(\varrho) := \mathrm{Tr}(\rho_1 \cdots \rho_n) = \mathrm{Re}[\Delta(\varrho)]+ \mathrm{i} \mathrm{Im}[\Delta(\varrho)].$$ Then, both $\mathrm{Re}[\Delta(\varrho)]$ and $\vert \mathrm{Im}[\Delta(\varrho)]\vert$ are completely characterized by polynomials over the complete list of two-state overlaps $(\mathrm{Tr}(\rho_i\rho_j))_{i,j}$ (see Theorem~\ref{theorem:Bargmanns_dependence_overlaps}). In other words, the only non-trivial unitary-invariant information that is not already encoded in the overlaps is the \emph{sign} of the imaginary part of the invariant. This result is a generalization to every $n \in \mathbb{N}$ of the cases $n=3,4,$ and $5$ explored in Ref.~\cite{zhang2025geometrysets}. The polynomials mentioned above are constructed inductively. 

\item We express set‐coherence and set‐imaginarity \emph{quantifiers}---restricted to the single-qubit case---as functions of two‐state overlaps and show that imaginarity quantifiers in particular can be written as a semidefinite optimization problem. We then relate these quantifiers to our rank conditions, deriving explicit upper and lower bounds (see Theorem~\ref{theorem: Quantification_imaginarity} for the case of set imaginarity and Theorem~\ref{theorem: Quantification_Coherence} for the case of set coherence).

\item We also discuss simple extensions of our tools beyond the single-qubit case. Following Refs.~\cite{fernandes2024unitary,wagner2024inequalities}, we propose simple \emph{equality constraints} on Bargmann invariants that serve as operational witnesses for multi‐state coherence and imaginarity in systems of arbitrary dimension. 

\item We illustrate how our quantifiers yield operational advantages in sub-channel discrimination and discuss implications of our results for multi‑photon indistinguishability and spin chirality. These examples demonstrate the physical relevance of the resources we investigate.
\end{itemize}

\begin{figure}[t]
    \centering
    \includegraphics[width=0.85\linewidth]{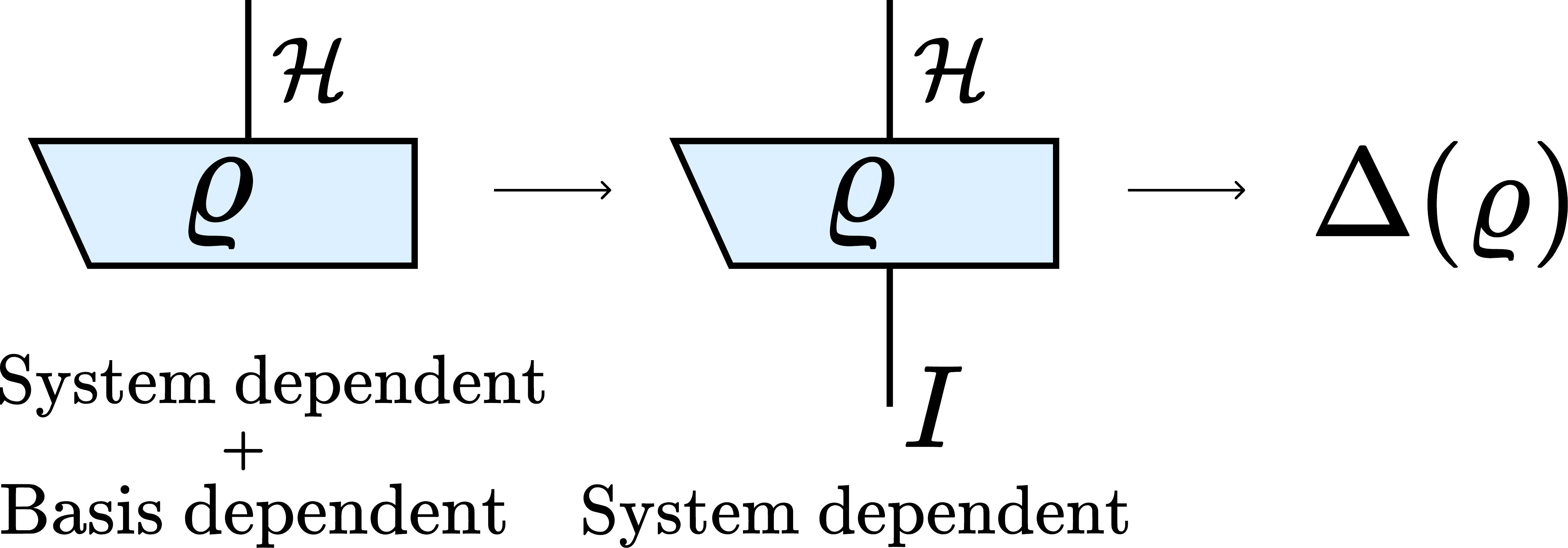}
    \caption{\textbf{States, multi-states, and Bargmann invariants.} (Left) A single quantum state $\varrho$, whose coherence is defined relative to a specific basis and system. (Middle) A multi-state $\varrho: I \to \mathcal{D}(\mathcal{H})$, which can be viewed operationally as a state preparator assigning a quantum state $\varrho(i)$ to each index $i \in I$. Its coherence is a collective, basis-independent property of the set $\{\varrho(i)\}_{i \in I}$. (Right) A Bargmann invariant $\Delta(\varrho) = \mathrm{Tr}\left[\prod_{i \in I} \varrho(i)\right]$ whose value is invariant under changes of basis. Its value can reflect the coherence present in \emph{every} multi-state that realizes it.}
    \label{fig: from_states_to_bargmanns}
\end{figure}

\textbf{Outline.} The remainder of this document is structured as follows. In Sec.~\ref{sect:Pre} we review the relevant background. Section~\ref{sec:imaginarity_results} characterizes single-qubit multi-state imaginarity, with Theorem~\ref{theorem: Characterization_Imaginarity} establishing a simple and intuitive rank-based criteria. Section~\ref{sec:singlequbit_coherence} characterizes single-qubit multi-state coherence via Theorem~\ref{theorem: Characterization_Coherence}. Section~\ref{sec:structure_Bargmann_invariants}  has our main technical contribution, where we provide a characterization of the real and imaginary part of \textit{all} Bargmann invariants of single-qubit states. Section~\ref{sec:applications} discusses the physical relevance of our results and possible applications. Section~\ref{sec:discussion} provides discussion and future directions.

\section{Background }\label{sect:Pre}

\subsection{Set coherence and set imaginarity}\label{sec:set_def}

We denote by $\mathcal{H}$ the Hilbert space associated to a quantum system of finite dimension $d = \dim(\mathcal{H})$. Moreover, we also denote by $\mathcal{D}(\mathcal{H})$ the set of all positive trace-one operators which we refer to as quantum states, and by $\mathcal{P}_1(\mathcal{H}) = \{\vert \psi \rangle \langle \psi \vert  \mid \langle \psi | \psi \rangle = 1 \}$ the extremal elements of $\mathcal{D}(\mathcal{H})$, i.e. the set of all pure quantum states. 

A \emph{multi-state} is a function $\varrho: I \to \mathcal{D}(\mathcal{H})$ from an index set $I$ to the set of density matrices of some Hilbert space~\cite{selby2023accessible,zhang2025quantifierswitnessesnonclassicalitymeasurements,zhang2025reassessingboundaryclassicalnonclassical,gour2018Incompatibility}. Operationally, we can view a multi-state as a `state preparator' or `preparation box' having  classical inputs chosen from an index set $I$ and to each $i$ it outputs a quantum state $\varrho(i) = \rho_i$. The image of a multi-state is a set of states $\varrho(I) = \{\rho_i\}_i$. If the index set is discrete (and finite) the multi-state corresponds to a (finite) sequence $\varrho = (\varrho(i))_{i \in I}$. For example, $\varrho: \{1,2,3\} \to \mathcal{D}(\mathcal{H})$ corresponds to  $\varrho = (\varrho(1),\varrho(2),\varrho(3))$. Note that since the tuple and the function are bijectively related, we use the same notation for both.

We start by introducing the notions of set coherence and set imaginarity, which are concepts defined relative to the image of a multi-state. Recall that a state $\rho \in \mathcal{H}$, with $d=\dim(\mathcal{H})<\infty$, is said to be incoherent with respect to an orthonormal basis $\mathbbm{A} = \{\ket{a_i}\}_{i=1}^d$ iff for every $i \neq j$ we have $\langle a_i \vert \rho \vert a_j \rangle =0$. Set coherence can then be defined as follows:

\begin{definition}[Set coherence]\label{def: set coherence}
   A \emph{set} of states $\{\rho_i\}_i \subseteq \mathcal{D}(\mathcal{H})$ is \emph{incoherent} if there exists a unitary $U:\mathcal{H} \to \mathcal{H}$ such that, for all $i$, $U \rho_i U^\dagger$ is a diagonal density matrix with respect to some orthonormal basis $\mathbbm{A}$. If this does not hold, we say that $\{\rho_i\}_i$ is \emph{coherent}.
\end{definition}

Alternatively, if we denote the set of all incoherent states as $$\mathcal{I}(\mathcal{H},\mathbbm{A}) := \{\rho \in \mathcal{D}(\mathcal{H}) \mid \forall i, j. \,\,i\neq j\Rightarrow  \,\langle a_i \vert \rho \vert a_j \rangle = 0\}$$ we can define set coherence as follows: a set $\{\rho_i\}_i$ is incoherent if there exists \emph{some} basis $\mathbbm{A}$ with respect to which $\{\rho_i\}_i \subseteq \mathcal{I}(\mathcal{H},\mathbbm{A})$ and coherent otherwise. 

While this definition is usually credited to Ref.~\cite{designolle2021set}, it is worth mentioning that there the authors consider set coherence as a notion defined with respect to finite \textit{tuples} (i.e., to multi-states whose index set is finite) as opposed to \textit{sets} (i.e., as opposed to the image of a multi-state). Moreover, it is obvious that a set $\{\rho_i\}_i$ is incoherent iff $[\rho_i,\rho_j] = 0$ for all $i,j$.

Similarly, we define the notion of set imaginarity lifting the definition of basis-dependent imaginarity of a single state~\cite{hickey2018quantifying,wu2021operational} to the  basis-independent definition of imaginarity for a set of states~\cite{miyazaki2022imaginarityfree}. Recall that a state $\rho \in \mathcal{D}(\mathcal{H})$, with $d=\mathrm{dim}(\mathcal{H})<\infty$, is imaginarity-free with respect to an orthonormal basis $\mathbbm{A} = \{\ket{a_i}\}_{i=1}^d$ iff for every $i,j$ we have that $\langle a_i \vert \rho \vert a_j \rangle \in \mathbb{R}$. Otherwise we say that $\rho$ has non-zero imaginarity, or that it is not imaginarity-free. With that, we can define set imaginarity as follows. 

\begin{definition}[Set imaginarity]\label{def: set imaginarity}
    A \emph{set} of states $\{\rho_i\}_i \subseteq \mathcal{D}(\mathcal{H})$ is  \emph{imaginarity-free} if there exists a unitary $U:\mathcal{H} \to \mathcal{H}$ such that, for all $i$, $U \rho_i U^\dagger$ is a real density matrix with respect to some basis $\mathbbm{A}$. If this does not hold, we then say that $\{\rho_i\}_i$ is \emph{not imaginarity-free}. 
\end{definition}

We say that a set of states that is not imaginarity-free \emph{has} imaginarity. If we denote the set of all imaginarity-free states with respect to the basis $\mathbbm{A}$ as $$\mathcal{R}(\mathcal{H},\mathbbm{A}):= \{\rho \in \mathcal{D}(\mathcal{H}) \mid \forall i,j, \langle a_i\vert \rho \vert a_j \rangle \in \mathbb{R}\}$$ we can equivalently define set imaginarity as follows: a set $\{\rho_i\}_{i\in I}$ is imaginarity-free if there exists \emph{some} basis $\mathbbm{A}$ with respect to which $\{\rho_i\}_i \subseteq \mathcal{R}(\mathcal{H},\mathbbm{A})$ and not imaginarity-free otherwise. 

In what follows, we will use the same terminology of set imaginarity and set coherence applied to the multi-states $\varrho$. We will then say that a multi-state $\varrho:I \to \mathcal{D}(\mathcal{H})$ has coherence (imaginarity) iff the image $\varrho(I)$ is set coherent (not imaginarity-free). 

The set of all incoherent multi-states $\varrho:I\to \mathcal{D}(\mathcal{H})$ will be denoted as
\begin{equation}
    \mathcal{I}_I(\mathcal{H}) = \{\varrho:I\to\mathcal{D}(\mathcal{H}) \mid \exists \mathbbm{A} \text{ s.t. }\varrho(I)\subseteq \mathcal{I}(\mathcal{H},\mathbbm{A})\}
\end{equation}
and similarly the set of all imaginarity-free multi-states 
\begin{equation}
    \mathcal{R}_I(\mathcal{H}) = \{\varrho:I\to\mathcal{D}(\mathcal{H}) \mid \exists \mathbbm{A} \text{ s.t. }\varrho(I)\subseteq \mathcal{R}(\mathcal{H},\mathbbm{A})\}.
\end{equation}
When $I = [n] \equiv \{1,\ldots,n\}$ we write $\mathcal{I}_n(\mathcal{H})$ and $\mathcal{R}_n(\mathcal{H})$ instead. 

We remark that this is a straightforward manner to lift basis-dependent definitions relative to a single state (in our case, coherence and imaginarity) to basis-independent definitions relative to multi-states. As another example, Refs.~\cite{wagner2025unitary,zamora2025prepareandmagicsemideviceindependentmagic} considered \emph{set magic}---whenever the image of a multi-state cannot be unitarily mapped to elements inside the stabilizer polytope---and characterized two-state overlap inequality witnesses of this property for generic multi-states. 

Changing from a single-state to a multi-state description allows us to straightforwardly `lift' basis-dependent notions into basis-independent ones. However, this approach still retains an inherent \emph{system dependence}---that is, a reference to the specific Hilbert space in which the states are defined. To go further, we can also `lift' this system dependence to a system (and basis) independent notion. We do so by drawing on the theory of Bargmann invariants and by framing the discussion in terms of quantum realization problems~\cite{fraser2023estimationtheoreticapproachquantum}, an approach which was also considered by Ref.~\cite{fernandes2024unitary}.

\subsection{Sets of Bargmann invariants}\label{sec:bargmann_background}

As mentioned in the introduction, Bargmann invariants play a crucial role in the characterization of basis-independent resources of multi-states, as they generate all unitary-invariant polynomial functions of tuples of states~\cite{oszmaniec2024measuring,chien2016characterization}. More formally, estimating sufficiently many Bargmann invariants provides a complete solution to a unitary equivalence decision problem of the following form: 

\begin{quote}
Given two multi-states $\varrho, \varsigma: [n] \to \mathcal{D}(\mathcal{H})$ is there a unitary $U:\mathcal{H} \to \mathcal{H}$ such that
$U \varrho U^\dagger = \varsigma$?
\end{quote}

Above, we have used the notation $U \varrho U^\dagger = (U\rho_1 U^\dagger,\ldots,U\rho_n U^\dagger)$. In finite dimensions ($d=\dim\mathcal H<\infty$), the existence of a unitary $U$ with $U\varrho U^\dagger=\varsigma$ is equivalent to the requirement that all Bargmann invariants agree.  Concretely, such a unitary $U$ exists iff for every $1 \leq m \leq d^2$ and every sequence $i_1, \ldots, i_m$ from $\{1,\ldots, n\}$ the corresponding Bargmann invariants \emph{coincide}:
\begin{equation}
    \mathrm{Tr}(\rho_{i_1}\ldots \rho_{i_m}) = \mathrm{Tr}(\sigma_{i_1} \ldots \sigma_{i_m}).
\end{equation}

It turns out, based on the findings of Refs.~\cite{oszmaniec2024measuring,wigderson2019mathematics}, that considering all the exponentially many Bargmann invariants mentioned above is often \emph{unnecessary}. Identifying minimal (or at least reduced) sets of invariants that suffice to solve these decision problems remains an active area of research. In fact, one of our results shows that in the case of single-qubit multi-state imaginarity, estimating a \emph{single} Bargmann invariant is sufficient (see Corollary~\ref{corollary:single_invariant_sufficient}). Moreover, Ref.~\cite{oszmaniec2024measuring} presents an algorithm for identifying a minimal set of necessary and sufficient Bargmann invariants, under the assumption that the multi-state consists exclusively of pure states.

Besides completely characterizing the unitary-invariant properties of multi-states of fixed Hilbert space dimension, Bargmann invariants can also be used to relax the requirement of system dependence, as we now discuss.

Let us assume that the index set of a multi-state is a finite set, so that $I \simeq \{1,\dots,n\} \equiv [n]$. In this case, we describe the set of all Bargmann invariants as an instance of a specific \emph{quantum realizability problem}~\cite{fraser2023estimationtheoreticapproachquantum}: 
\begin{quote}
Given a value $\Delta \in \mathbb{C}$, is there some multi-state $\varrho:[n] \to \mathcal{D}(\mathbb{C}^d)$ such that $\Delta = \mathrm{Tr}(\rho_1 \cdots \rho_n) \equiv \Delta(\varrho)$? 
\end{quote}
If the answer is yes, we say that $\Delta \in \mathfrak{B}_{n}^{(d)}$.  Succinctly, we define this set as 
\[
\mathfrak B_{n}^{(d)}  := \{\Delta \in \mathbb{C} \mid \exists \varrho \in \mathcal{D}(\mathbb{C}^d)^n \text{~s.~t.~}\Delta = \Delta({\varrho})\}.
\]

If for a certain $\Delta$ there exists $\varrho$ such that $\Delta = \Delta(\varrho)$ we say that $\Delta$ is \emph{realized by} $\varrho$ in the Hilbert space $\mathcal{H} = \mathbb{C}^d$. More generally, we say that a complex-number $\Delta \in \mathbb{C}$ is quantum realizable if there exists some multi-state $\varrho = (\rho_1,\dots,\rho_n)$ with respect to some Hilbert space $\mathcal{H}$ such that $\Delta = \Delta(\varrho)$. Refs.~\cite{fernandes2024unitary,wagner2024quantumcircuits} have pointed out the simple (yet useful) remark that $\mathrm{Im}[\mathrm{Tr}(\rho_1 \cdots \rho_n)] \neq 0$ implies that $\{\rho_i\}_i$ has imaginarity as by Def.~\ref{def: set imaginarity}.

An important subset of $\mathfrak B_{n}^{(d)}$ is the set of all values $\Delta$ realizable by multi-states of \emph{pure states}, that we denote as $\Psi: [n] \to \mathcal{P}_1(\mathbb{C}^d)$. Therefore, we define
\[
\mathfrak B_{n}^{(d)}\vert_{\mathrm{pure}}  := \{\Delta \in \mathbb{C} \mid \exists \Psi \in \mathcal{P}_1(\mathbb{C}^d)^n \text{ s. t. }\Delta = \Delta({\Psi})\}.
\]

These definitions lead to the following ascending chain of sets:
\[
\mathfrak{B}_{n}^{(2)} \subseteq \mathfrak{B}_{n}^{(3)} \subseteq \cdots \subseteq \mathfrak{B}_{n}^{(d)} \subseteq \cdots \subseteq \mathfrak{B}_{n}^{(n)} = \mathfrak{B}_n 
\]
where we define 
\[
\mathfrak{B}_n = \bigcup_{d=2}^{\infty} \mathfrak{B}_{n}^{(d)}.
\]

Ultimately, the set $\mathfrak{B}_n$ for a given fixed $n$ is the subset of complex values that Bargmann invariants can reach. On a series of recent results these sets have been completely characterized~\cite{fernandes2024unitary,xu2025numericalrangesbargmanninvariants,mao2025bargmann,zhang2025geometrysets,pratapsi2025elementarycharacterizationbargmanninvariants} and it has been showed that the above ascending chain of sets \emph{collapses}, i.e., that $$\mathfrak{B}_n = \mathfrak{B}_n^{(2)} = \mathfrak{B}_n\vert_{\mathrm{pure}} $$ for all $n \in \mathbb{N}$. Moreover, there is a \emph{structure theorem} for Bargmann invariants in $\mathfrak{B}_n$~\cite{pratapsi2025elementarycharacterizationbargmanninvariants}, which states that 
\begin{equation}
    \mathfrak{B}_n = \mathfrak{B}_n^{(3)}\vert_{\mathrm{circ}},
\end{equation}
namely, that to every Bargmann invariant $\Delta(\varrho) = \mathrm{Tr}(\rho_1 \cdots \rho_n)\in \mathfrak{B}_n$ there exists a \emph{pure} multi-state $\Psi = (\psi_1,\ldots,\psi_n) \in \mathcal{D}(\mathbb{C}^3)$ such that the associated Gram matrix $(G_{(\vert \psi_1\rangle,\ldots,\vert \psi_n\rangle)})_{ij} = \langle \psi_i\vert \psi_j\rangle $ is a \emph{circulant} matrix. 

\subsection{Sets of tuples of two-state overlaps}

There are other sets of Bargmann invariants that have been investigated in the literature. For example, Refs.~\cite{wagner2024inequalities,wagner2024quantumcircuits,wagner2025unitary,galvao2020quantum,giordani2020experimental,giordani2021witnesses,giordani2023experimental,wagner2024coherence} have also considered more structured sets of tuples of second-order Bargmann invariants. In fact, in this case, Refs.~\cite{wagner2024inequalities,galvao2020quantum} have introduced a graph-theoretic framework---connected to Kochen--Specker contextuality~\cite{budroni2021kochenspeckerreview}---where we can link the index sets of multi-states to vertices of a graph, and describe the quantum realizability problem relative to this graph. 

The formalism from Ref.~\cite{wagner2024inequalities} is built from the following simple idea. To organize which overlaps matter in a given experimentally relevant situation, it is convenient to label the states by a set and select only the pairs whose overlaps enter the constraints. These labels and pairs form the vertices and edges of a simple graph, where each edge carries the corresponding label associated to an overlap value.  We now make this precise.

Let $V = [n]$ and $E \subseteq \{\{i,j\} \mid i,j \in V\}$. We call a function $\mathsf{\Delta}:E \to [0,1]$ an \emph{edge-weighting} since, formally, the elements $e \in E$ can be viewed as edges in the simple graph defined by $G=(V,E)$. A multi-state $\varrho: V \to \mathcal{D}(\mathcal{H})$ is then a function returning a quantum state in $\mathcal{H}$ for each vertex $V$ of the graph $G$. This graph-theoretic view will not be necessary in what follows, so we can simply view $V$ as a specification of an index set and $E$ as the specification of which two-state overlaps are relevant.

We say that $\mathsf{\Delta}:E\to [0,1]$ is \emph{quantum realizable} in a Hilbert space $\mathcal{H}$ if there exists a multi-state $\varrho: V\to \mathcal{D}(\mathcal{H})$ such that $$\mathsf\Delta(\{i,j\}) = \Delta_{ij} = \mathrm{Tr}(\varrho(i)\varrho(j))$$ for every $\{i,j\} \in E$. Note that, from this perspective, $\Delta_{ij}$ are merely scalars with no reference (\textit{a priori}) to quantum states, while $\mathrm{Tr}(\varrho(i)\varrho(j))$ is the overlap that reaches the value of that scalar for $\{i,j\}$. Whenever there exists such a multi-state, we write $\mathsf{\Delta}=\mathsf{\Delta}(\varrho)$. In this case, we have then a set of quantum correlations defined by such a quantum realization problem:

\begin{definition}[Adapted from Ref.~\cite{wagner2024inequalities}]
    Let $G=(V,E)$ be a simple graph. The set of quantum-realizable edge weightings is defined as
    \begin{equation}
        \mathfrak{Q}(G) := \{\mathsf{\Delta} \in [0,1]^{E} \mid \exists \varrho \mathrm{~s.~t.~} \mathsf{\Delta} = \mathsf{\Delta}(\varrho)\}
    \end{equation}
    where $\varrho: V \to \mathcal{D}(\mathcal{H})$ for some Hilbert space $\mathcal{H}$.
\end{definition}

In simple terms, $\mathfrak{Q}(G)$ is the set of all possible tuples of the form $(\mathrm{Tr}(\rho_i\rho_j))_{\{i,j\} \in E}$. If we let $E = \{\{1,2\},\{1,3\},\{2,3\}\}$---which is the edge-set of a 3-cycle graph---$\mathfrak{Q}(G)$ is equivalent to the set of all possible tuples 
$$\mathsf{\Delta}(\varrho) = (\mathrm{Tr}(\rho_1\rho_2),\mathrm{Tr}(\rho_1\rho_3),\mathrm{Tr}(\rho_2\rho_3)),$$
for all possible multi-states $\varrho:\{1,2,3\} \to \mathcal{D}(\mathcal{H})$ with respect to all possible Hilbert spaces $\mathcal{H}$.

Similarly to the case of the sets of Bargmann invariants $\mathfrak{B}_n$ we can also consider restrictions of these sets. The most relevant to us are the restrictions to pure states

\begin{equation}
    \mathfrak{Q}(G)|_{\mathrm{pure}} = \left\{\mathsf{\Delta} \mid \exists \Psi:V \to \mathcal{P}_1(\mathcal{H}) \text{~s.~t.~} \mathsf{\Delta} = \mathsf{\Delta}(\Psi)\right\},
\end{equation}
and the restriction to imaginarity-free states
\begin{equation}\label{eq: edge-weightings realizable with set real states}
\mathfrak{Q}(G)|_{\mathrm{real}} := \{\mathsf{\Delta}  \mid \exists \varrho: V \to \mathcal{R}(\mathcal{H},\mathbbm{A}) \text{~s.~t.~}\mathsf{\Delta} = \mathsf{\Delta}(\varrho)\}.
\end{equation}

Provided we have the function $\mathsf{\Delta}: E \to [0,1]$ and its notion of quantum realization, we can apply it to notions of quantum realizations of any function $f(\mathsf{\Delta})$. For example, if we let $$P(x_1,x_2,x_3) = x_1^2+x_2^2+x_3^2$$ we can talk about all values $P$ can take provided that $x_1 = \Delta_{12}(\varrho) \equiv \mathrm{Tr}(\rho_1\rho_2)$, $x_2 =\Delta_{13}(\varrho) \equiv \mathrm{Tr}(\rho_1\rho_3)$, $x_3 = \Delta_{23}(\varrho) \equiv \mathrm{Tr}(\rho_2\rho_3)$ in which case we write 
$$P(\mathsf{\Delta}(\varrho)) = (\mathrm{Tr}(\rho_1\rho_2))^2+(\mathrm{Tr}(\rho_1\rho_3))^2+(\mathrm{Tr}(\rho_2\rho_3))^2.$$

This notion will be relevant to us later on in Sec.~\ref{sec:singlequbit_coherence}, when stating and interpreting Theorem~\ref{theorem:Bargmanns_dependence_overlaps}.

\subsection{Gram matrix of Bloch vectors}\label{sec:bloch_rep}

Let $\rho \in \mathcal{D}(\mathbb{C}^2)$ be any single-qubit state, there exists a unique vector $\mathbf{r}=(x,y,z)\in\mathbb{R}^3$ such that $|| \mathbf{r}||^2=x^2+y^2+z^2\leq 1,$  and 
$$ \rho= \frac{\mathbb{1}+ \mathbf{r} \cdot \boldsymbol{\sigma}}{2}.$$
The vector $\mathbf{r}$ is known as the \emph{Bloch vector} of the state $\rho.$
Given an $n$-tuple of single qubit states $\varrho=(\rho_1,\rho_2,\ldots,\rho_n)\in \mathcal{D}(\mathbb{C}^2)^n,$ we can construct the $n \times n$ Gram matrix associated to the related tuple of Bloch vectors $\mathbf{r}_\varrho := (\mathbf{r}_1,\ldots,\mathbf{r}_n)$ given by 
$$G_{\mathbf{r}_\varrho}:=(\langle \mathbf{r}_k, \mathbf{r}_l\rangle)_{kl}.$$ 
The overlap between two single-qubit states $\rho_i,\rho_j \in \varrho$ is a function of the scalar product between the two Bloch vectors
\begin{equation}\label{eq:inner_Bargmann}
    \mathrm{Tr}[\rho_i\rho_j]=\mathrm{Tr}\left[\frac{(\mathbb{1}+ \mathbf{r}_i \cdot \boldsymbol{\sigma})}{2} \frac{(\mathbb{1}+ \mathbf{r}_j \cdot \boldsymbol{\sigma})}{2}\right]=\frac{1+\langle \mathbf{r}_i, \mathbf{r}_j\rangle   }{2}.
\end{equation} 
So 
\begin{equation}\label{eq:inner_product_equal_overlaps}
\langle \mathbf{r}_i, \mathbf{r}_j\rangle =2 \mathrm{Tr}[\rho_i\rho_j]-1,
\end{equation}
which implies that each entry of the Gram matrix $G_{\mathbf{r}_\varrho}$ is in one-to-one correspondence with some Bargmann invariant of order 2. 

We can make this construction more general~\cite{kimura2003bloch,bertlmann2008blochvectors}, given an orthogonal (with respect to the Hilbert-Schmidt inner product) basis of $d\times d$ matrices, e.g. the set of (generalized) Gell--Mann matrices together with the identity, where $U_0= \mathbb{1}_d $, $U_i$ are traceless and Hermitian matrices satisfying $\mathrm{Tr}[U_i U_j]=d\delta_{i,j}.$

Each density matrix $\rho \in \mathcal{D}(\mathbb{C}^d)$ can be written as 
$$ \rho= \frac{1}{d}\left(\mathbb{1}_d  +\sum_{j=1}^{d^2-1} r_j U_j\right) $$
where $r_j=\tr[\rho U_j]$. If we are given a multi-state $\varrho=(\rho_1,\cdots, \rho_n) \in \mathcal{D}(\mathbb{C}^d)$, each $\varrho(i) = \rho_i$ corresponds to an $d^2-1$  dimensional real vector $\mathbf{r}_i$. Given the related tuple of associated generalized Bloch vectors $\mathbf{r}_\varrho$ we define the $n \times n$ Gram matrix 

$$G_{\mathbf{r}_\varrho}:=(\langle \mathbf{r}_k, \mathbf{r}_l\rangle)_{k,l} $$
as before, where the vectors $\mathbf{r}_i$ lie in a $d^2-1$ dimensional real vector space. Similarly to Eq.~\eqref{eq:inner_product_equal_overlaps}, we find that
\begin{equation}\label{eq:inner_product_gen_bloch_vectors_overlap}
    \langle \mathbf{r}_i,\mathbf{r}_j\rangle = d\,\mathrm{Tr}(\rho_i\rho_i)-1.
\end{equation}

\section{Characterization of single-qubit multi-state imaginarity}\label{sec:imaginarity_results}

\subsection{Rank-based criteria}

We start by considering the quantum imaginarity of a finite set of single-qubit quantum states. Ref.~\cite{fernandes2024unitary} has shown that assuming \emph{purity} it is possible to witness imaginarity using only two-state overlaps. Here we show that we can trade the assumption of purity by that of Hilbert space dimension.
	 
Given an $n$-tuple of qubit states $\varrho=(\rho_1,\rho_2,\ldots,\rho_n)\in \mathcal{D}(\mathbb{C}^2)^n,$ let $\mathbf{r}_k=(x_k,y_k,z_k)\in \mathbb{R}^3$ be the Bloch vector of $\rho_k \in \varrho([n])$ for any $k \in [n]$. Our goal is to show that the image of $\varrho$ is imaginarity-free \emph{iff} the Gram matrix of Bloch vectors 
$$(G_{\mathbf{r}_\varrho})_{kl}:=\langle \mathbf{r}_k, \mathbf{r}_l\rangle$$
is at most \textit{rank two}.

From the point of view of the Bloch sphere representation, our goal is motivated by a simple geometric intuition illustrated in Fig.~\ref{fig:Theorem_1_Bloch_sphere}. Any set of states lying entirely within the $X$–$Z$ plane of the Bloch sphere is clearly real-represented with respect to the canonical basis $\mathbbm{A}_{\mathrm{st}} =\{\vert 0\rangle,\vert 1\rangle\}$. Due to unitary invariance, any set of states lying within a plane formed by a great circle of the Bloch sphere should be imaginarity-free, since one can unitarily rotate that plane---along with all the states it contains---into the $X$–$Z$ plane. This geometric intuition is so compelling that one is naturally led to conjecture that this is \emph{the only way} a set of single-qubit states can be imaginarity-free. In what follows, we rigorously prove that this intuition is indeed correct.

\begin{figure}[t]
    \centering
    \includegraphics[width=\columnwidth]{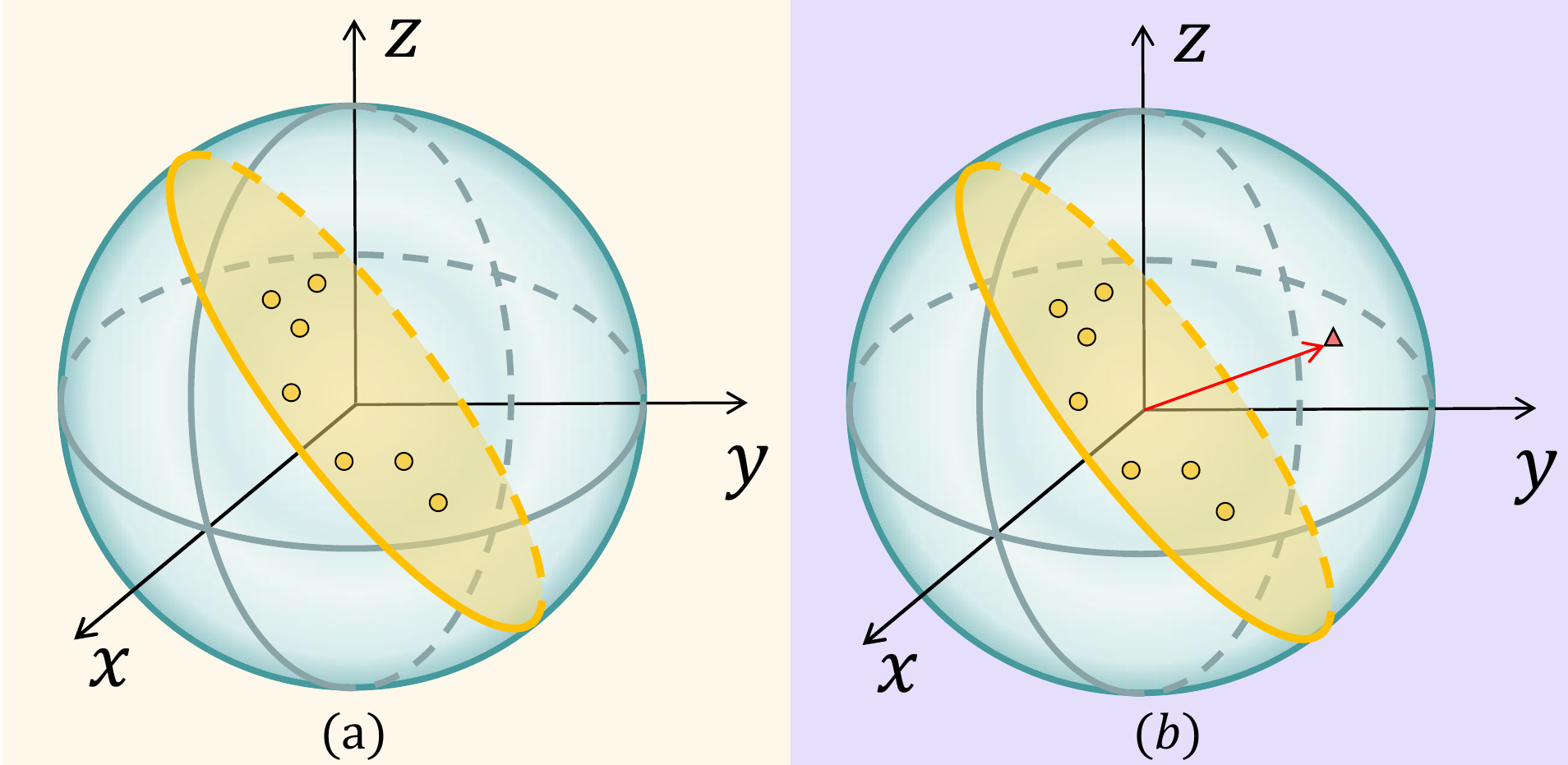}
    \caption{\textbf{Geometric interpretation of single-qubit multi-state imaginarity.} (a) Any set of states lying within a plane defined by a great circle of the Bloch sphere is imaginarity-free. (b) If, for every possible such choice of plane, there exists at least one state (represented by a red triangle) that never lies within it, the set of states has imaginarity. }
    \label{fig:Theorem_1_Bloch_sphere}
\end{figure}

We start by showing the simple direction:
\begin{lemma}\label{lemma: real implies rank less than two}
    If $\varrho:[n] \to \mathcal{\mathcal{D}}(\mathbb{C}^2)^n$ is an imaginarity-free single-qubit multi-state, i.e. $\varrho \in \mathcal{R}_n(\mathbb{C}^2)$, then $\mathrm{rank}(G_\varrho)\leq 2$.
\end{lemma}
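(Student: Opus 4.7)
The plan is to translate the definition of imaginarity-freeness into a geometric statement about Bloch vectors, and then use the fact that the rank of a Gram matrix equals the dimension of the span of the vectors that generate it.

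First, I would unpack Definition~\ref{def: set imaginarity}. Since $\varrho \in \mathcal{R}_n(\mathbb{C}^2)$, there exists a unitary $U$ on $\mathbb{C}^2$ and a basis $\mathbbm{A}$ such that $U\rho_k U^\dagger$ is a real matrix in that basis for every $k \in [n]$. Up to a further basis change (which is itself an overall unitary that we absorb into $U$), I may assume without loss of generality that $\mathbbm{A}$ is the standard basis. Writing a generic single-qubit state via its Bloch vector, namely $\rho=\tfrac{1}{2}(\mathbb{1}+\mathbf{r}\cdot\boldsymbol{\sigma})$ with $\mathbf{r}=(x,y,z)$, a direct inspection shows that $\rho$ has all matrix entries real in the standard basis if and only if $y=0$, i.e., iff $\mathbf{r}$ lies in the $X$–$Z$ plane of $\mathbb{R}^3$.

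Next, I would invoke the standard homomorphism $SU(2)\to SO(3)$: conjugation by the qubit unitary $U$ acts on Bloch vectors as a rotation $R_U\in SO(3)$. Hence, letting $\mathbf{r}_k$ denote the Bloch vector of $\rho_k$, the previous step gives $R_U\mathbf{r}_k\in\mathrm{span}\{\mathbf{e}_x,\mathbf{e}_z\}$ for every $k\in[n]$. Thus the \emph{rotated} Bloch vectors all lie in a common two-dimensional real subspace of $\mathbb{R}^3$.

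Finally, I would conclude using the elementary fact that the rank of a Gram matrix equals the dimension of the linear span of the vectors from which it is built. Because $R_U$ is an orthogonal transformation, $\langle R_U\mathbf{r}_k, R_U\mathbf{r}_l\rangle = \langle\mathbf{r}_k,\mathbf{r}_l\rangle$ for all $k,l$, so the Gram matrix is unchanged by applying $R_U$:
\begin{equation*}
G_{\mathbf{r}_\varrho} = \bigl(\langle R_U\mathbf{r}_k, R_U\mathbf{r}_l\rangle\bigr)_{k,l}.
\end{equation*}
Since the vectors $\{R_U\mathbf{r}_k\}_{k\in[n]}$ span a subspace of dimension at most $2$, we obtain $\mathrm{rank}(G_{\mathbf{r}_\varrho})\le 2$, as desired.

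There is no real obstacle here: the only subtlety is the step identifying ``real density matrix in a basis'' with ``Bloch vector in the $X$–$Z$ plane after an $SO(3)$ rotation,'' and this is immediate once one writes $\rho$ in the Pauli basis. The converse direction (rank at most two implies imaginarity-freeness) is the substantive content of Theorem~\ref{theorem: Characterization_Imaginarity}; the present lemma only addresses the easy direction and so the argument is quite short.
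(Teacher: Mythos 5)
Your proposal is correct and follows essentially the same route as the paper's proof: identify imaginarity-freeness with all Bloch vectors lying in the $X$--$Z$ plane after an $\mathrm{SO}(3)$ rotation induced by the unitary, note that this rotation preserves the Gram matrix, and conclude via the rank--span correspondence. The only cosmetic difference is that the paper derives the orthogonality of the induced map $\Phi_U$ directly from the invariance of two-state overlaps under conjugation, whereas you cite the standard $\mathrm{SU}(2)\to\mathrm{SO}(3)$ homomorphism; the content is identical.
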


\begin{proof}
For each $U\in \mathrm{SU}(2)$, we define the action of $U$ on the multi-state $\varrho$ as 
$$U\varrho U^\dagger:=(U\rho_1U^\dagger,U\rho_2U^\dagger,\ldots, U\rho_nU^\dagger ).$$  The Gram matrix of Bloch vectors is invariant under this map, i.e., for all $U \in \mathrm{SU}(2)$ we have that 
\begin{equation}
    G_{\mathbf{r}_\varrho} = G_{\mathbf{r}_{U \varrho U^\dagger}}.
\end{equation}
This follows easily from the fact that, to every fixed choice of unitary $U$, the map $\rho\mapsto U\rho U^\dagger $ induces an orthogonal (i.e., inner-product preserving) map $\Phi_U:\mathbb{R}^3 \to \mathbb{R}^3$ given by $\mathbf{r} \mapsto  \mathbf{r}_U$ where $ \rho= \frac{\mathbb{1}+ \mathbf{r} \cdot \boldsymbol{\sigma}}{2}$ and  $ U\rho U^\dagger= \frac{\mathbb{1}+ \mathbf{r}_U \cdot \boldsymbol{\sigma}}{2}.$ To see this,  let $ \tau= \frac{\mathbb{1}+ \mathbf{s} \cdot \boldsymbol{\sigma}}{2}$, we have Eq. \eqref{eq:inner_Bargmann}  
and 
\begin{align*}
\mathrm{Tr}[U\rho U^\dagger U\tau U^\dagger]&=\mathrm{Tr}\left[\frac{(\mathbb{1}+ \mathbf{r}_U \cdot \boldsymbol{\sigma})}{2} \frac{(\mathbb{1}+ \mathbf{s}_U \cdot \boldsymbol{\sigma})}{2}\right]\\
&=\frac{1+\langle \mathbf{r}_U, \mathbf{s}_U\rangle   }{2}.
\end{align*}
 As $\mathrm{Tr}[U\rho U^\dagger U\tau U^\dagger]=\mathrm{Tr}[U\rho\tau U^\dagger]=\mathrm{Tr}[\rho\tau]$,  we conclude 
 $$ \frac{1+\langle \mathbf{r}, \mathbf{s}\rangle   }{2}= \mathrm{Tr}[\rho\tau]= \mathrm{Tr}[U\rho U^\dagger U\tau U^\dagger]=\frac{1+\langle \mathbf{r}_U, \mathbf{s}_U\rangle   }{2}.$$
 Therefore, 
$$\langle \mathbf{r}, \mathbf{s}\rangle =\langle \mathbf{r}_U, \mathbf{s}_U\rangle =\langle \Phi_U(\mathbf{r}),\Phi_U(\mathbf{s})  \rangle.$$

We conclude the argument acknowledging that if $\varrho=(\rho_1,\rho_2,\ldots,\rho_n)$ is imaginarity free, there exists some basis $\mathbbm{A}$ with respect to which all the density matrices are real-only. This implies that there exists a unitary $U\in \mathrm{U}(2)$---describing the change of basis from $\mathbbm{A}$ to the standard basis $\mathbbm{A}_{\mathrm{st}} = \{\vert 0\rangle, \vert 1\rangle\}$---such that each element in $U\varrho U^\dagger$ lie within the great-circle plane of the Bloch sphere given by the $X-Z$ plane, from which we have that $\mathrm{rank}(G_{\mathbf{r}_{U\varrho U^\dagger}})\leq 2$. From unitary-invariance of this Gram matrix $G_{\mathbf{r}_\varrho}=G_{\mathbf{r}_{U\varrho U^\dagger}},$ we have 
$$\mathrm{rank}(G_{\mathbf{r}_\varrho})\leq 2.$$ This concludes the proof.
\end{proof}

Noting that for all $\varrho\in \mathcal{D}(\mathbb{C}^2)^n$ implies that $\mathrm{rank}(G_{\mathbf{r}_\varrho})\leq 3$ we obtain a \emph{sufficient} condition for $\varrho\in \mathcal{D}(\mathbb{C}^2)^n$ to have quantum imaginarity: $\mathrm{rank}(G_{\mathbf{r}_\varrho})= 3.$ In the following, we will show that this rank condition is also a necessary condition.    

First, recall that elements $U\in \mathrm{SU}(2)$ (i.e., $U$ is a $2\times 2$ unitary  with $\det[U]=1$), can be written as 
$$U=\left[\begin{array}{cc}
	\alpha & \beta\\[2mm]
	-\overline{\beta} & \overline{\alpha}
	\end{array}\right]
$$
where $\alpha=a+\mathrm{i} b, \beta =c+\mathrm{i} d \in \mathbb{C}$ and $a^2+b^2+c^2+d^2=|\alpha|^2+|\beta|^2=1.$ Therefore, this identifies elements of $\mathrm{SU}(2)$ with the 3-sphere $\mathbb{S}^3 = \{\mathbf{r}\in \mathbb{R}^4 \mid \Vert \mathbf{r} \Vert = 1\} \subset \mathbb{R}^4$.  With respect to this representation, the matrix form of $\Phi_U$ is given by
\begin{equation}\label{eq:matrix_PhiU}
{\footnotesize\left[
\begin{array}{cccc}
	a^2-b^2-c^2+d^2& 2(ab+cd)& 2(bd-ac)\\[2mm]
	2(cd-ab)& a^2-b^2+c^2-d^2& 2(bc+ad)\\[2mm]
	2(bd+ac)& 2(bc-ad)& a^2+b^2-c^2-d^2
\end{array}\right]}.
\end{equation}
Moreover, $\det[\Phi_U]=(a^2+b^2+c^2+d^2)^3=1.$ Therefore, $\Phi_U$ is in fact in $\mathrm{SO}(3).$ This is a well-known matrix construction, and $\Phi$ as defined describes the (surjective Lie group) homomorphism between $\mathrm{SU}(2)$ and $\mathrm{SO}(3)$. This result is known as the \emph{double cover lemma} since the mapping $U \stackrel{\Phi}{\mapsto} \Phi_U$ maps both $U(\alpha,\beta)$ and $U(-\alpha,-\beta)$ to the same element in $\mathrm{SO}(3)$, hence the `double'. 
  
\begin{lemma}[Double cover lemma, see   the Thm 2.6 of Ref.  \cite{Kosmann-Schwarzbach:2022}]\label{lemma:SU2toSO3}
   The map $\Phi:\mathrm{SU}(2) \to \mathrm{SO}(3)$ defined by $U \mapsto \Phi_U$, with $\Phi_U$ given by Eq.~\eqref{eq:matrix_PhiU} is a two-to-one surjective (Lie group) homomorphism. 
\end{lemma}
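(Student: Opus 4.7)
The plan is to establish four claims in the following order: (i) $\Phi_U \in \mathrm{SO}(3)$ for every $U \in \mathrm{SU}(2)$, so $\Phi$ is well-defined; (ii) $\Phi$ is a group homomorphism; (iii) $\ker \Phi = \{\pm \mathbb{1}\}$; and (iv) $\Phi$ is surjective. The ``two-to-one'' statement then follows at once, because the fiber $\Phi^{-1}(\Phi_U)$ is the coset $U \cdot \ker \Phi = \{+U, -U\}$.

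For (i), I would argue coordinate-free rather than wrestle directly with Eq.~\eqref{eq:matrix_PhiU}. The three-dimensional real vector space $V$ of traceless Hermitian $2\times 2$ matrices, equipped with the inner product $\langle X, Y\rangle := \tfrac{1}{2}\mathrm{Tr}(XY)$, admits $\{\sigma_1,\sigma_2,\sigma_3\}$ as an orthonormal basis. Conjugation $X \mapsto UXU^\dagger$ preserves trace, Hermiticity, and $\mathrm{Tr}(XY)$, so it restricts to an orthogonal map of $V$. Expanding $U\sigma_k U^\dagger$ in the Pauli basis with $\alpha = a+\mathrm{i}b$ and $\beta = c+\mathrm{i}d$ then reproduces Eq.~\eqref{eq:matrix_PhiU}; this is the only explicit computation I would carry out in detail. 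The sign of $\det \Phi_U$ is fixed by continuity: $\mathrm{SU}(2) \simeq \mathbb{S}^3$ is connected, $U \mapsto \det \Phi_U \in \{\pm 1\}$ is continuous, and $\Phi_{\mathbb{1}} = I$, so $\det \Phi_U = +1$ throughout.

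Claim (ii) I would derive in a single line from the intertwining identity
\begin{equation}
(\Phi_{UV}\mathbf{r})\cdot\boldsymbol{\sigma} = UV(\mathbf{r}\cdot\boldsymbol{\sigma})V^\dagger U^\dagger = (\Phi_U \Phi_V \mathbf{r})\cdot\boldsymbol{\sigma},
\end{equation}
combined with injectivity of the map $\mathbf{r}\mapsto \mathbf{r}\cdot\boldsymbol{\sigma}$. For claim (iii), $\Phi_U = I$ forces $U\sigma_k U^\dagger = \sigma_k$ for $k=1,2,3$; since $\{\mathbb{1},\sigma_1,\sigma_2,\sigma_3\}$ spans $\mathrm{Mat}_2(\mathbb{C})$, $U$ commutes with every $2\times 2$ matrix, so by Schur's lemma $U = \lambda \mathbb{1}$, and then $\det U = 1$ yields $\lambda^2 = 1$, i.e.\ $U = \pm\mathbb{1}$.

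The main obstacle I anticipate is (iv), since surjectivity cannot be read off directly from Eq.~\eqref{eq:matrix_PhiU}. My preferred, elementary route is to exhibit explicit preimages: a short verification from Eq.~\eqref{eq:matrix_PhiU} shows that $U_k(\theta) = \exp(-\mathrm{i}\theta \sigma_k/2)$ is sent by $\Phi$ to the rotation by angle $\theta$ about the $k$-th coordinate axis of $\mathbb{R}^3$; since such rotations generate $\mathrm{SO}(3)$ and $\Phi$ is a homomorphism by (ii), surjectivity follows. As a backup I would argue topologically: the differential $\mathrm{d}\Phi_{\mathbb{1}}:\mathfrak{su}(2)\to\mathfrak{so}(3)$ has trivial kernel by (iii), both Lie algebras have dimension three, so $\mathrm{d}\Phi_{\mathbb{1}}$ is an isomorphism; hence $\Phi$ is a local diffeomorphism near $\mathbb{1}$, its image is open, and compactness of $\mathrm{SU}(2)$ together with connectedness of $\mathrm{SO}(3)$ forces this image to exhaust $\mathrm{SO}(3)$.
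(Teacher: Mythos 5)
Your proof is correct and complete. Note that the paper itself does not prove this lemma: it states it with a citation to Ref.~\cite{Kosmann-Schwarzbach:2022}, and the only verification offered in the surrounding text is the direct computation $\det[\Phi_U]=(a^2+b^2+c^2+d^2)^3=1$ from the explicit matrix of Eq.~\eqref{eq:matrix_PhiU} (orthogonality of $\Phi_U$ is established separately, inside the proof of Lemma~\ref{lemma: real implies rank less than two}, via invariance of $\mathrm{Tr}(\rho\tau)$ under conjugation). Your argument is therefore a genuine addition rather than a rederivation. The structure is sound: well-definedness via the coordinate-free identification of $\mathbb{R}^3$ with traceless Hermitian matrices under $\tfrac{1}{2}\mathrm{Tr}(XY)$, the determinant sign fixed by connectedness of $\mathbb{S}^3$ rather than by brute-force expansion (a cleaner route than the paper's), the homomorphism property from the intertwining identity, the kernel computation via Schur's lemma, and surjectivity from the one-parameter subgroups $\exp(-\mathrm{i}\theta\sigma_k/2)$ mapping onto coordinate-axis rotations, which generate $\mathrm{SO}(3)$ through the Euler-angle decomposition. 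The two-to-one claim correctly reduces to the fibers being cosets of $\ker\Phi=\{\pm\mathbb{1}\}$. The only item you defer is the explicit check that expanding $U\sigma_kU^\dagger$ reproduces Eq.~\eqref{eq:matrix_PhiU} with the stated sign conventions; that is a routine but necessary computation to tie your coordinate-free map to the paper's concrete matrix, and your backup topological argument for surjectivity (local diffeomorphism at the identity plus openness, closedness, and connectedness) is also valid.
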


With this technical ingredient in hand, we are ready to show the following:

\begin{theorem}\label{theorem: Characterization_Imaginarity}
    A single-qubit multi-state $\varrho:[n] \to \mathcal{D}(\mathbb{C}^2)$ has  imaginarity, i.e. $\varrho \notin \mathcal{R}_n(\mathbb{C}^2)$, iff $\mathrm{rank}(G_{\mathbf{r}_\varrho})= 3$.
\end{theorem}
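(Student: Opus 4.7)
The plan is to combine Lemma~\ref{lemma: real implies rank less than two} with its converse: it remains to show that $\mathrm{rank}(G_{\mathbf{r}_\varrho}) \leq 2$ implies $\varrho \in \mathcal{R}_n(\mathbb{C}^2)$. Since Lemma~\ref{lemma: real implies rank less than two} already gives the implication in the other direction (being imaginarity-free forces the Gram matrix to have rank at most $2$), the only remaining work is to produce, from a rank constraint on $G_{\mathbf{r}_\varrho}$, an explicit unitary $U \in \mathrm{SU}(2)$ that simultaneously rotates all states into real form.

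First, I would observe that $\mathrm{rank}(G_{\mathbf{r}_\varrho}) \leq 2$ is equivalent to the Bloch vectors $\mathbf{r}_1,\ldots,\mathbf{r}_n$ spanning a real linear subspace $W \subseteq \mathbb{R}^3$ of dimension at most $2$. If this span is strictly smaller than two-dimensional (i.e., the Bloch vectors are colinear or all zero), I would simply enlarge $W$ by adding an auxiliary unit vector orthogonal to the span, so in all cases $W$ is a genuine 2-plane through the origin containing every $\mathbf{r}_k$.

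Next, I would invoke the transitive action of $\mathrm{SO}(3)$ on 2-planes through the origin to produce a rotation $R \in \mathrm{SO}(3)$ with $R(W) = \mathrm{span}\{\mathbf{e}_1,\mathbf{e}_3\}$ (the $X$–$Z$ plane). Concretely, pick an orthonormal basis $(\mathbf{w}_1,\mathbf{w}_2)$ of $W$, set $\mathbf{w}_3 := \mathbf{w}_1 \times \mathbf{w}_2$ so that $(\mathbf{w}_1,\mathbf{w}_2,\mathbf{w}_3)$ is a right-handed frame, and let $R$ be the unique rotation sending $\mathbf{w}_1 \mapsto \mathbf{e}_1$, $\mathbf{w}_2 \mapsto \mathbf{e}_3$, $\mathbf{w}_3 \mapsto -\mathbf{e}_2$; by construction $\det R = 1$ and $R(W)$ is the $X$–$Z$ plane.

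Finally, I would apply Lemma~\ref{lemma:SU2toSO3} to lift $R$ to some $U \in \mathrm{SU}(2)$ with $\Phi_U = R$. Then each rotated Bloch vector $\Phi_U(\mathbf{r}_k) = R\,\mathbf{r}_k$ has vanishing $Y$-component, so
\[
U \rho_k U^\dagger \;=\; \tfrac{1}{2}\bigl(\mathbb{1} + x_k'\,\sigma_x + z_k'\,\sigma_z\bigr),
\]
which is a real symmetric matrix in the computational basis $\{\ket{0},\ket{1}\}$ since $\sigma_x,\sigma_z$ have real entries while $\sigma_y$ has been eliminated. Hence every $\rho_k$ is real with respect to the basis obtained by applying $U^\dagger$ to $\{\ket{0},\ket{1}\}$, proving $\varrho \in \mathcal{R}_n(\mathbb{C}^2)$.

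I do not anticipate any serious technical obstacle: the heavy lifting has already been distributed between Lemma~\ref{lemma: real implies rank less than two} (the easy direction via unitary invariance of the Gram matrix) and Lemma~\ref{lemma:SU2toSO3} (the double-cover homomorphism). The only subtlety is to ensure that the constructed change-of-frame matrix is genuinely in $\mathrm{SO}(3)$ rather than $\mathrm{O}(3)$, which is why one uses the cross-product completion to fix orientation, but this is a routine linear-algebraic point. The geometric picture in Fig.~\ref{fig:Theorem_1_Bloch_sphere} then becomes a theorem: single-qubit multi-state imaginarity is precisely the failure of the Bloch vectors to be coplanar.
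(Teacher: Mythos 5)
Your proposal is correct and follows essentially the same route as the paper's proof: the easy direction via Lemma~\ref{lemma: real implies rank less than two}, then an explicit right-handed frame rotation in $\mathrm{SO}(3)$ lifted through the surjective double cover $\Phi$ of Lemma~\ref{lemma:SU2toSO3} to kill the $\sigma_y$ components. The only (harmless) cosmetic difference is that you merge the $\mathrm{rank}\leq 1$ and $\mathrm{rank}=2$ cases by enlarging the span to a full 2-plane, whereas the paper treats them separately.
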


\begin{proof}
From Lemma~\ref{lemma: real implies rank less than two} we have that $\varrho \notin \mathcal{R}_n(\mathbb{C}^2) \implies \mathrm{rank}(G_{\mathbf{r}_\varrho})= 3$. Note that, for every single-qubit multi-state $\varrho:[n] \to \mathcal{D}(\mathbb{C}^2)$ we have that $\mathrm{rank}(G_{\mathbf{r}_\varrho})\leq 3$. Therefore, we proceed to show that $\mathrm{rank}(G_{\mathbf{r}_\varrho}) \leq 2$ implies $\varrho \in \mathcal{R}_n(\mathbb{C}^2)$.

If $\mathrm{rank}(G_{\mathbf{r}_\varrho})=1,$ one has $\mathrm{span}_\mathbb{R}\{\mathbf{r}_k\}_{k=1}^n= 1$. Assume, without loss of generality, that $\mathbf{r}_1\neq \mathbf{0}$ and set a reference Bloch vector to be the normalized vector $\mathbf{r} := \frac{\mathbf{r}_1}{||\mathbf{r}_1||}$. Since the rank is 1,  for each $k$ there exists some real number $x_k\in \mathbb{R}$ such  that  $\mathbf{r}_k= x_k\mathbf{r}.$ Clearly, there exist some orthogonal rotation $R\in \mathrm{SO}(3)$ setting $\mathbf{r}$ to $\mathbf{e}_1=(1,0, 0)^T$, i.e., $R \mathbf{r} =\mathbf{e}_1$. As $\Phi$ is surjective, we have some $U\in \mathrm{SU}(2)$ such that $\Phi_U=R$. Therefore, $ \Phi_U \mathbf{r}_k= x_k \Phi_U \mathbf{r}=x_k R \mathbf{r}=x_k \mathbf{e}_1=(x_k,0,0)^T.$ Hence  $$  U\rho_kU^\dagger= \frac{\mathbb{1}+( \Phi_U \mathbf{r}_k )\cdot \boldsymbol{\sigma}}{2}=  \frac{\mathbb{1}+ r_k \sigma_1}{2}$$
is a real density matrix with respect to $\mathbbm{A}_{\mathrm{st}} = \{\vert 0 \rangle, \vert 1 \rangle\}$ for all $k=1,2,\ldots,n$. 

If  $\mathrm{rank}(G_{\mathbf{r}_\varrho})=2 ,$  one has $\mathrm{span}_\mathbb{R}\{\mathbf{r}_k\}_{k=1}^n= 2.$ There exist two orthogonal unit vectors $\mathbf{s},\mathbf{t}\in \mathbb{R}^3$ such that 
$$\mathrm{span}_\mathbb{R}\{\mathbf{r}_k\}_{k=1}^n= \mathrm{span}_\mathbb{R}\{ \mathbf{s}, \mathbf{t}\}.$$
Therefore, for each $\mathbf{r}_k$ there exists $x_k,z_k\in \mathbb{R}$ such that 
$$ \mathbf{r}_k=x_k \mathbf{s}+ z_k\mathbf{t}.$$ 
There are two unit vectors $ \pm \frac{\mathbf{s} \times \mathbf{t}}{||\mathbf{s} \times \mathbf{t}||}$ orthogonal to both $\mathbf{s}$ and $\mathbf{t}$. We now choose $\mathbf{u} \in \{ \frac{\mathbf{s} \times \mathbf{t}}{||\mathbf{s} \times \mathbf{t}||},- \frac{\mathbf{s} \times \mathbf{t}}{||\mathbf{s} \times \mathbf{t}||}\}$ such that $
	\{\mathbf{s}, \mathbf{u},\mathbf{t}\}$ forms a right-hand frame for $\mathbb{R}^3$, i.e. a spanning set of vectors such that $\mathbf{s} \times \mathbf{u} = \mathbf{t}$.  In this case, we can construct a matrix $R$ which rotates our frame to the standard orthonormal basis  $$R:=\mathbf{e}_1\mathbf{s}^T+\mathbf{e}_2\mathbf{u}^T+\mathbf{e}_3\mathbf{t}^T.$$ From the fact that $\{\mathbf{s}, \mathbf{u},\mathbf{t}\}$ forms a right-hand frame we have that $\det(R)=+1$ which, by construction, implies that $R \in \mathrm{SO}(3)$.

    By construction, $R \mathbf{s}=\mathbf{e}_1$ and $R \mathbf{t}=\mathbf{e}_3.$ So for each $k$, 
	$$R \mathbf{r}_k=R(x_k \mathbf{s}+ z_k\mathbf{t})=x_k(R\mathbf{s})+z_k(R\mathbf{t})=(x_k,0,z_k)^T.$$ As $\Phi$ is surjective, we have some $U\in \mathrm{SU}(2)$ such that $\Phi_U=R$. For this $U $ and each $k\in [n]$: 
	$$  U\rho_k U^\dagger=   \frac{\mathbb{1}+( R \mathbf{r}_k )\cdot \boldsymbol{\sigma}}{2} =\frac{\mathbb{1}+x_k\sigma_1+z_k \sigma_3}{2} $$ which is a real density matrix relative to $\mathbbm{A}_{\mathrm{st}}$ since $\sigma_1=\sigma_x$, $\sigma_3=\sigma_z$  are also.  Therefore, the image of $\varrho$ is imaginarity-free. 
\end{proof}

\subsection{A simple remark beyond single-qubit states}

Using the generalized Bloch vector representation of $d$-dimensional states, it is simple to generalize Lemma~\ref{lemma: real implies rank less than two}.

\begin{proposition} \label{proposition:PartialCharacterization}
		If $\varrho: [n] \to \mathcal{D}(\mathbb{C}^d)$ is an imaginarity-free multi-state, i.e. $\varrho \in \mathcal{R}_n(\mathbb{C}^d)$, then   $\mathrm{rank}(G_{\mathbf{r}_\varrho})\leq \frac{d(d+1)}{2}-1$, where $\mathbf{r}_\varrho$ are generalized Bloch vectors given by the generalized Gell--Mann matrices.
\end{proposition}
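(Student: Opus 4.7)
The plan is to mimic the proof of Lemma~\ref{lemma: real implies rank less than two} by exploiting unitary invariance of $G_{\mathbf{r}_\varrho}$ together with a careful count of how many linearly independent generalized Gell--Mann matrices are symmetric (``real'') versus antisymmetric (``purely imaginary'' Hermitian). First I would note that, just as in the qubit case, the Gram matrix $G_{\mathbf{r}_\varrho}$ is invariant under $\varrho \mapsto U\varrho U^\dagger$ for every $U\in \mathrm{U}(d)$. This is immediate from Eq.~\eqref{eq:inner_product_gen_bloch_vectors_overlap}, since the overlaps $\mathrm{Tr}(\rho_i\rho_j)$ are themselves unitary‐invariant. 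Hence, assuming $\varrho \in \mathcal{R}_n(\mathbb{C}^d)$, we may pick $U$ that maps the witnessing basis $\mathbbm{A}$ to the standard basis and replace $\varrho$ by $U\varrho U^\dagger$, so that without loss of generality every $\rho_i$ is a real symmetric density matrix relative to the standard basis.

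Next I would split the $d^2-1$ Hermitian generators $\{U_j\}$ into two families. The ``real'' ones are the $d-1$ real traceless diagonal generators together with the $\binom{d}{2}=\frac{d(d-1)}{2}$ symmetric off-diagonal generators of type $E_{jk}+E_{kj}$; this gives a total of
\begin{equation*}
(d-1) + \tfrac{d(d-1)}{2} = \tfrac{(d-1)(d+2)}{2} = \tfrac{d(d+1)}{2} - 1
\end{equation*}
real symmetric generators. The remaining $\frac{d(d-1)}{2}$ generators, of type $-\mathrm{i}(E_{jk}-E_{kj})$, are purely imaginary antisymmetric Hermitian matrices.

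Then I would observe that, for any real symmetric $\rho_i$ and any purely imaginary antisymmetric generator $U_j$, the coefficient $r_{i,j}=\mathrm{Tr}(\rho_i U_j)$ is simultaneously real (being a Bloch component of a Hermitian state) and purely imaginary (being the trace of a real symmetric matrix against a purely imaginary antisymmetric one), so $r_{i,j}=0$. Consequently every Bloch vector $\mathbf{r}_i$ sits in the $\bigl(\tfrac{d(d+1)}{2}-1\bigr)$-dimensional real subspace of $\mathbb{R}^{d^2-1}$ spanned by the coordinates associated with the real generators. Since $\mathrm{rank}(G_{\mathbf{r}_\varrho})=\dim \mathrm{span}_\mathbb{R}\{\mathbf{r}_i\}_{i=1}^n$, the advertised bound follows. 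The only slightly delicate point—the main obstacle, if any—is the careful bookkeeping of the real versus imaginary Gell--Mann split and the verification that unitary invariance of $G_{\mathbf{r}_\varrho}$ really does allow the reduction to the standard basis; everything else is dimension counting.
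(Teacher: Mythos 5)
Your proof is correct and follows essentially the same route as the paper's: reduce to real symmetric density matrices via unitary invariance of the Gram matrix, then observe that the traceless real symmetric matrices span only $\tfrac{d(d+1)}{2}-1$ dimensions. Your version simply makes the paper's dimension count explicit by splitting the Gell--Mann generators into symmetric and antisymmetric families and verifying that the antisymmetric components vanish.
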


\begin{proof}
    If $\varrho$ is imaginarity-free there exists $U$ such that $U\varrho U^\dagger$ are all real matrices relative to some basis $\mathbbm{A}$ for $\mathcal{H}$. The subspace of real symmetric $d \times d$ matrices has dimension $d(d+1)/2$. Subtracting the identity, we are left with $d(d+1)/2-1$ components of  generalized Bloch vectors spanning at most $d(d+1)/2-1$ dimensions. Therefore, $\mathrm{rank}(G_{ \mathbf{r}_{U \varrho U^\dagger}}) = \mathrm{rank}(G_{\mathbf{r}_\varrho}) \leq d(d+1)/2-1.$   
\end{proof}

Note that whenever $d=2$ we recover Lemma~\ref{lemma: real implies rank less than two} since in this case $\mathrm{rank}(G_{\mathbf{r}_\varrho})\leq\frac{2(2+1)}{2}-1 = 2$ and the generalized Bloch vectors reduce to the usual ones.  

We can, furthermore, comment on what prevents us to obtaining a result similar to Theorem~\ref{theorem: Characterization_Imaginarity}. As before, there exists a mapping $\Phi_d$ from $\mathrm{SU}(d)$ to $\mathrm{SO}(d^2-1).$ However, the dimension gap  
\begin{align*}
\mathrm{dim}_\mathbb{R}[\mathrm{SU}(d)]&=d^2-1<\frac{(d^2-1)(d^2-2)}{2}\\&=\mathrm{dim}_\mathbb{R}[\mathrm{SO}(d^2-1)]
\end{align*}
for all $d\geq3$ makes the map $\Phi_d$ fail to be surjective. This prevents this to be a sufficient condition as well. 

We can also provide an explicit counter-example. Let us consider $d=3$ and $\{\lambda_i\}_{i=1}^8$ to be the Gell--Mann matrices~\cite[Tab. I, pg. 8]{gellmann1962symmetries}. Note that, in this case, $\mathrm{Tr}(\lambda_i\lambda_j) = 2\delta_{ij}$ so that $\langle \mathbf{r}_i,\mathbf{r}_j\rangle = 2\mathrm{Tr}(\rho_i\rho_j)-1$. Take three qutrits given by 
\begin{align}
    \rho_1 =\frac{1}{3}(\mathbb{1}_{3}+\lambda_1),\, 
    \rho_2 =\frac{1}{3}(\mathbb{1}_{3}+\lambda_4),\,
    \rho_3 =\frac{1}{3}(\mathbb{1}_{3}+\lambda_7).
\end{align}
Above, $\lambda_1,\lambda_4$ are two symmetric matrices while $\lambda_7$ is an antisymmetric matrix. A simple calculation shows that $\mathrm{Tr}(\rho_i^2)=\sfrac{5}{9}$ and $\mathrm{Tr}(\rho_i\rho_j)=\sfrac{1}{3}$ for $i\neq j$. Therefore, $\mathrm{rank}(G_{\mathbf{r}_\varrho}) = 3 < 3(3+1)/2-1 = 5$. Nevertheless, the associated multi-state $\varrho = (\rho_1,\rho_2,\rho_3)$ has imaginarity since
\begin{equation}
    \Delta(\varrho) = \mathrm{Tr}(\rho_1\rho_2\rho_3) = \frac{1}{27}(3+\mathrm{Tr}(\lambda_1\lambda_4\lambda_7))=\frac{1}{27}(3+i),
\end{equation}
and therefore $\mathrm{Im}[\Delta(\varrho)] \neq 0.$

Moreover, it is trivial to see that every multi-state $\varrho:\{1,2\} \to \mathcal{D}(\mathcal{H})$ is imaginarity-free, for all possible Hilbert spaces $\mathcal{H}$. We state this as a lemma for future reference:

\begin{lemma}\label{lemma:every_multi_state_imaginarity_free}
    Every multi-state $\varrho: \{1,2\} \to \mathcal{D}(\mathcal{H})$ is imaginarity-free.
\end{lemma}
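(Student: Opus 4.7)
The plan is to produce, for any pair of density matrices $\rho_1, \rho_2$ on an arbitrary Hilbert space $\mathcal{H}$, a single unitary $U$ and an orthonormal basis $\mathbbm{A}$ of $\mathcal{H}$ such that both $U\rho_1 U^\dagger$ and $U\rho_2 U^\dagger$ have only real matrix entries with respect to $\mathbbm{A}$. I would first reformulate the goal in terms of antiunitary operators: a Hermitian $\rho$ has real entries in basis $\mathbbm{A}$ if and only if $\rho$ commutes with the antiunitary operator $K$ implementing complex conjugation in $\mathbbm{A}$ (and such $K$ satisfies $K^2 = \mathbb{1}$). The lemma is therefore equivalent to exhibiting a \emph{single} antiunitary involution $K:\mathcal{H}\to\mathcal{H}$ with $K\rho_i K^{-1} = \rho_i$ for both $i = 1, 2$; the orthonormal basis of vectors fixed by $K$ up to a phase then plays the role of $\mathbbm{A}$, and the associated change of basis provides $U$.

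To build $K$, I would first diagonalize $\rho_1 = \sum_k \lambda_k P_k$ and exploit the orthogonal decomposition $\mathcal{H} = \bigoplus_k V_k$ into distinct-eigenvalue subspaces. Since $K$ must commute with $\rho_1$, it has to preserve each $V_k$, so $K$ decomposes as $K = \bigoplus_k K_k$ for antiunitary involutions $K_k$ on each $V_k$. The diagonal blocks $P_k \rho_2 P_k$ are Hermitian on $V_k$, and any Hermitian operator admits a commuting antiunitary involution (for instance, complex conjugation in one of its eigenbases), so one can always realify these blocks. The remaining requirement is that $K$ also respect the off-diagonal blocks $P_k \rho_2 P_l$ of $\rho_2$.

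The main obstacle is this off-diagonal compatibility, which couples the choices of $K_k$ on different eigenspaces of $\rho_1$. To resolve it, I would invoke the invariant-theoretic criterion reviewed in Sec.~\ref{sec:bargmann_background}: two ordered tuples of density matrices are unitarily equivalent if and only if their complete lists of Bargmann invariants coincide. For any pair $(\rho_1, \rho_2)$ of Hermitian operators, every such invariant $\mathrm{Tr}(\rho_1^{a_1}\rho_2^{b_1}\cdots\rho_1^{a_m}\rho_2^{b_m})$ is automatically real, since complex conjugation reverses the order of factors and cyclicity of the trace restores it. It therefore suffices to construct a pair $(\rho_1', \rho_2')$ of real symmetric density matrices realizing the same (real) list of invariants---for instance, by taking $\rho_1'$ to be a real diagonal matrix with the eigenvalues of $\rho_1$ and choosing the entries of $\rho_2'$ consistently with those invariants---after which the unitary-equivalence criterion delivers the $U$ claimed by the lemma.
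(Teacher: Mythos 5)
Your reduction of the lemma to the existence of a single antiunitary involution $K$ commuting with both $\rho_1$ and $\rho_2$ is correct, and you correctly isolate the off-diagonal blocks $P_k\rho_2 P_l$ as the real obstacle. The step you use to dissolve that obstacle, however, is false: it is \emph{not} true that every word trace $\mathrm{Tr}(\rho_1^{a_1}\rho_2^{b_1}\cdots\rho_1^{a_m}\rho_2^{b_m})$ of two Hermitian operators is automatically real. Conjugating such a trace and using Hermiticity yields the trace of the \emph{reversed} word, and a word in two letters need not be cyclically equivalent to its reversal; the first failure occurs at length six, e.g.\ $\rho_1\rho_2\rho_1^2\rho_2^2$, whose reversal is cyclically $\rho_1\rho_2^2\rho_1^2\rho_2$, a different cyclic word. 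This is not a removable technicality: taking $\rho_1=\tfrac{1}{6}\,\mathrm{diag}(1,2,3)$ and $\rho_2=\tfrac{1}{3}(\mathbb{1}_3+\epsilon H)$ with $H$ Hermitian, zero diagonal, $H_{12}=H_{23}=1$, $H_{13}=\mathrm{i}$, and $\epsilon$ small, a direct computation gives $\mathrm{Im}\,\mathrm{Tr}(\rho_1\rho_2\rho_1^2\rho_2^2)\propto\epsilon^3\,\mathrm{Im}\,\mathrm{Tr}(AHA^2H^2)\neq 0$, so this two-element multi-state is \emph{not} imaginarity-free. Your own framework shows why: since $\rho_1$ is nondegenerate, any commuting antiunitary involution must act as $\mathrm{diag}(e^{\mathrm{i}\phi_k})\circ(\text{conjugation})$ in its eigenbasis, and consistency of the phase conditions $\phi_k-\phi_l=2\arg (\rho_2)_{kl}$ around the cycle $1\to2\to3\to1$ forces $\arg\bigl((\rho_2)_{12}(\rho_2)_{23}(\rho_2)_{31}\bigr)\in\{0,\pi\}$ — a genuine, unitarily invariant obstruction that generic mixed states violate.

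So the gap cannot be closed as stated; the lemma is only true under an additional hypothesis such as purity or $\dim\mathcal{H}=2$. For two \emph{pure} states one may rephase so that $\langle\psi_1|\psi_2\rangle\geq 0$, restrict to their (at most two-dimensional) span, and invoke Theorem~\ref{theorem: Characterization_Imaginarity}; for two qubit states the Gram matrix of Bloch vectors trivially has rank at most two. The paper itself offers no proof — it declares the statement trivial and, in the proof of Corollary~\ref{corollary:gap_qubits_real_nonreal}, appeals to the fact that ``two states span at most a two-dimensional space,'' which is precisely the pure-state argument and does not apply to mixed states in dimension $\geq 3$. Your more careful attempt, pushed to its conclusion, actually exposes that the lemma (and its downstream uses) requires this purity or dimension restriction rather than establishing the claim in full generality.
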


\subsection{Implications for Bargmann invariants realizable by imaginarity-free multi-states}

We can use Theorem~\ref{theorem: Characterization_Imaginarity} to show a few structural results about sets of tuples of Bargmann invariants $\mathfrak{Q}(G)^{(d)}$ when we restrict realizations to a certain Hilbert space dimension $d$. We start by showing that for any graph $G = (V,E)$ where $\vert V \vert =  n \geq 3$ not all two-state overlaps (provided that we restrict the Hilbert space dimension) can be reached by real-only single-qubit states. 

\begin{corollary}\label{corollary:gap_qubits_real_nonreal}
    Let $G=(V,E)$ be a simple graph of $n$ nodes. Then:
    \begin{enumerate}
    \item for $n \in \{1,2\}$, it holds that $\mathfrak{Q}(G)^{(d)} \vert_{\mathrm{real}} = \mathfrak{Q}(G)^{(d)}$ for all integers $d\geq 2$.
    \item for $n \geq 3$, it holds that $\mathfrak{Q}(G)^{(2)} \vert_{\mathrm{real}} \subsetneq \mathfrak{Q}(G)^{(2)}.$
    \item for $n\geq 3$, $d\geq 2$, and $\mathsf{\Delta}:E\to[0,1]$ it holds that $\mathsf{\Delta}(\varrho)\in \mathfrak{Q}(G)^{(d)}\vert_{\mathrm{real}}$ for all multi-states $\varrho:\{1,2\}\to \mathcal{D}(\mathcal{H})$. 
    \end{enumerate}
\end{corollary}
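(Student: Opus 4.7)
The plan is to dispatch items (1) and (3) by a direct appeal to Lemma~\ref{lemma:every_multi_state_imaginarity_free} and to concentrate the effort on item (2), where I would exhibit an explicit separating edge-weighting. For item (1), the $n=1$ case is vacuous since $E=\emptyset$, and the $n=2$ case reduces to the fact that any $\varrho:\{1,2\}\to\mathcal{D}(\mathbb{C}^d)$ is imaginarity-free: there is a unitary $U$ such that $U\varrho(1)U^\dagger$ and $U\varrho(2)U^\dagger$ are real with respect to a fixed basis, and the (at most one) edge overlap is unitarily invariant, so the real multi-state $U\varrho U^\dagger$ realizes the same edge-weighting. For item (3), given any $\varrho:\{1,2\}\to\mathcal{D}(\mathcal{H})$ the same lemma yields a unitary $U$ with $\{U\varrho(1)U^\dagger,\,U\varrho(2)U^\dagger\}\subseteq\mathcal{R}(\mathcal{H},\mathbbm{A})$ for some basis $\mathbbm{A}$; any assignment $f:V\to\{1,2\}$ then defines an $n$-vertex multi-state $i\mapsto\varrho(f(i))$ that uses only two distinct states, which after the same conjugation by $U$ lies in $\mathcal{R}(\mathcal{H},\mathbbm{A})$, so its edge-weighting belongs to $\mathfrak{Q}(G)^{(d)}\vert_{\mathrm{real}}$.

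For item (2), the plan is to use the qubit SIC configuration as an explicit witness of strict inclusion. Take four pure qubits whose Bloch vectors form a regular tetrahedron inscribed in the Bloch sphere, so that $\mathrm{Tr}(\rho_i\rho_j)=1/3$ for every $i\neq j$; by Theorem~\ref{theorem: Characterization_Imaginarity} this multi-state has imaginarity. To rule out every real qubit realization of the edge-weighting $\mathsf{\Delta}\equiv 1/3$ on $K_4$, I would show that no four vectors $\mathbf{s}_1,\ldots,\mathbf{s}_4\in\mathbb{R}^2$ with $\|\mathbf{s}_i\|\leq 1$ satisfy $\langle\mathbf{s}_i,\mathbf{s}_j\rangle=-1/3$ for all $i\neq j$. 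The key identity is
\begin{equation*}
0\leq\Bigl\|\sum_{i=1}^{4}\mathbf{s}_i\Bigr\|^{2}=\sum_{i=1}^{4}\|\mathbf{s}_i\|^2+2\sum_{1\leq i<j\leq 4}\langle\mathbf{s}_i,\mathbf{s}_j\rangle=\sum_{i=1}^{4}\|\mathbf{s}_i\|^2-4,
\end{equation*}
which together with $\|\mathbf{s}_i\|^2\leq 1$ forces $\|\mathbf{s}_i\|=1$ for each $i$ and $\sum_i\mathbf{s}_i=0$; a direct trigonometric check then excludes four coplanar unit vectors pairwise at angle $\arccos(-1/3)$. For a general simple graph $G=(V,E)$ with $n\geq 4$ containing $K_4$, one embeds this obstruction by placing the four SIC states on four chosen vertices of $G$ and extending arbitrarily over the remaining $n-4$ vertices. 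The residual case $n=3$ is treated by an analogous, more delicate application of Theorem~\ref{theorem: Characterization_Imaginarity}, identifying an overlap pattern whose Gram matrix of Bloch vectors must have rank $3$ for every admissible choice of purities.

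The main obstacle lies in the non-realizability step of item (2): because the multi-states are permitted to be mixed, I must exclude \emph{every} choice of Bloch-vector norms in $[0,1]$ rather than only the pure case addressed by a bare rank obstruction. The sum-squared computation above resolves this cleanly for the $K_4$ SIC example since the constraint $\|\mathbf{s}_i\|^2\leq 1$ enters directly into the contradiction; transferring the same flavor of argument to smaller or sparser graphs requires more careful use of Theorem~\ref{theorem: Characterization_Imaginarity} to certify that no rank-$\leq 2$ Gram matrix with diagonal in $[0,1]$ reproduces the chosen off-diagonal overlaps.
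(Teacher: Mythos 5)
Your handling of items (1) and (3) matches the paper's (both reduce to Lemma~\ref{lemma:every_multi_state_imaginarity_free}), and your item (3) is spelled out more carefully than the paper's one-line appeal. For item (2) you have correctly identified the step that the paper's own proof elides: since a simple graph carries no self-loops, the purities $\mathrm{Tr}(\rho_i^2)$ --- i.e.\ the diagonal of $G_{\mathbf{r}_\varrho}$ --- are not part of the edge-weighting, so exhibiting one multi-state with $\mathrm{rank}(G_{\mathbf{r}_\varrho})=3$ (which is all the paper does) does not exclude a different, rank-$\leq 2$ realization of the same off-diagonal overlaps with other purities. Your $K_4$ SIC argument closes this correctly whenever $G$ contains a $K_4$: the identity $0\leq\|\sum_i\mathbf{s}_i\|^2=\sum_i\|\mathbf{s}_i\|^2-4$ forces unit Bloch vectors, and the resulting Gram matrix has eigenvalues $\sfrac{4}{3}$ (thrice) and $0$, hence rank $3$, contradicting Theorem~\ref{theorem: Characterization_Imaginarity}.

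The genuine gap is the deferred $n=3$ case, and it cannot be filled as you plan: for $K_3$ no separating overlap pattern exists. Suppose $(c_{12},c_{13},c_{23})$, $c_{ij}=2\Delta_{ij}-1$, admits a diagonal completion $d^*\in[0,1]^3$ making $M(d)$ (diagonal $d$, off-diagonals $c_{ij}$) positive semidefinite, i.e.\ the weighting is qubit-realizable. If $M(d^*)$ is singular we already have a rank-$\leq 2$, hence imaginarity-free, realization. Otherwise $M(d^*)\succ 0$, and $\det M$ is affine in $d_1$ with positive slope $d_2^*d_3^*-c_{23}^2>0$; lowering $d_1$ from $d_1^*$, the determinant must hit $0$ at some $t\in[0,d_1^*)$ (it cannot remain positive at $d_1=0$, since a PSD matrix with vanishing $(1,1)$ entry forces $c_{12}=c_{13}=0$ and hence zero determinant), and there $M$ is a limit of positive-definite matrices, so PSD of rank $\leq 2$ with diagonal still in $[0,1]^3$. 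Thus $\mathfrak{Q}(K_3)^{(2)}\vert_{\mathrm{real}}=\mathfrak{Q}(K_3)^{(2)}$, so no ``more delicate'' choice of overlaps can work for $n=3$; the same objection defeats the paper's own one-line proof, and neither argument covers graphs on $n\geq 4$ vertices lacking a $K_4$ (e.g.\ trees or $C_4$), where equality also holds. The statement is rescued only if one either restricts to graphs containing $K_4$ or includes the purities (self-loop weights) in the data defining $\mathfrak{Q}(G)$ --- in which case the paper's rank argument, and yours, go through immediately.
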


\begin{proof}
The first and third part follow from Lemma~\ref{lemma:every_multi_state_imaginarity_free}, i.e. from the fact that two states in any finite-dimensional Hilbert space span at most a two-dimensional space, which is isometrically isomorphic to qubit space. 

As for the second part, from $\mathrm{Tr}[\rho_k \rho_l]=\frac{1+\langle \mathbf{r}_k, \mathbf{r}_l\rangle}{2}$ and Theorem~\ref{theorem: Characterization_Imaginarity} it suffices to show that for every $n \geq 3$ there exists $\varrho:[n] \to \mathcal{D}(\mathbb{C}^2)$ such that $\mathrm{rank}(G_\varrho) = 3$, which is trivially true. 
\end{proof}	

Ref.~\cite{fernandes2024unitary} had previously shown that for $n=3$ these two sets are equal, i.e. $$\mathfrak{Q}(K_3) = \mathfrak{Q}(K_3)\vert_{\mathrm{real}},$$ and that for $n=4$ there is a gap between the sets if we restrict to the pure states. Symbolically,   $$\mathfrak{Q}(K_4)\vert_{\mathrm{pure}} \cap \mathfrak{Q}(K_4)\vert_{\mathrm{real}} \subsetneq \mathfrak{Q}(K_4)\vert_{\mathrm{pure}}.$$ 
Corollary~\ref{corollary:gap_qubits_real_nonreal} implies that a single two-state overlap cannot witness imaginarity. It also implies that in order to experimentally witness imaginarity semi-device independently by measuring two-state overlaps one can choose between two assumptions: either an assumption on the \textit{purity} of the states, and then follow the proposal from Ref.~\cite{fernandes2024unitary}; or an assumption on their underlying Hilbert space dimension, and then experimentally estimate the rank of $G_{\mathbf{r}_\varrho}$.  

Another interesting aspect is that third-order invariants \textit{completely} characterize the imaginarity of single-qubit (possibly mixed) states. Let $\rho_k=\frac{\mathbb{1}+ \mathbf{r}_k \cdot \boldsymbol{\sigma}}{2}.$ The condition  $\mathrm{rank}(G_{\mathbf{r}_\varrho})= 3$ is satisfied iff $\mathrm{span}_\mathbb{R}\{\mathbf{r}_k\}_{k=1}^n=3$ which holds iff there exists $k,l,m$ such that $\mathbf{r}_k,\mathbf{r}_l,\mathbf{r}_m$ are \emph{linearly independent}. Hence $\det[(\mathbf{r}_k,\mathbf{r}_l,\mathbf{r}_m)]\neq 0.$ It has been noted by Ref.~\cite[App. B]{zhang2025geometrysets} that $$\mathrm{Im}(\mathrm{Tr}[\rho_k\rho_l\rho_m])=\frac{1}{4}\det([\mathbf{r}_k,\mathbf{r}_l,\mathbf{r}_m]).$$
So we obtain that $\varrho$ has quantum imaginarity \emph{iff} there exists some $k,l,m$ such that $ \mathrm{Tr}[\rho_k\rho_l\rho_m] \notin \mathfrak{B}_3\vert_{\mathrm{real}} \subseteq \mathbb{R}.$

\begin{corollary}\label{corollary:single_invariant_sufficient}
    Fix $n \in \mathbb{N}$. $\varrho \notin \mathcal{R}_n(\mathbb{C}^2) \iff \mathrm{Tr}(\rho_k\rho_l\rho_m) \notin \mathfrak{B}_3\vert_{\mathrm{real}}$ for some triplet of labels $k,l,m\in [n]$.
\end{corollary}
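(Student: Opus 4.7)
My plan is to prove the corollary by reducing it to Theorem~\ref{theorem: Characterization_Imaginarity} combined with the determinant identity $\mathrm{Im}(\mathrm{Tr}[\rho_k\rho_l\rho_m])=\tfrac{1}{4}\det([\mathbf{r}_k,\mathbf{r}_l,\mathbf{r}_m])$ recalled in the paragraph preceding the corollary. The two directions are essentially dual, so I would prove each separately.

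For the forward direction ($\Rightarrow$), assume $\varrho \notin \mathcal{R}_n(\mathbb{C}^2)$. By Theorem~\ref{theorem: Characterization_Imaginarity}, $\mathrm{rank}(G_{\mathbf{r}_\varrho})=3$, so the Bloch vectors $\{\mathbf{r}_i\}_{i=1}^n$ span all of $\mathbb{R}^3$ and, in particular, there exist indices $k,l,m \in [n]$ such that $\mathbf{r}_k,\mathbf{r}_l,\mathbf{r}_m$ are linearly independent. Linear independence of three vectors in $\mathbb{R}^3$ is equivalent to $\det([\mathbf{r}_k,\mathbf{r}_l,\mathbf{r}_m])\neq 0$, so via the determinant identity $\mathrm{Im}(\mathrm{Tr}[\rho_k\rho_l\rho_m])\neq 0$. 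Since any element of $\mathfrak{B}_3\vert_{\mathrm{real}}$ is the trace of a product of real symmetric matrices in some basis, it must be a real number; hence $\mathrm{Tr}(\rho_k\rho_l\rho_m)\notin\mathfrak{B}_3\vert_{\mathrm{real}}$.

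For the reverse direction ($\Leftarrow$), I would argue by contrapositive. Suppose $\varrho \in \mathcal{R}_n(\mathbb{C}^2)$, i.e., there exists a unitary $U$ such that $U\rho_i U^\dagger$ is real with respect to some basis $\mathbbm{A}$ for every $i \in [n]$. Then for any triplet $k,l,m\in[n]$, the sub-multi-state $(\rho_k,\rho_l,\rho_m)$ is simultaneously mapped to real density matrices by the same $U$, so it lies in $\mathcal{R}_3(\mathbb{C}^2)$. Using the unitary and basis-independence of the trace, we have $\mathrm{Tr}(\rho_k\rho_l\rho_m) = \mathrm{Tr}(U\rho_k U^\dagger\, U\rho_l U^\dagger\, U\rho_m U^\dagger)$, which is realized by the imaginarity-free triplet $(U\rho_k U^\dagger,U\rho_l U^\dagger,U\rho_m U^\dagger)$; therefore $\mathrm{Tr}(\rho_k\rho_l\rho_m) \in \mathfrak{B}_3\vert_{\mathrm{real}}$ for every triplet, proving the contrapositive.

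Neither direction poses a real obstacle: the heavy lifting has already been done by Theorem~\ref{theorem: Characterization_Imaginarity} and by the determinant identity from Ref.~\cite{zhang2025geometrysets} cited in the preceding paragraph. The only minor subtlety to be careful about is spelling out why any element of $\mathfrak{B}_3\vert_{\mathrm{real}}$ is automatically real-valued (a one-line observation about traces of real matrices), and why a global witness $U$ for the whole multi-state restricts to a witness for any sub-triplet (trivial). The corollary is therefore essentially a packaging of Theorem~\ref{theorem: Characterization_Imaginarity} in the language of third-order Bargmann invariants, emphasizing that a single third-order invariant, found by searching over triplets, is enough to certify imaginarity in the single-qubit regime.
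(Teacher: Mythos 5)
Your proof is correct and follows essentially the same route as the paper: Theorem~\ref{theorem: Characterization_Imaginarity} combined with the determinant identity $\mathrm{Im}(\mathrm{Tr}[\rho_k\rho_l\rho_m])=\frac{1}{4}\det([\mathbf{r}_k,\mathbf{r}_l,\mathbf{r}_m])$, so that a rank-three Gram matrix is equivalent to the existence of a linearly independent triplet of Bloch vectors and hence to a nonzero imaginary part of some third-order invariant. Your explicit contrapositive for the reverse direction (restricting the global unitary witness to each triplet to place its invariant in $\mathfrak{B}_3\vert_{\mathrm{real}}$) spells out a step the paper leaves implicit, but it is not a different argument.
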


In particular, $\varrho \in \mathcal{R}_3(\mathbb{C}^2)$ iff $\Delta(\varrho) = \mathrm{Tr}(\rho_1\rho_2\rho_3) \notin \mathfrak{B}_3\vert_{\mathrm{real}} = [-\sfrac{1}{8},1]$. In other words, the quantum imaginarity of a set of three quantum states in a single-qubit system is completely characterized by their third order Bargmann invariant \emph{alone}. 

Interestingly, for higher-dimensional systems third-order invariants are \emph{not} sufficient to characterize multi-state imaginarity as we now show with an example. Choose $\mathcal{H} = \mathbb{C}^4$. We want to construct a multi-state $\varphi:\{1,2,3\} \to \mathcal{D}(\mathbb{C}^2)$ such that $\varphi \notin \mathcal{R}_3(\mathbb{C}^4)$ but that, nevertheless, $\mathrm{Tr}(\varphi_1\varphi_2\varphi_3) \in \mathfrak{B}_3\vert_{\mathrm{real}}$. 

First, we consider two multi-states $\varrho,\varsigma: \{1,2,3\} \to \mathcal{D}(\mathbb{C}^2)$ given by: {$$
\begin{array}{lll}
\rho_1=\left[\begin{array}{cc}
\frac{1}{3} & \frac{\mathrm{i}}{3} \\[2mm]
-\frac{\mathrm{i}}{3} & \frac{2}{3}
\end{array}\right], & \rho_2=\left[\begin{array}{cc}
\frac{1}{4} & \frac{\mathrm{i}}{5} \\[2mm]
-\frac{\mathrm{i}}{5} & \frac{3}{4}
\end{array}\right], & \rho_3=\left[\begin{array}{cc}
\frac{1}{6} & \frac{1}{7} \\[2mm]
\frac{1}{7} & \frac{5}{6}
\end{array}\right], \\[6mm]
\sigma_1=\left[\begin{array}{cc}
\frac{3}{4} & \frac{\mathrm{i}}{4} \\[2mm]
-\frac{\mathrm{i}}{4} & \frac{1}{4}
\end{array}\right], & \sigma_2=\left[\begin{array}{cc}
\frac{4}{5} & \frac{1}{8} \\[2mm]
\frac{1}{8} & \frac{1}{5}
\end{array}\right], & \sigma_3=\left[\begin{array}{cc}
\frac{1}{6} & \frac{\mathrm{i}}{7} \\[2mm]
-\frac{\mathrm{i}}{7} & \frac{5}{6}
\end{array}\right].
\end{array}
$$
In this case, we have that 
$$
\begin{aligned}
& \operatorname{Tr}\left[\rho_1 \rho_2 \rho_3\right]=\frac{1}{3 \times 4 \times 5 \times 6 \times 7}(1253+36 \mathrm{i}), \\
& \operatorname{Tr}\left[\sigma_1 \sigma_2 \sigma_3\right]=\frac{1}{4 \times 5 \times 6 \times 7 \times 8}(1192-200 \mathrm{i}),
\end{aligned}
$$
which implies that each of these multi-states has imaginarity. Now, we define a new multi-state 
$$
\varphi_i=\lambda \rho_i \oplus (1-\lambda)\sigma_i \equiv  \left[\begin{array}{cc}
\lambda \rho_i & \mathbb{0}_2 \\
\mathbb{0}_2 & (1-\lambda) \sigma_i
\end{array}\right],\\
$$
for all $i=1,2,3$. The third-order invariant in this case is given by
$$
\operatorname{Tr}[\varphi_1\varphi_2\varphi_3]=\lambda^3 \operatorname{Tr}\left[\rho_1 \rho_2 \rho_3\right]+(1-\lambda)^3 \operatorname{Tr}\left[\sigma_1 \sigma_2 \sigma_3\right].
$$

If we choose $\lambda\in (0,1)$ such that 
\begin{equation}\label{eq:lambda}\lambda^3 \frac{36 \mathrm{i}}{3 \times 4 \times 5 \times 6 \times 7}  + (1-\lambda)^3\frac{-200 \mathrm{i}}{4 \times 5 \times 6 \times 7 \times 8}  =0, \end{equation}
we obtain by construction that $\operatorname{Tr}[\varphi_1\varphi_2\varphi_3]\in \mathfrak{B}_3\vert_{\mathrm{real}} \subseteq \mathbb{R}.$ However, higher-order invariants are capable to witness the fact that the multi-state $\varphi$ as constructed \emph{has} imaginarity. Since $$
\begin{aligned}
& \operatorname{Tr}\left[\rho_1^2\rho_2 \rho_3\right]=\frac{1}{3^2 \times 4 \times 5 \times 6 \times 7}( 3199 +  108\mathrm{i}), \\
& \operatorname{Tr}\left[\sigma_1^2 \sigma_2 \sigma_3\right]=\frac{1}{4^2 \times 5 \times 6 \times 7 \times 8}(3760 -  800\mathrm{i}
),
\end{aligned}
$$
we have the fourth-order invariant 
$$\operatorname{Tr}[\varphi_1^2\varphi_2\varphi_3]=\lambda^4 \operatorname{Tr}\left[\rho_1^2 \rho_2 \rho_3\right]+(1-\lambda)^4 \operatorname{Tr}\left[\sigma_1^2 \sigma_2 \sigma_3\right]\notin \mathbb{R}$$ 
witnesses the imaginarity of $\varphi:\{1,2,3\} \to \mathcal{D}(\mathbb{C}^4)$. The invariant $\operatorname{Tr}[\varphi_1^2\varphi_2\varphi_3] \notin \mathfrak{B}_4\vert_{\mathrm{real}}$ cannot be real since otherwise
$$\lambda^4 \frac{36 \mathrm{i}}{3 \times 4 \times 5 \times 6 \times 7}  + (1-\lambda)^4\frac{-200 \mathrm{i}}{4 \times 5 \times 6 \times 7 \times 8}  =0, $$ 
which is in contradiction with Eq.~\eqref{eq:lambda}.

This construction provides a concrete example to a broad result from Ref.~\cite{oszmaniec2024measuring}, where the authors showed that Bargmann invariants of repeated entries are, in general, necessary to completely characterize the unitary-invariant properties of a set of \emph{mixed} states.

\subsection{On the geometry of the set of imaginarity-free multi-states}

For the case of quantum coherence, Designolle et al.~\cite{designolle2021set} had previously showed that $\mathcal{I}_n(\mathcal{H})$ is not convex. Here, we show that $\mathcal{R}_n(\mathcal{H})$ is also, in general, not a convex set. 
\begin{corollary}
    The set $\mathcal{R}_3(\mathbb{C}^2)$ is non-convex. In other words, there exist $\varrho,\varsigma \in \mathcal{R}_3(\mathbb{C}^2)$ and $p \in (0,1)$ such that the new multi-state defined by 
    \begin{equation}
        \xi = p \varrho + (1-p)\varsigma 
    \end{equation}
    has imaginarity. 
\end{corollary}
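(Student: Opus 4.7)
My proof plan relies on Theorem~\ref{theorem: Characterization_Imaginarity}, which characterizes imaginarity-free single-qubit multi-states as precisely those whose Bloch vectors span a subspace of $\mathbb{R}^3$ of dimension at most $2$, i.e., that lie in a common plane through the origin. The central observation I would exploit is that the Bloch representation is linear under convex mixtures: if $\rho$ has Bloch vector $\mathbf{r}$ and $\sigma$ has Bloch vector $\mathbf{s}$, then $p\rho + (1-p)\sigma$ has Bloch vector $p\mathbf{r} + (1-p)\mathbf{s}$. Thus the Gram matrix of the convex combination's Bloch vectors is in general \emph{not} a convex combination of the two original Gram matrices, and its rank can strictly exceed the ranks of the summands.

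Given this, my plan is to choose two multi-states $\varrho,\varsigma:\{1,2,3\}\to\mathcal{D}(\mathbb{C}^2)$ whose Bloch vectors lie in two \emph{different} planes through the origin, so that both are imaginarity-free by Theorem~\ref{theorem: Characterization_Imaginarity}, but such that the mixed Bloch vectors span all of $\mathbb{R}^3$ for some $p\in(0,1)$. Concretely, I would place the Bloch vectors $\mathbf{r}_1,\mathbf{r}_2,\mathbf{r}_3$ of $\varrho$ inside the $xz$-plane (so $y$-components vanish) and the Bloch vectors $\mathbf{s}_1,\mathbf{s}_2,\mathbf{s}_3$ of $\varsigma$ inside the $xy$-plane (so $z$-components vanish). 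A simple choice such as $\mathbf{r}_1=\tfrac{1}{2}\hat{x}$, $\mathbf{r}_2=\tfrac{1}{2}\hat{z}$, $\mathbf{r}_3=\mathbf{0}$, $\mathbf{s}_1=\mathbf{0}$, $\mathbf{s}_2=\tfrac{1}{2}\hat{y}$, $\mathbf{s}_3=\tfrac{1}{2}\hat{y}$ gives valid Bloch vectors (all of norm at most $1$), and at $p=\tfrac{1}{2}$ the three mixed Bloch vectors are proportional to $\hat{x}$, $\hat{y}+\hat{z}$, and $\hat{y}$, whose $3\times 3$ matrix has non-zero determinant. Hence $\mathrm{rank}(G_{\mathbf{r}_\xi})=3$, so by Theorem~\ref{theorem: Characterization_Imaginarity} the multi-state $\xi = \tfrac{1}{2}\varrho + \tfrac{1}{2}\varsigma$ has imaginarity.

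I do not anticipate a serious obstacle: the essence is geometric, namely that coplanarity (through the origin) is \emph{not} preserved under mixing vectors from two distinct planes, since two distinct planes through the origin together span $\mathbb{R}^3$. The only careful step is producing an explicit triple where the resulting determinant does not accidentally vanish, but any generic choice of states in two different great-circle planes of the Bloch sphere will suffice. Note that this argument in fact demonstrates non-convexity of $\mathcal{R}_n(\mathbb{C}^2)$ for every $n\geq 3$, and via direct-sum embeddings of qubit states into $\mathbb{C}^d$ the same construction yields non-convexity of $\mathcal{R}_n(\mathbb{C}^d)$ for all $d\geq 2$.
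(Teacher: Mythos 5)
Your proposal is correct and follows essentially the same route as the paper: invoke Theorem~\ref{theorem: Characterization_Imaginarity}, use the linearity of the Bloch representation under convex mixtures, and exhibit two coplanar Bloch-vector triples in distinct planes whose componentwise mixture has a nonvanishing $3\times 3$ determinant, hence Gram rank $3$. The paper's example differs only in the specific vectors chosen (and verifies the determinant is nonzero for \emph{all} $p\in(0,1)$ rather than just $p=\tfrac12$, which it uses to note that the whole open segment lies outside $\mathcal{R}_3(\mathbb{C}^2)$), but the argument is the same.
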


\begin{proof}
    Let us consider $\varrho:\{1,2,3\} \to \mathcal{D}(\mathbb{C}^2)$ such that $\varrho(i) := \frac{\mathbb{1}+\mathbf{r}_i \cdot \boldsymbol{\sigma}}{2}$ defined by the Bloch vectors 
        \begin{align*}
        \mathbf{r}_1 = (1,0,0),\,
        \mathbf{r}_2 = (0,1,0),\,
        \mathbf{r}_3 = (\sfrac{1}{\sqrt{2}},\sfrac{1}{\sqrt{2}},0),
    \end{align*}
    and $\varsigma:\{1,2,3\} \to \mathcal{D}(\mathbb{C}^2)$ such that $\varsigma(i) := \frac{\mathbb{1}+\mathbf{s}_i \cdot \boldsymbol{\sigma}}{2}$ defined instead by the vectors
      \begin{align*}
        \mathbf{s}_1 = (0,\sfrac{1}{\sqrt{2}},\sfrac{1}{\sqrt{2}}),\,
        \mathbf{s}_2 = (0,1,0),\,
        \mathbf{s}_3 = (0,0,1).
    \end{align*}   
    Let $A = [\mathbf{r}_1;\mathbf{r}_2;\mathbf{r}_3]$ and $B=[\mathbf{s}_1;\mathbf{s}_2;\mathbf{s}_3]$. We now note that $\mathrm{rank}(G_{\mathbf{r}_{\varrho}}) = \mathrm{rank}(A)$ and $\mathrm{rank}(G_{\mathbf{s}_{\varsigma}}) = \mathrm{rank}(B)$ (this follows from the fact that $G_{\mathbf{r}_\varrho} = AA^T$ and $\mathrm{rank}(AA^T) = \mathrm{rank}(A)$). From Theorem~\ref{theorem: Characterization_Imaginarity}, we have that $\varrho,\varsigma \in \mathcal{R}_3(\mathbb{C}^2)$. Nevertheless, for every $p \in (0,1)$ if we write $\xi$ as the convex combination of $\varrho$ and $\varsigma$ with weight $p$ we have that $\mathrm{rank}(G_{\mathbf{r}_{\xi}}) = \mathrm{rank}(C_p)$ where
    \begin{equation}
        C_p = \left(\begin{matrix}
            1-p & \sfrac{p}{\sqrt{2}} & \sfrac{p}{\sqrt{2}}\\
            0 & 1 & 0 \\
            \sfrac{(1-p)}{\sqrt{s}} & \sfrac{(1-p)}{\sqrt{s}} & p 
        \end{matrix}\right)
    \end{equation}
    which has determinant $\det(C_p) = \sfrac{(1-p)p}{2} \neq 0$ for all $p \in (0,1)$. From this we conclude that $\mathrm{rank}(G_{\mathbf{r}_{\xi}}) = 3$ for \emph{every} $0 < p<1$. Due to Theorem~\ref{theorem: Characterization_Imaginarity}, we conclude that $\xi \notin \mathcal{R}_3(\mathbb{C}^2)$ and therefore that $\mathcal{R}_3(\mathbb{C}^2)$ is not convex.  
\end{proof}

Indeed, our constructive proof demonstrates not only that the set is nonconvex, but also provides an explicit example of a line segment whose interior lies entirely outside the imaginarity‐free set while its extremal points lie inside.

\section{Bounding quantifiers of multi-state imaginarity}

We now move from the characterization of multi-state imaginarity to the quantification of this resource. Following the work by Miyazaki and  Matsumoto~\cite{miyazaki2022imaginarityfree}, we can define a function $\mathsf{Im}_{R_1}:\mathcal{R}_n(\mathcal{H}) \to \mathbb{R}_{\geq 0}$ aimed to quantify the imaginarity of a multi-state: 	
	$$\begin{array}{rl}
		\mathsf{Im}_{R_1}(\varrho):=&\displaystyle \min\limits_{U}\frac{1}{n} \sum_{j=1}^n \mathsf{Im}_R(U\rho_j U^\dagger),
	\end{array}$$
where, for the case of single-qubit multi-states, the minimization ranges over all unitaries $U \in \mathrm{U}(2)$ and $\mathsf{Im}_R(\rho)$ is the generalized robustness of imaginarity of a single-state~\cite{wu2021operational,wu2021resource,wu2023resource} relative to the standard basis $\mathbbm{A}_{\mathrm{st}} = \{\vert 0\rangle,\vert 1\rangle\}$: 
	$$\mathsf{Im}_R(\rho):=\min_{\tau \in \mathcal{D}(\mathbb{C}^2)} \left\{s\geq 0\mid \frac{  \rho+s \tau }{s+1} \in \mathcal{R}(\mathbb{C}^2,\mathbbm{A}_{\mathrm{st}})\right\}.$$

Since $\mathsf{Im}_R(\rho)= \sfrac{1}{2}\Vert \rho-\rho^T\Vert_1$~\cite{wu2021operational,wu2021resource}, using the Bloch vector decomposition of a state $\rho(\mathbf{r}) = \frac{\mathbb{1}+ \mathbf{r} \cdot \boldsymbol{\sigma}}{2} $ where $ \mathbf{r}=(r_x,r_y,r_z)$ we have that for the single-qubit case the robustness of imaginarity relative to the standard basis becomes simply 
\begin{equation}
\mathsf{Im}_R(\rho)=|r_y|.
\end{equation}
With that, the multi-state function $\mathsf{Im}_{R_1}(\varrho)$ becomes 
	\begin{align}\label{eq:Cal_CI1}
		\mathsf{Im}_{R_1}(\varrho)&=\displaystyle \min\limits_{\mathbf{p} \in \mathbb{S}^2}\frac{1}{n} \sum_{j=1}^n ||\mathbf{r}_j|| \cdot |\cos (\mathbf{r}_j, \mathbf{p})|\\&= \displaystyle \min\limits_{\mathbf{p} \in \mathbb{S}^2}\frac{1}{n} \sum_{j=1}^n | \langle \mathbf{r}_j, \mathbf{p}\rangle|,\nonumber
	\end{align}
 where $\mathbb{S}^2 = \{\mathbf{r}\in\mathbb{R}^3 \mid \Vert \mathbf{r} \Vert=1\}$ is the unit sphere. Moreover, can we write $\mathsf{Im}_{R_1}(\varrho)$ as 
    \begin{equation}
        \mathsf{Im}_{R_1}(\varrho) = \min_{\psi \in \mathcal{P}_1(\mathbb{C}^2)}\frac{1}{n}\sum_{i=1}^n\vert2\mathrm{Tr}(\varrho(i)\psi)-1\vert 
    \end{equation}
    so that $\mathsf{Im}_{R_1}$ becomes a function of two-state overlaps only. We can then frame this as a semidefinite programming (SDP) optimization problem~\cite{tavakoli2023semidefinite}: Given $\varrho = (\rho_1,\dots,\rho_n)$ we need to solve
    \begin{align*}
        \mathrm{minimize}\,\,&\frac{1}{n}\sum_{i=1}^n\vert2\mathrm{Tr}(\rho_i X)-1\vert,\\
        \text{subject to}\,\, &X \geq 0\\
        &\mathrm{Tr}(X)=1.
    \end{align*}
which is a semidefinite program (up to a standard trick of transforming an optimization over $\vert \cdot \vert$ into a linear program). 

Given that one can experimentally infer $G_{\mathbf{r}_\varrho}$ via the measurement of two-state overlaps, it is also interesting to see whether we can bound the possible values  $\mathsf{Im}_{R_1}(\varrho)$ can take dependent on properties of the Gram matrix  $G_{\mathbf{r}_\varrho}$. 

We now show that the eigenvalues of $G_{\mathbf{r}_\varrho}$ provide upper and lower bounds for $\mathsf{Im}_{R_1}(\varrho)$:

\begin{theorem} \label{theorem: Quantification_imaginarity}
		Fix \(n \geq 3\),  and any $n$-tuple of states $\varrho=(\rho_1,\rho_2,\cdots,\rho_n)\in \mathcal{D}(\mathbb{C}^2)^n$. Then  
		$$\frac{\lambda_3 (G_{\mathbf{r}_\varrho})}{n}\leq \mathsf{Im}_{R_1}(\varrho)\leq \sqrt{\frac{\lambda_3 (G_{\mathbf{r}_\varrho})}{n}}\leq \frac{1}{\sqrt{3}}$$
		where $\lambda_1 (G_{\mathbf{r}_\varrho})\geq \lambda_2 (G_{\mathbf{r}_\varrho})\geq \lambda_3 (G_{\mathbf{r}_\varrho}) \geq \lambda_4 (G_{\mathbf{r}_\varrho}) =\cdots =  \lambda_n (G_{\mathbf{r}_\varrho})=0$ denote the eigenvalues of $G_{\mathbf{r}_\varrho} $ in decreasing order.
\end{theorem}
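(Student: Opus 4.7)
The plan is to recast $\mathsf{Im}_{R_1}(\varrho)$, as written in Eq.~\eqref{eq:Cal_CI1}, in purely linear-algebraic terms and then sandwich it between two inequalities that both involve the smallest nonzero eigenvalue of $G_{\mathbf{r}_\varrho}$. Let $M$ be the $n\times 3$ matrix whose $j$-th row is $\mathbf{r}_j^T$. Then $G_{\mathbf{r}_\varrho}=MM^T$, so the nonzero eigenvalues of $G_{\mathbf{r}_\varrho}$ coincide with those of the $3\times 3$ matrix $M^TM$, and for every $\mathbf{p}\in\mathbb{S}^2$ one has
$$\sum_{j=1}^n\langle\mathbf{r}_j,\mathbf{p}\rangle^2=\mathbf{p}^T M^T M\mathbf{p}.$$
Minimizing this quadratic form over $\mathbf{p}\in\mathbb{S}^2$ returns the smallest eigenvalue of $M^TM$, i.e.\ $\lambda_3(G_{\mathbf{r}_\varrho})$ (the remaining eigenvalues $\lambda_4,\dots,\lambda_n$ vanish because the rows of $M$ lie in $\mathbb{R}^3$).

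Next, I would sandwich the linear functional $\sum_j|\langle\mathbf{r}_j,\mathbf{p}\rangle|$ between the quadratic form above and its Cauchy-Schwarz majorant. Since Bloch vectors satisfy $\|\mathbf{r}_j\|\leq 1$, one has $|\langle\mathbf{r}_j,\mathbf{p}\rangle|\leq 1$ termwise, which implies $|\langle\mathbf{r}_j,\mathbf{p}\rangle|\geq\langle\mathbf{r}_j,\mathbf{p}\rangle^2$ and, summing, $\sum_j|\langle\mathbf{r}_j,\mathbf{p}\rangle|\geq\sum_j\langle\mathbf{r}_j,\mathbf{p}\rangle^2$. On the other side, Cauchy-Schwarz in $\mathbb{R}^n$ gives $\sum_j|\langle\mathbf{r}_j,\mathbf{p}\rangle|\leq\sqrt{n\,\sum_j\langle\mathbf{r}_j,\mathbf{p}\rangle^2}$. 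Both inequalities hold for every $\mathbf{p}\in\mathbb{S}^2$.

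The lower bound in the theorem then follows by minimizing the first of these inequalities over $\mathbf{p}\in\mathbb{S}^2$: since $f\geq g$ pointwise implies $\min f\geq\min g$, we get $\mathsf{Im}_{R_1}(\varrho)\geq\tfrac{1}{n}\min_{\mathbf{p}}\sum_j\langle\mathbf{r}_j,\mathbf{p}\rangle^2=\lambda_3(G_{\mathbf{r}_\varrho})/n$. For the upper bound, I would evaluate the objective at a specific point: pick the unit eigenvector $\mathbf{p}^\ast$ of $M^TM$ associated with its smallest eigenvalue, so that $\sum_j\langle\mathbf{r}_j,\mathbf{p}^\ast\rangle^2=\lambda_3(G_{\mathbf{r}_\varrho})$. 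Then
$$\mathsf{Im}_{R_1}(\varrho)\leq\tfrac{1}{n}\sum_j|\langle\mathbf{r}_j,\mathbf{p}^\ast\rangle|\leq\sqrt{\tfrac{1}{n}\sum_j\langle\mathbf{r}_j,\mathbf{p}^\ast\rangle^2}=\sqrt{\lambda_3(G_{\mathbf{r}_\varrho})/n}.$$

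The final inequality $\sqrt{\lambda_3(G_{\mathbf{r}_\varrho})/n}\leq 1/\sqrt{3}$ is a simple dimension-counting step: $\lambda_1+\lambda_2+\lambda_3=\mathrm{tr}(G_{\mathbf{r}_\varrho})=\sum_j\|\mathbf{r}_j\|^2\leq n$, and combined with the ordering $\lambda_3\leq\lambda_2\leq\lambda_1$ this forces $3\lambda_3\leq n$, hence $\lambda_3\leq n/3$. No step of this argument looks technically hard; the only delicate point is using the Bloch-norm bound $\|\mathbf{r}_j\|\leq 1$ in exactly the right places---both to upgrade the quadratic lower bound into a linear one via $|x|\geq x^2$ for $|x|\leq 1$, and to bound $\mathrm{tr}(G_{\mathbf{r}_\varrho})$ by $n$ in the last step.
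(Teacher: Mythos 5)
Your proposal is correct and follows essentially the same route as the paper's proof: sandwich $\sum_j|\langle\mathbf{r}_j,\mathbf{p}\rangle|$ between $\sum_j\langle\mathbf{r}_j,\mathbf{p}\rangle^2$ and its Cauchy--Schwarz majorant, identify $\min_{\mathbf{p}\in\mathbb{S}^2}\sum_j\langle\mathbf{r}_j,\mathbf{p}\rangle^2$ with $\lambda_3(G_{\mathbf{r}_\varrho})$ via the shared nonzero spectrum of $MM^T$ and $M^TM$, and close with the trace bound $\lambda_1+\lambda_2+\lambda_3\leq n$. If anything, you are slightly more careful than the paper in two places---justifying $|\langle\mathbf{r}_j,\mathbf{p}\rangle|\geq\langle\mathbf{r}_j,\mathbf{p}\rangle^2$ explicitly from $\|\mathbf{r}_j\|\leq 1$, and evaluating the objective at the minimizing eigenvector to get the upper bound---so no changes are needed.
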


\begin{proof}
It is elementary to see (using the Cauchy--Schwarz inequality) that the following inequality always hold: 
	$$ \sum_{j=1}^n | \langle \mathbf{r}_j, \mathbf{p}\rangle|^2 \leq \sum_{j=1}^n | \langle \mathbf{r}_j, \mathbf{p}\rangle| \leq  {\sqrt{n} \sqrt{\sum_{j=1}^n  | \langle \mathbf{r}_j, \mathbf{p}\rangle|^2}} .$$  Taking minimum over the unit sphere $\mathbf{p} \in \mathbb{S}^2$ we obtain
	\begin{equation}\label{eq:IR1_up_lower}
	    m \leq  \min\limits_{\mathbf{p} \in \mathbb{S}^2}  \sum_{j=1}^n | \langle \mathbf{r}_j, \mathbf{p}\rangle|\leq \sqrt{nm}.
	\end{equation}  
	where $ m= \displaystyle \min\limits_{\mathbf{p} \in \mathbb{S}^2}  \sum_{j=1}^n | \langle \mathbf{r}_j, \mathbf{p}\rangle|^2$.  Moreover, one finds that 
	\begin{equation}\label{eq:minLamda}
	\displaystyle m=\min\limits_{\mathbf{p} \in S^2} \sum_{j=1}^n | \langle \mathbf{r}_j, \mathbf{p}\rangle|^2= \lambda_3(G_{\mathbf{r}_\varrho}) .
    \end{equation}
    In fact,  $$\sum_{j=1}^n | \langle \mathbf{r}_j, \mathbf{p}\rangle|^2=\langle \mathbf{p} | \left(\sum_{j=1}^n   |\mathbf{r}_j\rangle \langle   \mathbf{r}_j|\right)  |\mathbf{p}\rangle$$
    and if we minimize the right hand side we get that $m$ is equal to the smallest eigenvalue of the  $3\times 3$ matrix  $\sum_{j=1}^n   |\mathbf{r}_j\rangle \langle   \mathbf{r}_j|$.

    If we denote $A=[\mathbf{r}_1;\ldots;\mathbf{r}_n]$ as a $3\times n$ matrix, then  $$AA^\dagger=\sum_{j=1}^n   |\mathbf{r}_j\rangle \langle   \mathbf{r}_j|, \text{ and }  A^\dagger A=G_{\mathbf{r}_\varrho}.$$
    
    Note that $AA^\dagger$ and $A^\dagger A$ share the same nonzero eigenvalues. So the minimal eigenvalue of $\sum_{j=1}^n   |\mathbf{r}_j\rangle \langle   \mathbf{r}_j|$, which is $3 \times 3$ and therefore has only 3 eigenvalues, is equal to the third largest eigenvalue of $G_{\mathbf{r}_\varrho}$ in decreasing order, since all others need to be equal to zero as the Gram matrix has rank $3$. Therefore, by Eq.~\eqref{eq:IR1_up_lower} and Eq.~\eqref{eq:minLamda}, we have the upper and lower bounds on $\mathsf{Im}_{R_1}(\varrho)$. 

    To conclude, note that $G_{\mathbf{r}_\varrho}$  is at most rank three, so at most three eigenvalues are nonzero,
$$\lambda_1+\lambda_2+\lambda_3=\lambda_1+\lambda_2+\lambda_3+\cdots+\lambda_n= \mathrm{Tr}[G_{\mathbf{r}_\varrho}]\leq n.$$
Hence 
$$3 \lambda_3\leq \lambda_1+\lambda_2+\lambda_3 \leq n  $$
 which implies  $$\sqrt{\frac{\lambda_3 (G_{\mathbf{r}_\varrho})}{n}}\leq \frac{1}{\sqrt{3}}.$$ 
\end{proof}

\section{Characterizing all Bargmann invariants of single-qubit states}\label{sec:structure_Bargmann_invariants}

We now proceed to investigate multi-state coherence in general, moving beyond the specific case of imaginarity discussed so far. From the perspective of unitary-invariance, the results of Ref.~\cite{oszmaniec2024measuring} reviewed in Sec.~\ref{sect:Pre} show that estimating sufficiently many Bargmann invariants associated to a multi-state $\varrho$ yield necessary and sufficient information for determining whether $\varrho \in \mathcal{I}_n(\mathcal{H})$. This follows since membership in this set can be viewed as an instance of a unitary equivalence problem.

In this section we present one of the main results of our work. When restricted to single-qubit multi-states $\varrho: [n] \to \mathcal{D}(\mathbb{C}^2)$, the collection of two-state overlaps $\mathsf{\Delta}(\varrho) = (\mathrm{Tr}(\rho_i\rho_j))_{i,j}$ encode \emph{almost all} the necessary and sufficient unitary-invariant information to characterize this notion of basis-independent coherence. 

\begin{theorem}\label{theorem:Bargmanns_dependence_overlaps}
For any $n$-th order Bargmann invariant $\operatorname{Tr}(\rho_1 \cdots \rho_n) \in \mathfrak{B}_n^{(2)}$ where $\rho_i\in \mathcal{D}(\mathbb{C}^2)$, its value is completely determined by the overlaps ${(\operatorname{Tr}(\rho_i\rho_j))}_{i,j=1}^n$ up to complex conjugation.
More precisely, there exist polynomials $P_n, Q_n \in \mathds{Q}[\Delta_{11}, \Delta_{12}, ..., \Delta_{nn}]$  such that  the Bargmann invariant satisfies the quadratic equation:
$$[\operatorname{Tr}(\rho_1 \cdots \rho_n)]^2-2P_n(\mathsf{\Delta}(\varrho)) \operatorname{Tr}(\rho_1 \cdots \rho_n)+Q_n(\mathsf{\Delta}(\varrho)) =0$$
where $P_n(\mathsf{\Delta}(\varrho))$ is the quantum realization of a polynomial obtained by substituting $\Delta_{ij}$ with $\Delta_{ij}(\varrho) = \mathrm{Tr}[\rho_i\rho_j].$ 
\end{theorem}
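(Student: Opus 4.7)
The plan is to reduce the problem to classical $\mathrm{SO}(3)$-invariant theory via the Bloch-vector representation and the double cover (Lemma~\ref{lemma:SU2toSO3}), decompose the Bargmann invariant by the parity of triple-product factors appearing in it, and then eliminate triple products from the imaginary part by squaring.

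First, using $\rho_k = \tfrac{1}{2}(\mathbb{1}+\mathbf{r}_k\cdot\boldsymbol{\sigma})$ together with $(\mathbf{a}\cdot\boldsymbol{\sigma})(\mathbf{b}\cdot\boldsymbol{\sigma}) = (\mathbf{a}\cdot\mathbf{b})\mathbb{1} + \mathrm{i}(\mathbf{a}\times\mathbf{b})\cdot\boldsymbol{\sigma}$, I would expand $\rho_1\cdots\rho_n = a_n\mathbb{1} + \boldsymbol{\beta}_n\cdot\boldsymbol{\sigma}$ inductively, obtaining the recursions $a_n = \tfrac{1}{2}(a_{n-1}+\boldsymbol{\beta}_{n-1}\cdot\mathbf{r}_n)$ and $\boldsymbol{\beta}_n = \tfrac{1}{2}(\boldsymbol{\beta}_{n-1} + a_{n-1}\mathbf{r}_n + \mathrm{i}\,\boldsymbol{\beta}_{n-1}\times\mathbf{r}_n)$ with seeds $a_1 = 1/2$, $\boldsymbol{\beta}_1 = \mathbf{r}_1/2$. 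Since $T_n := \mathrm{Tr}(\rho_1\cdots\rho_n) = 2a_n$, this manifests $T_n$ as a multilinear polynomial with rational coefficients in the components of $(\mathbf{r}_1,\ldots,\mathbf{r}_n)$, each imaginary unit originating from a cross product. Invariance of the trace under $\mathrm{SU}(2)$-conjugation together with Lemma~\ref{lemma:SU2toSO3} shows that $T_n$, viewed as a polynomial in the Bloch vectors, is $\mathrm{SO}(3)$-invariant under the diagonal action. By the First Fundamental Theorem of invariant theory for $\mathrm{SO}(3)$, I may therefore express $T_n$ as a polynomial in dot products $\mathbf{r}_i\cdot\mathbf{r}_j$ and triple products $[\mathbf{r}_i,\mathbf{r}_j,\mathbf{r}_k] := \det[\mathbf{r}_i|\mathbf{r}_j|\mathbf{r}_k]$.

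Second, I would split $T_n = A + \mathrm{i}B$ by the parity of triple-product factors per monomial. Tracking the recursion, each factor of $\mathrm{i}$ in $T_n$ originates from a cross product that, once dotted into a subsequent vector, becomes part of a triple product; a monomial with $k$ triple factors therefore carries a prefactor $\mathrm{i}^k$ times a rational. An independent check is the conjugation identity $\overline{T_n} = \mathrm{Tr}(\rho_n\cdots\rho_1)$: reversing the order preserves all dot products but negates every triple product (a three-entry reversal is an odd permutation), consistent with complex conjugation on the right. Hence $A$ collects the even-$k$ contributions and $\mathrm{i}B$ the odd-$k$ ones. Using the Gram-determinant identity $[\mathbf{r}_i,\mathbf{r}_j,\mathbf{r}_k]\,[\mathbf{r}_l,\mathbf{r}_m,\mathbf{r}_n] = \det(\mathbf{r}_a\cdot\mathbf{r}_b)_{a\in\{i,j,k\},\,b\in\{l,m,n\}}$, I would pair off triple factors in each monomial of $A$ (even count by construction) and of $B^2$ (even since $B$ is odd), expressing both $A$ and $B^2$ as polynomials in $\{\mathbf{r}_i\cdot\mathbf{r}_j\}$ alone. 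Then Eq.~\eqref{eq:inner_product_equal_overlaps} gives $\mathbf{r}_i\cdot\mathbf{r}_j = 2\Delta_{ij}-1$, so both $A$ and $B^2$ are polynomials in the overlaps $\mathsf{\Delta}(\varrho)$ with rational coefficients. Setting $P_n := A$ and $Q_n := A^2 + B^2$, a direct substitution gives $T_n^2 - 2 P_n T_n + Q_n = (A+\mathrm{i}B)^2 - 2A(A+\mathrm{i}B) + (A^2+B^2) = 0$, and the two roots $A \pm \mathrm{i}B$ encode exactly the complex-conjugation ambiguity asserted by the theorem.

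The main obstacle I anticipate is the parity bookkeeping: rigorously justifying that the $\mathrm{SO}(3)$-invariant expansion of $T_n$ splits cleanly into real and imaginary parts according to the parity of triple factors. I would discharge this either via the conjugation/reversal argument sketched above, or---matching the authors' stated inductive construction---by strengthening the recursion with auxiliary invariants tracking the real and imaginary parts of both $a_n$ and $\boldsymbol{\beta}_n$, proving by induction that $\mathrm{Re}(a_n),\mathrm{Re}(\boldsymbol{\beta}_n)$ involve only even-parity monomials while $\mathrm{Im}(a_n),\mathrm{Im}(\boldsymbol{\beta}_n)$ involve only odd-parity ones. The latter route has the advantage of yielding explicit inductive formulas for $P_n$ and $Q_n$.
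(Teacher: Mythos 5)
Your proposal is correct and follows essentially the same route as the paper's proof: the Bloch-vector recursion for $\rho_1\cdots\rho_n$, the split of the scalar and vector parts into real and imaginary components, an induction showing the real parts live in the ring generated by dot products while the imaginary parts carry exactly one triple-product factor, and the Gram-determinant identity to reduce $B^2$ (and hence $Q_n=A^2+B^2$) to dot products before substituting $\langle\mathbf{r}_i,\mathbf{r}_j\rangle = 2\Delta_{ij}-1$. The invariant-theory framing and the reversal/conjugation check are harmless extras; the "strengthened induction" you flag as the rigorous fallback is precisely the paper's argument (its modules $R_n, S_n, T_n, U_n$).
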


We prove this theorem in Appendix~\ref{app:proof_1}. We can view this result as a   generalization for all possible values $n \geq 3$  of the results from Ref.~\cite{zhang2025geometrysets}. Consider the case $n=3$.  Ref.~\cite{zhang2025geometrysets} has shown that 
$$ 
\left\{\begin{array}{l}
\mathrm{Re}[\Delta_3(\varrho)]=  \frac{1}{4}\left(1+\sum_{1 \leqslant i<j \leqslant 3}\left\langle\mathbf{r}_i, \mathbf{r}_j\right\rangle \right), \\
\mathrm{Im}[\Delta_3(\varrho)]= \frac{1}{4}\operatorname{det}\left(\mathbf{r}_1;\mathbf{r}_2;\mathbf{r}_3\right).
\end{array}\right.
$$

This translates to polynomials that read
 $$P_3(\mathsf{\Delta})=\frac{1}{2}\left( \Delta_{12}+\Delta_{13}+\Delta_{23}-1\right) ,$$ 
 $$
 \begin{array}{rcl}
 Q_3(\mathsf{\Delta})&=&\frac{1}{16} \det\left(\begin{array}{ccc}
 2\Delta_{11}-1& 2\Delta_{12}-1& 2\Delta_{13}-1\\[2mm]
 2\Delta_{21}-1& 2\Delta_{22}-1& 2\Delta_{23}-1\\[2mm]
 2\Delta_{31}-1& 2\Delta_{32}-1& 2\Delta_{33}-1 
 \end{array}\right)^2\\[3mm]
 && +\frac{1}{4}(\Delta_{12}+\Delta_{13}+\Delta_{23}-1)^2.
 \end{array}
 $$

In Ref.~\cite{zhang2025geometrysets} (see also Appendix~\ref{app:proof_1}) the cases $n=4$ and $n=5$ are also provided. 

As a matter of fact, the proof of Theorem~\ref{theorem:Bargmanns_dependence_overlaps} provides an iterative algorithm for updating the polynomials $P_n$ and $Q_n$ that yield the real and imaginary part of a (single-qubit) Bargmann invariant $\mathrm{Tr}(\rho_1 \cdots \rho_n)$ provided that one knows $P_{n-1}$ and $Q_{n-1}$. 

\section{Characterization of single-qubit multi-state coherence}\label{sec:singlequbit_coherence}

\subsection{Rank-based criteria}

It is trivial---but merits to be mentioned for completeness of our discussion---to show an analog of Theorem~\ref{theorem: Characterization_Imaginarity} for the case of quantum coherence:

\begin{theorem} \label{theorem: Characterization_Coherence}
		A single-qubit multi-state $\varrho: [n] \to \mathcal{D}(\mathbb{C}^2)$ has coherence, i.e. $\varrho \notin \mathcal{I}_n(\mathbb{C}^2)$, iff $\mathrm{rank}(G_{\mathbf{r}_\varrho})\geq 2$.
\end{theorem}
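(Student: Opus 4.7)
The plan is to mirror the proof of Theorem~\ref{theorem: Characterization_Imaginarity}, replacing the geometric condition of coplanarity of Bloch vectors by that of colinearity. The key observation is that a single-qubit density matrix is diagonal with respect to $\mathbbm{A}_{\mathrm{st}} = \{\vert 0\rangle, \vert 1\rangle\}$ iff its Bloch vector is parallel to $\mathbf{e}_3 = (0,0,1)$. Consequently, unitary diagonalization of a whole multi-state is geometrically the statement that all Bloch vectors can be simultaneously rotated onto a single line through the origin, which is precisely a rank condition.

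For the forward direction ($\varrho \in \mathcal{I}_n(\mathbb{C}^2) \Rightarrow \mathrm{rank}(G_{\mathbf{r}_\varrho}) \leq 1$), I would argue exactly as in Lemma~\ref{lemma: real implies rank less than two}. If $\varrho$ is incoherent, there is a unitary $U \in \mathrm{U}(2)$ mapping every $\rho_i$ to a matrix diagonal in the standard basis, which corresponds to $\Phi_U(\mathbf{r}_i) = x_i \mathbf{e}_3$ for scalars $x_i \in \mathbb{R}$. Hence $\mathrm{span}_\mathbb{R}\{\Phi_U(\mathbf{r}_i)\}_{i=1}^n \leq 1$, so $\mathrm{rank}(G_{\mathbf{r}_{U\varrho U^\dagger}}) \leq 1$. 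The inner-product preserving property of $\Phi_U$ established in Lemma~\ref{lemma: real implies rank less than two} shows that the Gram matrix of Bloch vectors is unitarily invariant, yielding $\mathrm{rank}(G_{\mathbf{r}_\varrho}) \leq 1$.

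For the converse ($\mathrm{rank}(G_{\mathbf{r}_\varrho}) \leq 1 \Rightarrow \varrho \in \mathcal{I}_n(\mathbb{C}^2)$), I would reuse the rank-1 case already handled inside the proof of Theorem~\ref{theorem: Characterization_Imaginarity}. If the rank is zero all Bloch vectors vanish and every state equals $\mathbb{1}/2$, which is trivially diagonal. Otherwise, pick a unit vector $\mathbf{r}$ spanning $\mathrm{span}_\mathbb{R}\{\mathbf{r}_i\}_{i=1}^n$, so each $\mathbf{r}_i = x_i \mathbf{r}$ for some $x_i \in \mathbb{R}$. Choose any $R \in \mathrm{SO}(3)$ with $R\mathbf{r} = \mathbf{e}_3$ and invoke the double cover lemma (Lemma~\ref{lemma:SU2toSO3}) to obtain $U \in \mathrm{SU}(2)$ with $\Phi_U = R$; then $U\rho_i U^\dagger = (\mathbb{1} + x_i \sigma_3)/2$ is diagonal in $\mathbbm{A}_{\mathrm{st}}$ for every $i$.

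There is essentially no technical obstacle: every ingredient (unitary invariance of $G_{\mathbf{r}_\varrho}$, the $\mathrm{SU}(2) \to \mathrm{SO}(3)$ surjection, and the rank-1 construction) has already been established in the earlier material, so the argument is a short specialization. The only thing worth flagging is that the statement lumps ranks $2$ and $3$ together under the coherent case, so it is cleaner to prove the contrapositive $\mathrm{rank}(G_{\mathbf{r}_\varrho}) \leq 1 \Leftrightarrow \varrho \in \mathcal{I}_n(\mathbb{C}^2)$ and note that $\mathrm{rank}(G_{\mathbf{r}_\varrho}) \leq 3$ always holds, so the failure of incoherence amounts exactly to rank $\geq 2$.
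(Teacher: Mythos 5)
Your proposal is correct; both directions go through. The forward direction is essentially identical to the paper's: a state diagonal in $\mathbbm{A}_{\mathrm{st}}$ has Bloch vector along a fixed axis, so an incoherent multi-state has colinear Bloch vectors after the rotation $\Phi_U$, and unitary invariance of $G_{\mathbf{r}_\varrho}$ gives $\mathrm{rank}(G_{\mathbf{r}_\varrho})\leq 1$. Where you diverge is the converse. The paper first reduces the whole statement to the equivalence ``pairwise commutativity $\iff$ $\mathrm{rank}(G_{\mathbf{r}_\varrho})\leq 1$'' (using the remark from Sec.~II that a set is incoherent iff all its elements commute), and then proves the converse purely algebraically via the Pauli identity $(\mathbf{r}_k\cdot\boldsymbol{\sigma})(\mathbf{r}_l\cdot\boldsymbol{\sigma})=\langle\mathbf{r}_k,\mathbf{r}_l\rangle\mathbb{1}+\mathrm{i}(\mathbf{r}_k\times\mathbf{r}_l)\cdot\boldsymbol{\sigma}$: colinearity kills the cross product, hence the states commute. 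You instead bypass commutativity entirely and explicitly construct the simultaneous diagonalizing unitary by rotating the common Bloch direction onto $\mathbf{e}_3$ and lifting through the $\mathrm{SU}(2)\to\mathrm{SO}(3)$ double cover, exactly as in the rank-one case of Theorem~\ref{theorem: Characterization_Imaginarity}. Your route is more self-contained (it does not lean on the simultaneous diagonalizability of commuting density matrices, which the paper treats as obvious), and it correctly handles the degenerate rank-zero case; the paper's route is shorter and avoids invoking Lemma~\ref{lemma:SU2toSO3}. Both are sound, and your closing remark about proving the contrapositive $\mathrm{rank}(G_{\mathbf{r}_\varrho})\leq 1 \Leftrightarrow \varrho\in\mathcal{I}_n(\mathbb{C}^2)$ is exactly how the paper organizes the argument.
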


\begin{proof}
Clearly, $\varrho$ has no coherence iff the elements in $\varrho([n])$ are mutually commutative. It suffices to prove that $\rho_i,\rho_j \in \varrho([n])$ commute for all $i,j$ iff $\mathrm{rank}(G_{\mathbf{r}_\varrho})\leq 1$. If all elements in $\varrho([n])$ are mutually commutative there exists some unitary $U\in \mathrm{U}(2)$ such that $U\rho_k U^\dagger$ are all in diagonal form. Equivalently, $\Phi_U \mathbf{r}_k \propto (1,0 ,0)^T$ which also implies that $\mathbf{r}_k\propto \Phi_U^{-1} (1,0,0)^T.$ Therefore, the rank of $G_{\mathbf{r}_\varrho}$ is no larger than 1. On the other hand, if $\mathrm{rank}(G_{\mathbf{r}_\varrho})\leq 1$, without loss of generality, we assume that $\mathbf{r}_k=a_k \mathbf{r}_1$ for some real number $a_k$. As $\mathbf{r}_k\times \mathbf{r}_l=\mathbf{0},$ we have
\begin{align*}
( \mathbf{r}_k \boldsymbol{\sigma} )( \mathbf{r}_l \boldsymbol{\sigma} )&=\langle \mathbf{r}_k, \mathbf{r}_l\rangle \mathbb{1}+ \mathrm{i}(\mathbf{r}_k \times \mathbf{r}_l) \boldsymbol{\sigma} \\&=\langle \mathbf{r}_k, \mathbf{r}_l\rangle \mathbb{1}=( \mathbf{r}_l \boldsymbol{\sigma} )( \mathbf{r}_k \boldsymbol{\sigma} ),  
\end{align*}
 which implies the commutative of $\rho_k$ and $\rho_l$.
\end{proof}

For the case of quantum coherence we have an even simpler situation than the geometric one provided by Fig.~\ref{fig:Theorem_1_Bloch_sphere}. A set of two single-qubit states is coherent iff their Bloch vectors all lie in a rotated axis of the Bloch sphere.

It is relevant to mention that multi-states $\varrho: \{1,2\} \to \mathcal{D}(\mathcal{H})$ can be coherent (contrasting  Lemma~\ref{lemma:every_multi_state_imaginarity_free} which applies to the specific case of imaginarity) as we will shortly see, but this cannot be witnessed by a single two-state overlap alone, unless one assumes that the states are \emph{pure}. 

\subsection{A simple remark beyond single-qubit states}

We can state a necessary requirement for generic high-dimensional multi-states $\varrho: [n] \to \mathcal{D}(\mathbb{C}^d)$ to be incoherent, similar to Proposition~\ref{proposition:PartialCharacterization}. 

\begin{proposition}\label{proposition:PartialCharacterization_Coherence}
    If $\varrho: [n] \to \mathcal{D}(\mathbb{C}^d)$ is an incoherent multi-state, i.e. $\varrho \in \mathcal{I}_n(\mathbb{C}^d)$, then $\mathrm{rank}(G_{\mathbf{r}_\varrho}) \leq d-1$, where $\mathbf{r}_\varrho$ are generalized Bloch vectors given with respect to the Gell--Mann basis. 
\end{proposition}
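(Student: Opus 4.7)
The plan is to mirror the strategy used in Lemma~\ref{lemma: real implies rank less than two} and Proposition~\ref{proposition:PartialCharacterization}: exploit unitary invariance of the Gram matrix of generalized Bloch vectors, and then bound the dimension of the real subspace in which these vectors must lie once the multi-state is brought into diagonal form.

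First I would recall the definition of incoherence: if $\varrho \in \mathcal{I}_n(\mathbb{C}^d)$, then there exists $U \in \mathrm{U}(d)$ such that $U \rho_i U^\dagger$ is diagonal with respect to some orthonormal basis $\mathbbm{A}$ for every $i \in [n]$. Next I would argue, exactly as in Lemma~\ref{lemma: real implies rank less than two}, that the assignment $\rho \mapsto U \rho U^\dagger$ induces an orthogonal (inner-product preserving) map on the space of generalized Bloch vectors, using Eq.~\eqref{eq:inner_product_gen_bloch_vectors_overlap} and the fact that $\mathrm{Tr}(U\rho U^\dagger U \tau U^\dagger) = \mathrm{Tr}(\rho\tau)$. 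Consequently,
\begin{equation}
G_{\mathbf{r}_\varrho} \;=\; G_{\mathbf{r}_{U\varrho U^\dagger}},
\end{equation}
so it suffices to bound the rank on the right-hand side.

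The key observation is then a dimension count on the subspace of Bloch vectors arising from diagonal states. Among the $d^2-1$ generalized Gell--Mann matrices $\{U_j\}_{j=1}^{d^2-1}$, exactly $d-1$ are diagonal (the usual Cartan subalgebra generators, e.g.\ the generalizations of $\sigma_z$). Any diagonal density matrix $\rho = \tfrac{1}{d}(\mathbb{1}_d + \sum_j r_j U_j)$ must have $r_j = \mathrm{Tr}(\rho U_j) = 0$ whenever $U_j$ is off-diagonal, because an off-diagonal traceless Hermitian matrix has only off-diagonal nonzero entries and its Hilbert--Schmidt inner product with a diagonal matrix vanishes. Hence the generalized Bloch vector of any diagonal state lies inside a fixed real subspace $\mathcal{V} \subseteq \mathbb{R}^{d^2-1}$ with $\dim_\mathbb{R} \mathcal{V} = d-1$.

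Putting the pieces together, the vectors $\{\mathbf{r}_{U \rho_i U^\dagger}\}_{i=1}^n$ all lie in $\mathcal{V}$, so their Gram matrix has rank at most $d-1$. Combined with the unitary invariance displayed above, this gives $\mathrm{rank}(G_{\mathbf{r}_\varrho}) \leq d-1$, as desired. No step here should present a genuine obstacle; the only point that requires a little care is the dimension count for the diagonal sector of the Gell--Mann basis, and this follows immediately from orthogonality of that basis with respect to the Hilbert--Schmidt inner product. Note also that the bound recovers $\mathrm{rank}(G_{\mathbf{r}_\varrho}) \leq 1$ when $d=2$, in agreement with the ``$\Leftarrow$'' direction of Theorem~\ref{theorem: Characterization_Coherence}.
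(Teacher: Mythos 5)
Your proposal is correct and follows essentially the same route as the paper's proof: conjugate to diagonal form, note that the Bloch vectors of diagonal states lie in the $(d-1)$-dimensional span of the traceless diagonal Gell--Mann matrices, and invoke unitary invariance of the Gram matrix. The only difference is that you spell out the orthogonality argument for why off-diagonal Bloch components vanish, which the paper leaves implicit.
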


\begin{proof}
    If $\varrho$ is incoherent, its image is described by a set of pairwise commuting states. Therefore, there exists a basis $\mathbbm{A}$ for which the associated density matrices are all diagonal in this basis. The traceless diagonal Gell--Mann matrices required to write such matrices span a $d-1$ space, which implies that the associated Gram matrix $G_{\mathbf{r}_{U\varrho U^\dagger}}$ has rank at most $d-1$. From unitary invariance, $G_{\mathbf{r}_\varrho}$ has also at most rank $d-1$. 
\end{proof}

It is simple to write down a counter-example. Take $d=3$ and the Gell--Mann matrices $\{\lambda_i\}_{i=1}^8$. Consider the two states given by 
\begin{align}
    \rho_1 &=\frac{1}{3}(\mathbb{1}_{3} + \frac{1}{2}\lambda_1) ,\\
    \rho_2 &= \frac{1}{3}(\mathbb{1}_{3} + \frac{1}{2}\lambda_3).
\end{align}
This pair $\{\rho_1,\rho_2\}$ is set coherent since $[\rho_1,\rho_2] \neq 0$ (which follows from the fact that $[\lambda_1,\lambda_3] \neq 0$). However, $\mathrm{rank}(G_{\mathbf{r}_\varrho}) =2 = 3-1$.

We can leverage the system independence of Bargmann invariants mentioned in Sec.~\ref{sect:Pre} to describe a simple necessary criteria for multi-state coherence, which can be viewed as a generalization of a remark made in Ref.~\cite{fernandes2024unitary}. There, the authors comment that its possible to use the imaginary part of Bargmann invariants to witness multi-state imaginarity since one has the implication
\begin{equation}
    \mathrm{Im}[\mathrm{Tr}(\rho_1\rho_2\rho_3)] \neq 0 \Rightarrow (\rho_1,\rho_2,\rho_3) \notin \mathcal{R}_3(\mathcal{H}),
\end{equation}
which holds for every space $\mathcal{H}$, regardless of the purity of the quantum states. 

We can note that the condition $\mathrm{Im}[\mathrm{Tr}(\rho_1\rho_2\rho_3)] \neq 0$ can be viewed as the violation of an \emph{equality constraint}, which follows from the fact that 
\begin{equation}
    \mathrm{Tr}(\rho_1\rho_2\rho_3) \neq \mathrm{Tr}(\rho_1\rho_3\rho_2).
\end{equation}
When the equality is satisfied this has previously been called the \emph{weak commutativity}~\cite{miyazaki2022imaginarityfree} condition, mostly in the context of multi-parameter estimation in quantum metrology~\cite{matsumoto2002newapproach}.

We can consider the following  simple witnesses of set coherence valid for all Hilbert spaces $\mathcal{H}$: let $\varrho: [n] \to \mathcal{D}(\mathcal{H})$ be any multi-state, $i_1,\ldots,i_m \in [n]$ any finite sequence of labels, and $\mathtt{P}$ any permutation of $m$ indexes. We have that 
\begin{equation}\label{eq:equality_constraints_Bargmanns}
    \mathrm{Tr}(\rho_{i_1} \cdots \rho_{i_m}) \neq \mathrm{Tr}(\rho_{\mathtt{P}^{-1}(i_1)} \cdots \rho_{\mathtt{P}^{-1}(i_m)}) \Rightarrow \varrho \notin \mathcal{I}_n(\mathcal{H}).
\end{equation}
We can illustrate this with a simple example (borrowing from the two multi-states considered by Designolle et al.~\cite{designolle2021set} to show that $\mathcal{I}_n(\mathcal{H})$ is not convex). Let $\varrho,\varsigma:\{1,2\} \to \mathcal{D}(\mathbb{C}^2)$ given by
\begin{equation}\label{eq: vertex label rho example}
    \varrho(1) = \vert 0\rangle \langle 0 \vert, \varrho(2) = \frac{1}{3}\vert 0\rangle \langle 0 \vert + \frac{2}{3}\vert 1\rangle \langle 1 \vert 
\end{equation}
and
\begin{equation}\label{eq: vertex label sigma example}
    \varsigma(1) = \vert +\rangle \langle + \vert, \varsigma(2) = \frac{1}{4}\vert +\rangle \langle + \vert + \frac{3}{4}\vert -\rangle \langle - \vert. 
\end{equation}
We have that both $\varrho,\varsigma \in \mathcal{I}_2(\mathbb{C}^2)$. It is easy to see that for every $\omega \in (0,1)$ the multi-state $\xi := \omega \varrho+(1-\omega)\varsigma \notin \mathcal{I}_2(\mathbb{C}^2)$ since  
\begin{align*}
    \text{Tr}(\xi_1\xi_1\xi_2\xi_2) &= \frac{1}{144} (9 + 18 \omega + 94 \omega^2 - 54 \omega^3 + 13 \omega^4)\\
    \text{Tr}(\xi_1\xi_2\xi_1\xi_2) &= \frac{1}{144} (9 + 18 \omega + 90 \omega^2 - 46 \omega^3 + 9 \omega^4)
\end{align*}
from which we conclude that 
\begin{equation}
    \text{Tr}(\xi_1\xi_1\xi_2\xi_2)-\text{Tr}(\xi_1\xi_2\xi_1\xi_2) = \frac{1}{36}\omega^2(1-\omega)^2 \neq 0.
\end{equation}
Therefore, using this simple equality-based witness one can infer the coherence of a multi-state $\xi$.

\section{Bounding quantifiers of multi-state coherence}

Proceeding similarly as in the case of multi-state imaginarity, we can provide explicit forms and precise bounds on the quantifier $\mathsf{C}_{R_1}:\mathcal{I}_n(\mathcal{H}) \to \mathbb{R}_{\geq 0}$ for multi-state coherence (proposed by Designolle et al.~\cite{designolle2021set}), which is defined as
$$\begin{array}{rl}
	\mathsf{C}_{R_1}(\varrho):=&\displaystyle \min\limits_{U}\frac{1}{n} \sum_{j=1}^n \mathsf{C}_R(U\rho_j U^\dagger),
 \end{array}$$
where, for single qubits, $\mathsf{C}_R$ denotes the generalized robustness of quantum coherence $$\mathsf{C}_R(\rho):=\min_{\tau \in \mathcal{D}(\mathbb{C}^2)} \{s\geq 0\mid \frac{  \rho+s \tau }{s+1} \in \mathcal{I}(\mathbb{C}^2,\mathbbm{A}_{\mathrm{st}})\}.$$
Using the Bloch vector decomposition of a state $\rho(\mathbf{r}) = \frac{\mathbb{1}_{2}+ \mathbf{r} \cdot \boldsymbol{\sigma}}{2} $ where $ \mathbf{r}=(r_x,r_y,r_z)$ we have that for the single-qubit case the robustness of coherence relative to the standard basis becomes simply 
\begin{equation}\label{eq:C_R_Im_R}
\mathsf{C}_R(\rho)=|r_x+\mathrm{i}\, r_y|=\sqrt{r_x^2+r_y^2}\geq \mathsf{Im}_R(\rho).
\end{equation}
As before when focusing on $\mathsf{Im}_{R_1}(\varrho)$, the minimization in the definition of $\mathsf{C}_{R_1}(\varrho)$ is taken over all unitaries $U \in \mathrm{U}(2)$ for the case of single-qubit multi-states.

By definition (and Eq. \eqref{eq:C_R_Im_R}]), for every multi-state $\varrho$ the following inequality holds:  
 $$\mathsf{Im}_{R_1}(\varrho)\leq \mathsf{C}_{R_1}(\varrho).$$  
Moreover, in the case of single-qubit multi-states we have that $\mathsf{C}_{R_1}(\varrho) $ takes the form  
 \begin{equation}\label{eq:Cal_CR1}
 		\mathsf{C}_{R_1}(\varrho)=\displaystyle \min\limits_{\mathbf{p} \in \mathbb{S}^2}\frac{1}{n} \sum_{j=1}^n ||\mathbf{r}_j|| \cdot |\sin (\mathbf{r}_j, \mathbf{p})| 
 \end{equation} 
where $(\mathbf{r}_j, \mathbf{p})$ represents the angle between the Bloch vector $\mathbf{r}_j$ of $\rho_j\in \varrho$ and a unit vector $\mathbf{p}$. 

Equation~\eqref{eq:Cal_CR1} is taken from Ref.~\cite{designolle2021set} and it is obtained as follow: Let $\rho(\mathbf{r}) = \frac{\mathbb{1}+ \mathbf{r} \cdot \boldsymbol{\sigma}}{2} $ where $ \mathbf{r}=(r_x,r_y,r_z)$. Then, the Bloch vector of $U\rho U^\dagger$ is $\Phi_U \mathbf{r}$ where $\Phi_U\in \mathrm{SO}(3)$.  Since 
$$ \mathsf{C}_{R}(\rho)= \sqrt{r_x^2+r_y^2}= || \mathbf{r}|| \cdot  |\sin (\mathbf{r}, \mathrm{e}_z) |, $$
where $e_z=(0,0,1)$ which follows from $|\sin (\mathbf{r}, \mathrm{e}_z) |= \sqrt{1-\cos^2 (\mathbf{r}, \mathrm{e}_z)}=\frac{\sqrt{||\mathbf{r}||^2-\langle \mathbf{r}, \mathrm{e}_z\rangle^2}}{||\mathbf{r}||}=\frac{\sqrt{r_x^2+r_y^2}}{{||\mathbf{r}||}}$ we find that
$$\mathsf{C}_{R}(U\rho U^\dagger)= \Vert \Phi_U\mathbf{r}\Vert \cdot  |\sin (\Phi_U\mathbf{r}, \mathrm{e}_z) |=\Vert \mathbf{r} \Vert \cdot  |\sin (\mathbf{r}, \Phi_U\mathrm{e}_z) |.$$
Above, we have used the fact that unitaries preserve purity as viewed from the homomorphism $\Phi$, i.e. that $||\Phi_U\mathbf{r}||=||\mathbf{r}||$. Moreover, we have also used that $(\Phi_U\mathbf{r}, \mathrm{e}_z)= (\mathbf{r}, \Phi_U^T\mathrm{e}_z).$
When optimizing $U$ over all unitaries,  $\Phi_U^T$ ranges over all elements in $\mathrm{SO}(3)$, so defining $\mathbf{p}:=\Phi_U^T\mathrm{e}_z$ makes the optimization to be equivalent to one ranging over all unit vectors in $\mathbb{S}^2.$

Using that $\vert \sin(\mathbf{r}_j,\mathbf{p})\vert = \sqrt{1-\left(\frac{\langle \mathbf{r}_j,\mathbf{p}\rangle}{\Vert \mathbf{r}_j\Vert}\right)^2}$ we can re-write $\mathsf{C}_{R_1}(\varrho)$ as
\begin{align*}
    &\mathsf{C}_{R_1}(\varrho)=\displaystyle \min\limits_{\mathbf{p} \in \mathbb{S}^2}\frac{1}{n} \sum_{j=1}^n \sqrt{\Vert \mathbf{r}_j\Vert^2 - \langle \mathbf{r}_j,\mathbf{p}\rangle^2 }\\
    &=\displaystyle \min\limits_{\psi\in\mathcal{P}_1(\mathbb{C}^2)}\frac{1}{n} \sum_{j=1}^n \sqrt{(2\mathrm{Tr}(\rho_j^2)-1)-(2\mathrm{Tr}(\rho_j\psi)-1)^2},\\
\end{align*}
which is now fully expressed in terms of two-state overlaps.~Since the expression is non-linear it is unclear whether there is a trick to transform the problem above into an SDP. 

For pure states, we know that the leading diagonal of $G_{\mathbf{r}_\varrho}$ has all entries equal to $1$ in which case we have that its trace, which equals the sum of all eigenvalues, is equal to 
$$\lambda_1 (G_{\mathbf{r}_\varrho})+\lambda_2 (G_{\mathbf{r}_\varrho})+\lambda_3 (G_{\mathbf{r}_\varrho})= n.$$
It is not difficult to show, following the same strategy as the one used in Theorem~\ref{theorem: Quantification_imaginarity}, that:

\begin{theorem} \label{theorem: Quantification_Coherence}
	Fix any \(n \geq 2\), and any $n$-tuple of states $\varrho=(\rho_1,\rho_2,\cdots,\rho_n)\in \mathcal{D}(\mathbb{C}^2)^n$. Then  
    \begin{align*}
    \frac{\lambda_2 (G_{\mathbf{r}_\varrho})+\lambda_3 (G_{\mathbf{r}_\varrho})}{n}\leq \mathsf{C}_{R_1}(\varrho)\leq \sqrt{\frac{\lambda_2 (G_{\mathbf{r}_\varrho})+\lambda_3 (G_{\mathbf{r}_\varrho})}{n}}
    \end{align*}
	where $\lambda_1 (G_{\mathbf{r}_\varrho})\geq \lambda_2 (G_{\mathbf{r}_\varrho})\geq \lambda_3 (G_{\mathbf{r}_\varrho}) \geq \cdots \geq  \lambda_n (G_{\mathbf{r}_\varrho})$ denote the   eigenvalues of $G_{\mathbf{r}_\varrho} $ in decreasing order. Moreover, $\mathsf{C}_{R_1}(\varrho) \leq\sqrt{\frac{2}{3}}$ for all $\varrho \in \mathcal{D}(\mathbb{C}^2)^n$.
\end{theorem}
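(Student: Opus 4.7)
The plan is to mirror the argument used in Theorem~\ref{theorem: Quantification_imaginarity}, but applied to the expression for $\mathsf{C}_{R_1}$ obtained from Eq.~\eqref{eq:Cal_CR1}, which via $|\sin(\mathbf{r}_j,\mathbf{p})|=\sqrt{1-\langle\mathbf{r}_j,\mathbf{p}\rangle^2/\Vert\mathbf{r}_j\Vert^2}$ becomes
\begin{equation*}
\mathsf{C}_{R_1}(\varrho)=\min_{\mathbf{p}\in\mathbb{S}^2}\frac{1}{n}\sum_{j=1}^n \sqrt{\Vert\mathbf{r}_j\Vert^2-\langle\mathbf{r}_j,\mathbf{p}\rangle^2}.
\end{equation*}
Set $a_j(\mathbf{p}):=\sqrt{\Vert\mathbf{r}_j\Vert^2-\langle\mathbf{r}_j,\mathbf{p}\rangle^2}$ and note that $0\le a_j(\mathbf{p})\le\Vert\mathbf{r}_j\Vert\le 1$. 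From this the elementary double inequality $\sum_j a_j^2\le \sum_j a_j\le \sqrt{n}\sqrt{\sum_j a_j^2}$ holds (the left one because $a_j\le 1$ implies $a_j^2\le a_j$, the right one by Cauchy--Schwarz), exactly as in the proof of Theorem~\ref{theorem: Quantification_imaginarity}. Minimizing over $\mathbf{p}\in\mathbb{S}^2$ and dividing by $n$ yields
\begin{equation*}
\frac{m}{n}\le \mathsf{C}_{R_1}(\varrho)\le \sqrt{\frac{m}{n}},\qquad m:=\min_{\mathbf{p}\in\mathbb{S}^2}\sum_{j=1}^n\bigl(\Vert\mathbf{r}_j\Vert^2-\langle\mathbf{r}_j,\mathbf{p}\rangle^2\bigr).
\end{equation*}

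The next step is to identify $m$ with $\lambda_2(G_{\mathbf{r}_\varrho})+\lambda_3(G_{\mathbf{r}_\varrho})$. Writing $A=[\mathbf{r}_1;\ldots;\mathbf{r}_n]$ as in the proof of Theorem~\ref{theorem: Quantification_imaginarity}, we have $\sum_j\Vert\mathbf{r}_j\Vert^2=\mathrm{Tr}[G_{\mathbf{r}_\varrho}]=\lambda_1+\lambda_2+\lambda_3$ (all other eigenvalues vanish since $G_{\mathbf{r}_\varrho}=A^\dagger A$ has rank at most $3$), and $\sum_j\langle\mathbf{r}_j,\mathbf{p}\rangle^2=\langle\mathbf{p}|AA^\dagger|\mathbf{p}\rangle$. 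Crucially---and in contrast to the imaginarity case---minimizing $m$ over $\mathbf{p}$ means \emph{maximizing} $\langle\mathbf{p}|AA^\dagger|\mathbf{p}\rangle$ over the unit sphere, which equals the \emph{largest} eigenvalue of $AA^\dagger$. Since $AA^\dagger$ and $A^\dagger A=G_{\mathbf{r}_\varrho}$ share their nonzero eigenvalues, this maximum is $\lambda_1(G_{\mathbf{r}_\varrho})$, so $m=\lambda_2(G_{\mathbf{r}_\varrho})+\lambda_3(G_{\mathbf{r}_\varrho})$, and the two main inequalities follow.

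For the final uniform bound $\mathsf{C}_{R_1}(\varrho)\le\sqrt{\sfrac{2}{3}}$, I would use that $\lambda_1\ge\lambda_2\ge\lambda_3\ge 0$ implies $\lambda_1\ge\tfrac{1}{3}(\lambda_1+\lambda_2+\lambda_3)$, so $\lambda_2+\lambda_3\le\tfrac{2}{3}(\lambda_1+\lambda_2+\lambda_3)=\tfrac{2}{3}\mathrm{Tr}[G_{\mathbf{r}_\varrho}]\le \tfrac{2n}{3}$ because $\Vert\mathbf{r}_j\Vert\le 1$. Substituting into the upper bound gives $\sqrt{(\lambda_2+\lambda_3)/n}\le\sqrt{\sfrac{2}{3}}$.

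The argument is structurally identical to the imaginarity proof, so no serious obstacle is anticipated; the only point to get right is the sign of the extremization. Because the coherence quantifier penalizes the perpendicular component $\Vert\mathbf{r}_j\Vert^2-\langle\mathbf{r}_j,\mathbf{p}\rangle^2$ rather than the parallel component $\langle\mathbf{r}_j,\mathbf{p}\rangle^2$, the optimal axis $\mathbf{p}$ now aligns with the \emph{dominant} direction of the Bloch-vector cloud (the top eigenvector of $AA^\dagger$), and the residual contribution is controlled by the \emph{sum} of the two remaining eigenvalues of $G_{\mathbf{r}_\varrho}$---hence $\lambda_2+\lambda_3$ in place of $\lambda_3$ alone. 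This is the only conceptual change relative to Theorem~\ref{theorem: Quantification_imaginarity}.
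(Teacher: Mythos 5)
Your proposal is correct and follows essentially the same route as the paper: the same Cauchy--Schwarz and $a_j^2\le a_j$ double inequality, the same identification of $\min_{\mathbf{p}}\sum_j(\Vert\mathbf{r}_j\Vert^2-\langle\mathbf{r}_j,\mathbf{p}\rangle^2)$ with $\mathrm{Tr}[G_{\mathbf{r}_\varrho}]-\lambda_1=\lambda_2+\lambda_3$ via the shared nonzero spectrum of $AA^\dagger$ and $A^\dagger A$, and the same $\lambda_1\ge\tfrac{1}{3}\mathrm{Tr}[G_{\mathbf{r}_\varrho}]$ step for the uniform bound. Your version of the last step is in fact slightly more careful, using $\mathrm{Tr}[G_{\mathbf{r}_\varrho}]\le n$ rather than the equality $\lambda_1+\lambda_2+\lambda_3=n$, which strictly holds only for pure states.
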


\begin{proof}
    Using the Cauchy--Schwarz inequality 
    \begin{align*}
    &\sum_{j=1}^n \sqrt{\Vert \mathbf{r}_j\Vert^2 - \langle \mathbf{r}_j,\mathbf{p}\rangle^2 } \leq \sqrt{n \sum_{j=1}^n (\Vert \mathbf{r}_j \Vert^2-\langle \mathbf{r}_j,\mathbf{p}\rangle^2) }\\
    &=\sqrt{n (\mathrm{Tr}[G_{\mathbf{r}_\varrho}]-\langle \mathbf{p}\vert \sum_{j=1}^n \vert \mathbf{r}_j \rangle \langle \mathbf{r}_j \vert \,\,\vert \mathbf{p}\rangle )}.
    \end{align*}
    Above, we have used that $\sum_j\Vert \mathbf{r}_j\Vert^2 = \sum_j \langle \mathbf{r}_j,\mathbf{r}_j\rangle = \mathrm{Tr}(G_{\mathbf{r}_\varrho})$.
    
    Let us denote $A$ the $3 \times n$ matrix of column vectors $A = [\mathbf{r}_1;\ldots;\mathbf{r}_n]$ such that $AA^\dagger =  \sum_{j=1}^n \vert \mathbf{r}_j \rangle \langle \mathbf{r}_j \vert $. Dividing by $n$ and minimizing over all possible $\mathbf{p} \in \mathbb{S}^2$ we obtain
    \begin{align*}
        \mathsf{C}_{R_1}(\varrho) &\leq \min_{\mathbf{p}\in\mathbb{S}^2} \sqrt{\frac{\sum_{j=1}^n\lambda_j(G_{\mathbf{r}_\varrho})-\langle \mathbf{p}\vert AA^\dagger \vert \mathbf{p}\rangle }{n}}\\
        &=  \sqrt{\frac{\sum_{j=1}^n\lambda_j(G_{\mathbf{r}_\varrho})-\max_{\mathbf{p}\in\mathbb{S}^2}\langle \mathbf{p}\vert AA^\dagger \vert \mathbf{p}\rangle }{n}}\\
        &= \sqrt{\frac{\sum_{j=1}^n\lambda_j(G_{\mathbf{r}_\varrho})-\lambda_1(G_{\mathbf{r}_\varrho})}{n}}\\
        &= \sqrt{\frac{\lambda_2(G_{\mathbf{r}_\varrho})+\lambda_3(G_{\mathbf{r}_\varrho})}{n}}.
    \end{align*}
    Above, we have used that $$\max_{\mathbf{p}\in\mathbb{S}^2}\,\langle \mathbf{p}\vert AA^\dagger \vert \mathbf{p}\rangle=\lambda_1(AA^\dagger) = \lambda_1(G_{\mathbf{r}_\varrho})$$ since their spectrum is equal (up to zeros), i.e. $$\mathrm{spec}(AA^\dagger)\setminus \{0\} = \mathrm{spec}(G_{\mathbf{r}_\varrho})\setminus \{0\}.$$

Using $\sum_j \sqrt{a_j} \geq  {\sum_j a_j}$ which holds whenever $0\leq a_j \leq 1$ for all $j$ we have that
    \begin{equation}
        \sum_{j=1}^n \sqrt{\Vert \mathbf{r}_j\Vert^2 - \langle \mathbf{r}_j,\mathbf{p}\rangle^2 } \geq  {\sum_{j=1}^n (\Vert \mathbf{r}_j \Vert^2-\langle \mathbf{r}_j,\mathbf{p}\rangle^2) }.
    \end{equation}
    Therefore
    \begin{align*}
        \mathsf{C}_{R_1}(\varrho) &\geq \min_{\mathbf{p} \in \mathbb{S}^2}\frac{1}{n} {\sum_{j=1}^n (\Vert \mathbf{r}_j \Vert^2-\langle \mathbf{r}_j,\mathbf{p}\rangle^2) }\\
        &=\frac{1}{n} {\sum_{j=1}^n (\Vert \mathbf{r}_j \Vert^2-\max_{\mathbf{p}\in \mathbb{S}^2}\langle \mathbf{r}_j,\mathbf{p}\rangle^2)}\\
        &=\frac{1}{n} \left({\lambda_2(G_{\mathbf{r}_\varrho})+\lambda_3(G_{\mathbf{r}_\varrho})}\right).
    \end{align*}
    To conclude, again we note that $\lambda_1+\lambda_2+\lambda_3=n$ which implies that $\lambda_1 \geq n/3$. Therefore,
    \begin{equation*}
        \lambda_2+\lambda_3 = n-\lambda_1\leq n-\frac{n}{3} = \frac{2n}{3},
    \end{equation*}
    and we obtain our universal upper-bound
    \begin{equation}
        \mathsf{C}_{R_1}(\varrho) \leq \sqrt{\frac{\sfrac{2n}{3}}{n}} = \sqrt{\frac{2}{3}}.
    \end{equation}
\end{proof}

\section{Physical implications and applications}\label{sec:applications}

\subsection{On the relationship between spin chirality and multi-state imaginarity}

Corollary~\ref{corollary:single_invariant_sufficient} has a physical interpretation related to the concept of \emph{scalar spin chirality} following the results from Ref.~\cite{reascos2023quantum}. Given a totally separable tripartite pure spin state $$\vert \psi \rangle = \vert \mathbf{n}_1 \rangle \otimes \vert \mathbf{n}_2 \rangle \otimes \vert \mathbf{n}_3 \rangle$$ the (separable) scalar spin chirality is defined as $$\chi_{1,2,3} = S^3\langle \mathbf{n}_1, \mathbf{n}_2 \times \mathbf{n}_3\rangle = S^3\det([\mathbf{n}_1;\mathbf{n}_2;\mathbf{n}_3])$$ where $S$ is a constant. We can extend this notion of chirality to mixed states, allowing for the associated states to be mixed. Therefore, three (possibly mixed) single-qubit spin-$\sfrac{1}{2}$ states are chiral \emph{iff} the associated single-qubit multi-state has imaginarity. 

In chiral magnets, chirality gives rise to a phenomena known as the \emph{topological Hall effect}~\cite{kimbell2022challenges}, which is known to be a non-trivial manifestation of the \emph{Berry phase}~\cite{berry1984quantal,ye1999berry}. Corollary~\ref{corollary:single_invariant_sufficient} shows that this (separable) chirality happens iff there are triplets of spin states having imaginarity in this basis-independent sense. As a matter of fact, our Theorem~\ref{theorem:Bargmanns_dependence_overlaps} shows, furthermore, that for single-qubit spin-$\sfrac{1}{2}$ states the only information provided by the Berry phase---which is equal in module to the phase of a Bargmann invariant~\cite{simon1993bargmann}---not encoded already in the two-state overlaps is the information of whether the path in the Bloch sphere formed by the associated triplet of Bloch vectors takes is clockwise (say,  $\rho_1\rho_2\rho_3$) or counterclockwise ($\rho_3\rho_2\rho_1$).

\subsection{Operational advantage for  discrimination quantified by the robustness of multi-state imaginarity}

In abstract treatments of resource theories (focusing on a single state) robustness-based quantifiers are known to have an associated operational interpretation via instances of discrimination tasks~\cite{designolle2021set,uola2019quantifying,kuroiwa2024every,kuroiwa2024robustness,wu2021operational,takagi2019operational,napoli2016robustness,piani2016robustness,skrzypczyk2019all,oszmaniec2019operational,takagi2019general}.

For example, it is straightforward to translate the operational advantage in sub-channel discrimination discussed by Designolle et al.~\cite{designolle2021set} quantified there by the robustness of set coherence to the case of the robustness of set imaginarity since the only truly relevant aspect for the proof in Ref.~\cite{designolle2021set} stems from the form of the quantifier and the convexity of the sets $\mathcal{I}(\mathcal{H},\mathbbm{A})$. 

In the task of sub-channel discrimination, one is given a quantum instrument $\mathcal{T} = \{\mathcal{T}_a\}_a$ and a measurement $M=\{E_a\}_a$, and the goal of the task is to correctly guess which map $\mathcal{T}_a$ was applied on a state $\rho$ that can be prepared at will. 

The probability of successfully guessing $\mathcal{T}_a$ provided the state $\rho$ was prepared and a measurement $M$ was performed is given by
\begin{equation}
    p_{\mathrm{succ}}(\rho,\mathcal{T},M) = \sum_{a}\mathrm{Tr}[\mathcal{T}_a(\rho)E_a].
\end{equation}
We remark that the task above can also be made probabilistic assuming that a map $\mathcal{T}_a$ is applied with probability $p_a$. As mentioned also by Kang-Da Wu et al.~\cite{wu2021operational}, it holds that  
\begin{align*}
    \sup_{\mathcal{T},M} \frac{p_{\mathrm{succ}}(\rho,\mathcal{T},M)}{\max_{\sigma \in \mathcal{R}(\mathcal{H},\mathbbm{A}_{\mathrm{st}})}p_{\mathrm{succ}}(\sigma,\mathcal{T},M)} = 1 + \mathsf{Im}_R(\rho).
\end{align*}

Provided with that intuition, we can extend this to the case where we have a set of sub-channel discrimination tasks, as opposed to a single one. In this case, we are given lists of instruments $\boldsymbol{\mathcal{T}} =(\mathcal{T}^{(1)},\ldots, \mathcal{T}^{(n)})$ and measurements $\boldsymbol{M} = (M^{(1)},\ldots,M^{(n)})$. Our full protocol is then to implement a sub-channel discrimination task for each label $i=1,\ldots,n$, and our goal is to guess the instruments $\mathcal{T}_a^{(i)}$ for each $i$ given that we have prepared an initial state $\varrho(i)$ and performed measurement $M^{(i)}$. 

Summing over all instances of the above where we let $\rho$  be in the image of some multi-state, and then maximizing over all possible unitaries (which amounts to maximizing over all possible choices of reference basis $\mathbbm{A}$) we end up with 
\begin{align*}
    &\frac{1}{n}\min_U \sum_{i=1}^n \sup_{\boldsymbol{\mathcal{T}},\boldsymbol{M}} 
    \frac{p_{\mathrm{succ}}(\rho_i,\mathcal{T}^{(i)},M^{(i)})}
         {\displaystyle\max_{\sigma_i \in \mathcal{R}(\mathcal{H},U\mathbbm{A}_{\mathrm{st}})} 
         p_{\mathrm{succ}}(\sigma_i,\mathcal{T}^{(i)},M^{(i)})} 
    \\&= 1 + \mathsf{Im}_{R_1}(\varrho).
\end{align*}

The interpretation of the above is then as the one provided by Ref.~\cite{designolle2021set} for the case of set coherence: regardless of one's choice of reference basis $\mathbbm{A}$ there exists a collection of sub-channel discrimination tasks  in which the multi-state having non-zero imaginarity $\varrho$ outperforms any imaginarity-free multi-state. 

\subsection{On the relevance of multi-state coherence and imaginarity to multi-photon indistinguishability}

As shown by Refs.~\cite{shchesnovich2015partial,shchesnovich2018collective}, the output statistics of photon‐resolving measurements in linear‐optical interferometers depend \emph{solely} on the Bargmann invariants of the single‐photon internal states and on the unitary modeling the interferometer. Numerous experiments have since confirmed this prediction~\cite{pont2022quantifying,menssen2017distinguishability,jones2020multiparticle,jones2023distinguishability}.  

In practice, calibrating a linear photonic device often includes characterizing how indistinguishable the incoming photons are---a key figure of merit for, e.g., photon bunching and Boson Sampling validation~\cite{seron2023boson}. When photons are not perfectly identical, pairwise overlaps capture only part of the story: higher‐order Bargmann invariants can provide a more complete, and sometimes counter-intuitive~\cite{rodari2024experimentalobservationcounterintuitivefeatures}, picture.

Indeed, Ref.~\cite{jones2023distinguishability} recently showed that three‐photon indistinguishability---when the only distinguishing degree of freedom is, say, polarization---requires a three‐state Bargmann invariant.  They contrast an ``incoherent partial distinguishability'' regime, in which all three polarization Bloch vectors lie on a line (our notion of multi‐state incoherence), with a ``coherent partial distinguishability'' regime, where the three vectors span a nonzero volume in Bloch space (our notion of multi‐state imaginarity). By performing a tritter experiment, they directly accessed the imaginary part of the third‐order Bargmann invariant---which is proportional to that volume---and thereby distinguished these two regimes even when the photons themselves were mixed.  

Our analysis shows that, in two‐dimensional Hilbert spaces, one can in principle infer this same volume from the determinant of the Gram matrix \(G_{\mathbf r_\varrho}\), i.e.\ from pairwise overlaps \emph{alone}.  

However, the output statistics of such experiments is better described as agnostic to the underlying Hilbert space dimensionality such that only unitary-invariant information truly matters: dimension upper-bounds and purity assumptions rarely hold exactly. Moreover, our counterexamples demonstrate that a single third‐order invariant can fail to witness multi-state imaginarity without these assumptions. Accordingly, the most robust experimental strategy is to measure the higher‐order Bargmann invariants directly---just as in Ref.~\cite{jones2023distinguishability}---rather than rely solely on two‐state overlaps or purity‐and‐dimension assumptions.  This gives a more fine‐grained calibration of photon indistinguishability in realistic, imperfect sources. In this case, one can then use our witnesses based on equality constraints on Bargmann invariants of the form of Eq.~\eqref{eq:equality_constraints_Bargmanns}.  

\section{Discussion and outlook}\label{sec:discussion}

We have provided a complete characterization of multi-state imaginarity and coherence in qubit systems using Bargmann invariants and Gram matrix methods. We show that a single-qubit multi-state has imaginarity if and only if the Gram matrix of its Bloch vectors has rank three, while it exhibits coherence if and only if this Gram matrix has rank at least two. Geometrically, this reveals that imaginarity-free states lie within a common plane of the Bloch sphere, while incoherent states are constrained to a single axis. 

Beyond this geometric characterization of multi-state coherence and imaginarity, we have also showed that (for single-qubit systems) \emph{all} higher-order Bargmann invariants are completely determined by two-state overlaps \emph{up to complex conjugation}, with the sign of the imaginary part being the only independent unitary-invariant information not encoded in pairwise overlaps. In particular, this also provides a characterization of the form of Bargmann invariants and on the information content provided by two-state overlaps of single-qubit states. 

We have furthermore extended some of our results to provide necessary conditions---which can be viewed as coherence witnesses---for arbitrary Hilbert space dimensions using the values of higher-order Bargmann invariants, generalizing one of the insights from Ref.~\cite{fernandes2024unitary} for multi-state imaginarity to the more general case of multi-state coherence.   

Future work could approach our results from a formal resource-theoretic perspective.~In particular, it would be interesting to analyze notions of free transformations associated with these two multi-state resources, as well as the induced pre-order governing resource conversion. Such an analysis could shed new light on potential applications of the physical situations discussed here, as well as on the connection initially drawn in Ref.~\cite{miyazaki2022imaginarityfree} between multi-state imaginarity and multi-parameter state estimation in quantum metrology. Furthermore, it would be worthwhile to explore how our techniques extend to the case of nonstabilizerness~\cite{wagner2025unitary},  which represents a distinguished form of coherence relevant to magic-state injection schemes in fault-tolerant quantum computation.
	
\begin{acknowledgements}
RW would like to thank Filipa C. R. Peres, Ernesto Galv\~{a}o, and Felix Ahnefeld for discussions on coherence and imaginarity, and David Schmid and Yujie Zhang for discussions on multi-states. M-SL acknowledges  support from  National Natural Science Foundation of China  under Grant No. 12371458, the Guangdong Basic and Applied Basic Research Foundation under Grants No. 2023A1515012074   and No. 2024A1515010380. RW acknowledges support from the European Research Council (ERC) under the European Union’s Horizon 2020 research and innovation programme (grant agreement No.856432, HyperQ). LZ  is supported by Zhejiang Provincial Natural Science Foundation of China under Grants No. LZ23A010005
\end{acknowledgements}

\bibliography{bibliography}

\onecolumngrid
\begin{appendix}

\section{Proof of Theorem~\ref{theorem:Bargmanns_dependence_overlaps}}\label{app:proof_1}

\begin{proof}
We proceed similarly to Appendix B from Ref.~\cite{zhang2025geometrysets}. For any single-qubit state $\rho \in \mathcal{D}\left(\mathbb{C}^2\right)$ we write $\rho=\rho(\mathbf{r})=\frac{1}{2}\left(\mathbb{1}_2+\mathbf{r} \cdot \boldsymbol{\sigma}\right)$ using the Bloch representation. Define $$\boldsymbol{\mathrm{Prod}}_n=\rho_1 \,\rho_2 \,\cdots \,\rho_n=2^{-n}\left(p_0^{(n)} \mathbb{1}_2+\boldsymbol{p}^{(n)}\boldsymbol{\sigma}\right),$$ where we set $p_0^{(1)}=1$  and $\boldsymbol{p}^{(1)}=\mathbf{r}_1$ the Bloch vector of state $\rho_1$. Recursively, $\boldsymbol{\mathrm{Prod}}_{n+1}=\boldsymbol{\mathrm{Prod}}_n\,\rho_{n+1}$, which implies the following update rules:
$$
\begin{aligned}
	& p_0^{(n+1)}=p_0^{(n)}+\left\langle\boldsymbol{p}^{(n)}, \mathbf{r}_{n+1}\right\rangle, \\
	& \boldsymbol{p}^{(n+1)}=p_0^{(n)} \mathbf{r}_{n+1}+\boldsymbol{p}^{(n)}+\mathrm{i} \boldsymbol{p}^{(n)} \times \mathbf{r}_{n+1}.
\end{aligned}
$$
If we write $ p_0^{(n)}=a_0^{(n)}+\mathrm{i}b_0^{(n)}$  and  $\boldsymbol{p}^{(n)}=\boldsymbol{a}^{(n)}+\mathrm{i}
\boldsymbol{b}^{(n)},$ then we have the following recursive relations:
\begin{equation}
\begin{aligned}
	& a_0^{(n+1)}=a_0^{(n)}+\left\langle\boldsymbol{a}^{(n)}, \mathbf{r}_{n+1}\right\rangle, \\
&	b_0^{(n+1)}=b_0^{(n)}+\left\langle\boldsymbol{b}^{(n)}, \mathbf{r}_{n+1}\right\rangle, \\
	& \boldsymbol{a}^{(n+1)}=a_0^{(n)} \mathbf{r}_{n+1}+\boldsymbol{a}^{(n)}- \boldsymbol{b}^{(n)} \times \mathbf{r}_{n+1},\\
		& \boldsymbol{b}^{(n+1)}=b_0^{(n)} \mathbf{r}_{n+1}+\boldsymbol{b}^{(n)}+ \boldsymbol{a}^{(n)} \times \mathbf{r}_{n+1},
\end{aligned}
\end{equation}
with initial conditions $a_0^{(1)}=1,b_0^{(1)}=0,$   $\boldsymbol{a}^{(1)}=\mathbf{r}_1, \boldsymbol{b}^{(1)}=\boldsymbol{0}.$ 

Let us now consider the following algebraic constructions: 
\begin{equation}
\begin{aligned}
	& R_n:=\mathbb{Z}\left[\left\langle\mathbf{r}_i, \mathbf{r}_j\right\rangle, 1\leq i,j\leq n\right],\\	
	 &S_n:=\sum_{1\leq i<j<k\leq n}  R_n \det[\mathbf{r}_i,\mathbf{r}_j,\mathbf{r}_k]  ,\\
		& T_n:=\sum_{k=1}^n R_n  \mathbf{r}_k+\sum_{1\leq i<j\leq n}  S_n\mathbf{r}_{i}\times  \mathbf{r}_{j}, \\
	& U_n:=\sum_{k=1}^n S_n\mathbf{r}_{k}  +\sum_{1\leq i<j\leq n} R_n  \mathbf{r}_i\times \mathbf{r}_j. 
\end{aligned}
\end{equation}
{Here, \( R_n \) denotes the ring of polynomials over \( \mathbb{Z} \) generated by the inner products \( \langle \mathbf{r}_i, \mathbf{r}_j \rangle \) for \( 1 \leq i,j \leq n \), denoted as $\mathbb{Z}[\langle \mathbf{r}_i,\mathbf{r}_j\rangle,1\leq i,j \leq n]$. The space \( S_n \) is linearly generated by \(\det[\mathbf{r}_i, \mathbf{r}_j, \mathbf{r}_k]\) for \(1 \leq i \leq j \leq k \leq n\), with coefficients in \( R_n \). Similarly for \( T_n \) and \( U_n \).} In the following, we will prove by induction that for all $n\in \mathbb{N}$
$$ 
\begin{aligned}
	  a_0^{(n)}\in R_n,   
 	b_0^{(n)}\in S_n,   
	  \boldsymbol{a}^{(n)}\in T_n, 
   \boldsymbol{b}^{(n)}\in U_n.
\end{aligned}
$$
Before the proof, we recall the following properties:
\begin{enumerate}[(A)]	\item The product $\det[\mathbf{r}_{i'},\mathbf{r}_{j'},\mathbf{r}_{k'}] \det[\mathbf{r}_{i },\mathbf{r}_{j },\mathbf{r}_{k }]$ equals to
 $$ \left| \begin{array}{ccc}
		\langle \mathbf{r}_{i'}, \mathbf{r}_{i }\rangle& \langle\mathbf{r}_{i'}, \mathbf{r}_{j }\rangle&  	\langle\mathbf{r}_{i'}, \mathbf{r}_{k }\rangle\\
		\langle \mathbf{r}_{j'}, \mathbf{r}_{i }\rangle& \langle\mathbf{r}_{j'}, \mathbf{r}_{j }\rangle&  	\langle\mathbf{r}_{j'}, \mathbf{r}_{k }\rangle\\
		\langle \mathbf{r}_{k'}, \mathbf{r}_{i }\rangle& \langle\mathbf{r}_{k'}, \mathbf{r}_{j }\rangle&  	\langle\mathbf{r}_{k'}, \mathbf{r}_{k }\rangle\\
		\end{array}\right|.$$
	To show this, it suffices to use the simple properties of determinants $\det[A]=\det[A^T]$ and $\det[AB]=\det[A]\det[B].$
	\item  $\langle \mathbf{r}_{i}\times \mathbf{r}_{j }, \mathbf{r}_{k} \rangle=\det[\mathbf{r}_{i},\mathbf{r}_{j},\mathbf{r}_{k}].$
	\item $(\mathbf{r}_{i}\times \mathbf{r}_{j }) \times  \mathbf{r}_{k }= \langle \mathbf{r}_{i},\mathbf{r}_{k} \rangle \mathbf{r}_{j} -\langle \mathbf{r}_{j},\mathbf{r}_{k} \rangle \mathbf{r}_{i}.$
	\end{enumerate}
Clearly, $a_0^{(1)}\in R_1, b_0^{(1)}\in S_1, \boldsymbol{a}^{(1)}\in T_1, \boldsymbol{b}^{(1)}\in U_1$ holds (which provides our base step for $n=1$). Assume this holds for $n=l$, now we need to prove it also holds for $n=l+1$. 
\begin{enumerate}
	\item (Inductive step for $R_n$)   As  $a_0^{(l+1)}=a_0^{(l)}+\left\langle\boldsymbol{a}^{(l)}, \mathbf{r}_{l+1}\right\rangle,$  and $\boldsymbol{a}^{(l)}\in T_l,$ so 
	$$\boldsymbol{a}^{(l)}=\sum_{k=1}^l r_k\mathbf{r}_k+\sum_{1\leq i<j\leq l} s_{ij} \mathbf{r}_i \times   \mathbf{r}_j.$$ Clearly, for each $1\leq k\leq l$, $\left\langle\mathbf{r}_k, \mathbf{r}_{l+1}\right\rangle\in R_{l+1}$  and for each $s_{ij}=r_{i'j'k'} \det[\mathbf{r}_{i'},\mathbf{r}_{j'},\mathbf{r}_{k'}]$ where $1\leq i'<j'<k'\leq l$ and $r_{i'j'k'}\in R_l,$ 
	 by using properties (A) and (B), we can show that
	 $$ \langle s_{ij}\mathbf{r}_i\times \mathbf{r}_j , \mathbf{r}_{l+1} \rangle \in R_{l+1}.$$
	 So we get $a_0^{(l+1)}\in R_{l+1}.$
	 
	 \item (Inductive step for $S_n$) As $b_0^{(l+1)}=b_0^{(l)}+\left\langle\boldsymbol{b}^{(l)}, \mathbf{r}_{l+1}\right\rangle $ and $\boldsymbol{b}^{(l)}\in T_l$, so 
	 $$\boldsymbol{b}^{(l)}=\sum_{k=1}^l s_k\mathbf{r}_k+\sum_{1\leq i<j\leq l} r_{ij} \mathbf{r}_i \times   \mathbf{r}_j.$$ By definition $\langle s_k\mathbf{r}_k, \mathbf{r}_{l+1}\rangle=\langle \mathbf{r}_k, \mathbf{r}_{l+1}\rangle s_k \in S_{l+1}.$ And the property (B) implies that 
	 $$ \langle r_{ij} \mathbf{r}_i \times   \mathbf{r}_j,\mathbf{r}_{l+1}\rangle \in S_{l+1}.$$ 
	 Therefore, $b_0^{(l+1)}\in S_{l+1}.$

	  \item (Inductive step for $T_n$) As $\boldsymbol{a}^{(l+1)}=a_0^{(l)} \mathbf{r}_{l+1}+\boldsymbol{a}^{(l)}- \boldsymbol{b}^{(l)} \times \mathbf{r}_{l+1},$  and $a_0^{(l)}\in R_l,\boldsymbol{a}^{(l)}\in T_l, \boldsymbol{b}^{(l)}\in U_l$, we have $a_0^{(l)} \mathbf{r}_{l+1}+\boldsymbol{a}^{(l)}\in T_{l+1} $
	 and  
	 $$\boldsymbol{b}^{(l)}=\sum_{k=1}^l s_k\mathbf{r}_k+\sum_{1\leq i<j\leq l} r_{ij} \mathbf{r}_i \times   \mathbf{r}_j.$$
	 So by the definition of $T_{l+1}$ and Property (C), we get
	 $$  \boldsymbol{b}^{(l)} \times   \mathbf{r}_{l+1} \in T_{l+1}.$$
	 Hence $\boldsymbol{a}^{(l+1)}\in T_{l+1}.$
	  
	 \item (Inductive step for $U_n$) As $\boldsymbol{b}^{(l+1)}=b_0^{(l)} \mathbf{r}_{l+1}+\boldsymbol{b}^{(l)}+ \boldsymbol{a}^{(l)} \times \mathbf{r}_{l+1},$  and $b_0^{(l)}\in S_l,\boldsymbol{a}^{(l)}\in T_l, \boldsymbol{b}^{(l)}\in U_l$, we have $b_0^{(l)} \mathbf{r}_{l+1}+\boldsymbol{b}^{(l)}\in U_{l+1} $
	 and  
	 $$\boldsymbol{a}^{(l)}=\sum_{k=1}^l r_k\mathbf{r}_k+\sum_{1\leq i<j\leq l} s_{ij} \mathbf{r}_i \times   \mathbf{r}_j.$$
	 So by the definition of $U_{l+1}$ and Property (C), we get
	 $$  \boldsymbol{a}^{(l)} \times   \mathbf{r}_{l+1} \in \textsc{U}_{l+1}.$$
	 Hence $\boldsymbol{b}^{(l+1)}\in U_{l+1}.$
	  
\end{enumerate}

The Bargmann invariant for the tuple   $\varrho = (\rho_1, \ldots, \rho_n)$, where $\rho_k=\rho\left(\mathbf{r}_k\right)$, is thus given by $\Delta(\varrho):=$ $\operatorname{Tr}\left(\boldsymbol{\mathrm{Prod}}_n\right)=2^{1-n} p_0^{(n)}=\frac{a_0^{(n)}+\mathrm{i}b_0^{(n)} }{2^{n-1}}$. Note that 
$$a_0^{(n)}\in R_n, \text{ and } b_0^{(n)}\in S_n.$$
Using property (A), one obtains $S_n*S_n\in R_n$. In particular, 
$$\left(a_0^{(n)}\right)^2+\left(b_0^{(n)}\right)^2 \in R_n.$$
That is, the real part of $\mathrm{Tr}[\rho_1\cdots \rho_n]$ and the power of two of the imaginarity of  $\mathrm{Tr}[\rho_1\cdots \rho_n]$ are determined by 
$$ \langle \mathbf{r}_i,\mathbf{r}_j\rangle= 2\mathrm{Tr}[\rho_i\rho_j]-1, 1\leq i,j\leq n.$$
 The corresponding polynomial $P_n$  can be got by replacing the  $\langle \mathbf{r}_i,\mathbf{r}_j\rangle$   in $\frac{1}{2^{n-1}}a_0^{(n)}$ with $2\Delta_{ij}-1$.  The corresponding polynomial $Q_n$  can be got by replacing the  $\langle \mathbf{r}_i,\mathbf{r}_j\rangle$   in $\frac{\left(a_0^{(n)}\right)^2+\left(b_0^{(n)}\right)^2}{2^{2n-2}}$ with $2\Delta_{ij}-1$.
\end{proof}

As a concrete example, the first three steps $n=3,4,$ and $5$ have been determined in Ref.~\cite{zhang2025geometrysets} and are given by: 
$$ 
\left\{\begin{array}{l}
a_0^{(3)}=  1+\sum_{1 \leqslant i<j \leqslant 3}\left\langle\mathbf{r}_i, \mathbf{r}_j\right\rangle, \\
b_0^{(3)}= \operatorname{det}\left(\mathbf{r}_1, \mathbf{r}_2, \mathbf{r}_3\right),
\end{array}\right.
$$ 
$$
\left\{\begin{array}{l}
a_0^{(4)}= \left(1+\left\langle\mathbf{r}_1, \mathbf{r}_2\right\rangle\right)\left(1+\left\langle\mathbf{r}_3, \mathbf{r}_4\right\rangle\right)-\left(1-\left\langle\mathbf{r}_1, \mathbf{r}_3\right\rangle\right)\left(1-\left\langle\mathbf{r}_2, \mathbf{r}_4\right\rangle\right)+\left(1+\left\langle\mathbf{r}_1, \mathbf{r}_4\right\rangle\right)\left(1+\left\langle\mathbf{r}_2, \mathbf{r}_3\right\rangle\right), \\
b_0^{(4)}=  \operatorname{det}\left(\mathbf{r}_1+\mathbf{r}_2, \mathbf{r}_2+\mathbf{r}_3, \mathbf{r}_3+\mathbf{r}_4\right) ,
\end{array}\right.
$$
and 
$$
\left\{\begin{aligned}
a_0^{(5)}=  &   1+\sum_{1 \leqslant i<j \leqslant 5}\left\langle\mathbf{r}_i, \mathbf{r}_j\right\rangle+\left\langle\mathbf{r}_1, \mathbf{r}_2\right\rangle\left\langle\mathbf{r}_3, \mathbf{r}_4\right\rangle-\left\langle\mathbf{r}_1, \mathbf{r}_3\right\rangle\left\langle\mathbf{r}_2, \mathbf{r}_4\right\rangle \\
& +\left\langle\mathbf{r}_1, \mathbf{r}_4\right\rangle\left\langle\mathbf{r}_2, \mathbf{r}_3\right\rangle+\left(\left\langle\mathbf{r}_2, \mathbf{r}_3\right\rangle+\left\langle\mathbf{r}_2, \mathbf{r}_4\right\rangle+\left\langle\mathbf{r}_3, \mathbf{r}_4\right\rangle\right)\left\langle\mathbf{r}_1, \mathbf{r}_5\right\rangle \\
& +\left(-\left\langle\mathbf{r}_1, \mathbf{r}_3\right\rangle-\left\langle\mathbf{r}_1, \mathbf{r}_4\right\rangle+\left\langle\mathbf{r}_3, \mathbf{r}_4\right\rangle\right)\left\langle\mathbf{r}_2, \mathbf{r}_5\right\rangle \\
& +\left(\left\langle\mathbf{r}_1, \mathbf{r}_2\right\rangle-\left\langle\mathbf{r}_1, \mathbf{r}_4\right\rangle-\left\langle\mathbf{r}_2, \mathbf{r}_4\right\rangle\right)\left\langle\mathbf{r}_3, \mathbf{r}_5\right\rangle \\
&  +\left(\left\langle\mathbf{r}_1, \mathbf{r}_2\right\rangle+\left\langle\mathbf{r}_1, \mathbf{r}_3\right\rangle+\left\langle\mathbf{r}_2, \mathbf{r}_3\right\rangle\right)\left\langle\mathbf{r}_4, \mathbf{r}_5\right\rangle , \\
b_0^{(5)}= &   \sum_{1 \leqslant i<j<k \leqslant 5}\left\langle\mathbf{r}_i \times \mathbf{r}_j, \mathbf{r}_k\right\rangle+\left\langle\mathbf{r}_2, \mathbf{r}_3\right\rangle\left\langle\mathbf{r}_1 \times \mathbf{r}_4, \mathbf{r}_5\right\rangle  \\
& -\left\langle\mathbf{r}_1, \mathbf{r}_3\right\rangle\left\langle\mathbf{r}_2 \times \mathbf{r}_4, \mathbf{r}_5\right\rangle+\left\langle\mathbf{r}_1, \mathbf{r}_2\right\rangle\left\langle\mathbf{r}_3 \times \mathbf{r}_4, \mathbf{r}_5\right\rangle  +\left\langle\mathbf{r}_4, \mathbf{r}_5\right\rangle\left\langle\mathbf{r}_1 \times \mathbf{r}_2, \mathbf{r}_3\right\rangle  .
\end{aligned}\right.
$$

\end{appendix}
 
\end{document}